\documentclass[a4paper,12pt]{article}

\usepackage{amsmath,amssymb}
\usepackage{amscd}
\usepackage{mathrsfs}
\usepackage{amsthm}
\usepackage{color}

\usepackage{verbatim}

\usepackage{mathtools}
\usepackage{enumerate}

\usepackage[backref=page, colorlinks=true, linkcolor=blue, citecolor=blue, urlcolor=blue]{hyperref}

\theoremstyle{plain}

\newtheorem{theorem}{\bf Theorem}[section]
\newtheorem{lemma}[theorem]{\bf Lemma}
\newtheorem{proposition}[theorem]{\bf Proposition}
\newtheorem{corollary}[theorem]{\bf Corollary}

\theoremstyle{definition}
\newtheorem{definition}[theorem]{\bf Definition}
\newtheorem{example}[theorem]{\bf Example}
\newtheorem{remark}[theorem]{\bf Remark}

\newtheorem{question}[theorem]{\bf {Question}}

  \makeatletter
  \newcommand{\subsubsubsection}{\@startsection{paragraph}{4}{\z@}%
    {1.0\Cvs \@plus.5\Cdp \@minus.2\Cdp}%
    {.1\Cvs \@plus.3\Cdp}%
    {\reset@font\sffamily\normalsize}
  }
  \makeatother
\makeatletter

\@addtoreset{equation}{section}
\makeatother

\title{Representation Theory of Canonical Commutation Relations Arising from the Quantization of the Klein-Gordon Equation with an External Potential}

\author{Yasumichi Matsuzawa\thanks{Department of Mathematics, Faculty of Education, Shinshu University, 6-Ro, Nishi-nagano, Nagano 380-8544, Japan, e-mail: myasu@shinshu-u.ac.jp} 
	 \and Itaru Sasaki\thanks{Department of Mathematics,  Faculty of Science, Shinshu University, Matsumoto 390-8621, Japan, e-mail: isasaki@shinshu-u.ac.jp}
     \and Akito Suzuki\thanks{Department of Production Systems Engineering and Sciences, Komatsu University, Awazu Campus, Nu 1-3 Shicho-machi, Komatsu, Ishikawa 923-8511, Japan. e-mail: akito.suzuki@komatsu-u.ac.jp}}
\date{\today}

\begin{document}
\maketitle

\begin{abstract}
	We investigate the Weyl representation of the canonical commutation relations for a model describing a quantized massive scalar field under the influence of an external potential.
	The main problem is to determine whether the Weyl representation remains equivalent to or becomes inequivalent to the original one when the mass and/or potential are changed.
	This problem is reduced to the study of Schr\"{o}dinger operators.
	It turns out that the Weyl representations are inequivalent when the masses differ.
	Moreover, when the masses are the same, the transition between equivalence and inequivalence occurs at the decay rate $-3/2$ of the difference between the potentials. 
	This contrasts with the decay rate $-1$ that defines the short-range condition in scattering theory.
\end{abstract}

\section{Introduction}
We consider the external field problem, which concerns the behavior of a quantized field interacting with a classical external field.
Typical examples include the quantized Klein--Gordon field and the quantized Dirac field interacting with a classical electromagnetic field.
For the mathematical study of the external field problem, see \cite{MR3390679,MR446207,MR1219537} and the references therein.

In this paper, we focus on a model of a quantized real scalar field $\phi$ with positive mass $m>0$, affected by an external potential $V:\mathbb{R}^3\to\mathbb{R}$ \cite{MR3390679,MR356771,MR2840105}. 
The field $\phi$ obeys the equation
\begin{equation}\label{KGeq}
\left(\frac{\partial^2}{\partial t^2}-\Delta +m^2+V(x)\right)\phi(t,x)=0,\qquad t\in\mathbb{R},\ x\in\mathbb{R}^3,
\end{equation}
where $\Delta$ is the Laplacian on $\mathbb{R}^3.$ 
Here, we regard $\phi(t,x)$ as an operator-valued distribution kernel.
The model consists of an irreducible Weyl representation $\rho_{m,V}$ of the canonical commutation relations on the Boson Fock space $\mathscr{F}_\mathrm{b}(L^2(\mathbb{R}^3))$ and the second quantization of the operator $\sqrt{-\Delta+m^2+V}.$
The former represents the time-zero fields of the model, while the latter is referred to as the Hamiltonian of the model.
The aim of this paper is to investigate whether the two Weyl representations $\rho_{m_1,V_1}$ and $\rho_{m_2,V_2}$ are (unitarily) equivalent when $(m_1,V_1)\not=(m_2,V_2).$
This problem is reduced to the study of Schr\"{o}dinger operators of the form $-\Delta+m^2+V$ as a consequence of the theory of Bogoliubov transformations.
Our main results can be summarized as follows:
\begin{itemize}
	\item if $m_1\not=m_2,$ then $\rho_{m_1,V_1}$ and $\rho_{m_2,V_2}$ are not equivalent [Theorem \ref{m_1not=m_2}].
\end{itemize}
In the case where $m:=m_1=m_2$, we obtain the following results.
\begin{itemize}
	\item If $V_1-V_2$ belongs to $\mathcal{R}+L^2(\mathbb{R}^3),$ where $\mathcal{R}$ denotes the Rollnik class, then $\rho_{m,V_1}$ and $\rho_{m,V_2}$ are equivalent [Theorem \ref{T_{m,V} eq if R+L^2}], 
	\item if $V_1-V_2$ decays at a rate slower than or equal to $|x|^{-3/2}$ as $|x|\to\infty$, then $\rho_{m,V_1}$ and $\rho_{m,V_2}$ are not equivalent [Theorem \ref{T_{m,V} ineq}],
	\item if $V_2$ is of the form $V_2=V_1+\gamma/|x|^{\sigma},$ where $\gamma\not=0$ and $0<\sigma<2$ are real constants, then  $\rho_{m,V_1}$ and $\rho_{m,V_2}$ are equivalent if and only if $\sigma>3/2$ [Theorem \ref{coulomb type thm}].
\end{itemize}

The first main result generalizes the well-known theorem that $\rho_{m_1,0}$ is not equivalent to $\rho_{m_2,0}$ whenever $m_1\not=m_2$ (see e.g., \cite[Theorem 1.1]{MR3513945}, \cite[Theorem 10.13]{MR4292535} or \cite[Theorem X.46]{MR493420}).
The second main result generalizes those obtained in \cite{MR356771,MR2840105}.
In \cite{MR356771}, the case where $V_1$ is non-negative, infinitely differentiable, and has compact support, with $V_2=0,$ was considered.
In \cite{MR2840105}, the case where $V_1$ is non-negative and belongs to the Sobolev space $H^2(\mathbb{R}^3),$ with $V_2=0,$ was studied.
The last two main results show that the transition between equivalence and inequivalence occurs at the decay rate $|x|^{-3/2}$ of $V_1-V_2$, which differs from the decay rate $|x|^{-1}$ that defines the short-range condition in scattering theory.

This paper is organized as follows.
In Section \ref{Preliminaries}, we review the general theory of Weyl representations of the canonical commutation relations and the fundamental concepts of abstract free Bose field models. 
In particular, we present the result (\cite[Theorem 5.1]{MR4292535} and \cite[Theorem 1.2]{M2026classification}) that characterizes the equivalence of two Weyl representations in terms of their one-particle Hamiltonians.
In Section \ref{Equivalence relations}, we introduce two equivalence relations on the set of injective, non-negative self-adjoint operators acting in an abstract Hilbert space.
These equivalence relations are an abstraction of the method used in \cite{MR4213757} and play a key role in proving the main results of this paper.
In Section \ref{Subspaces spread by self-adjoint operators}, we introduce a tool for constructing Weyl representations.
In Section \ref{Quantization of Klein-Gordon equation with an external potential}, we explicitly define the quantized scalar field model that obeys the field equation \eqref{KGeq}, including its Weyl representation $\rho_{m,V}$, and prove the main results mentioned above.
In the Appendix, we summarize some results from operator theory that are used in this paper.


\section{Preliminaries}\label{Preliminaries}

In this section, we review the general theory of Weyl representations of the canonical commutation relations and the fundamental concepts of abstract free Bose field models. 
Abstract free Bose field models provide a general framework that includes our model.
The primary references are \cite{MR4292535,MR4812858}.

\subsection{Notation}

The symbol $\mathbb{N}$ denotes the set of integers greater than or equal to one.
We say that a real number $x$ is positive if $x>0.$

Let $\mathscr{H}$ be an infinite-dimensional, separable complex Hilbert space.
Its inner product $\langle f,g\rangle$ is anti-linear in $f\in\mathscr{H}$ and linear in $g\in\mathscr{H}.$

Let $A$ be a linear operator acting in $\mathscr{H}.$
The domain of $A$ is denoted by $\mathrm{dom}(A).$
We say that $A$ is everywhere defined if $\mathrm{dom}(A)$ coincides with $\mathscr{H}.$
We say that $A$ has a bounded inverse if it is bijective and its inverse $A^{-1}$ is bounded.
If $A$ is densely defined, its adjoint is denoted by $A^*.$
If $A$ is closable, its closure is denoted by $\bar{A}.$
The identity operator on $\mathscr{H}$ is denoted by $1.$
For any $z\in\mathbb{C},$ the scalar operator $z1$ is simply denoted by $z.$

Let $A$ be a self-adjoint operator acting in $\mathscr{H}$.
The spectrum and essential spectrum of $A$ are denoted by $\sigma(A)$ and $\sigma_\mathrm{ess}(A),$ respectively.
The spectral resolution of $A$ is expressed as $A=\int_\mathbb{R}\lambda\,\mathrm{d}E_A(\lambda)$, where $E_A(\cdot)$ is the spectral measure of $A.$
We say that $A$ is non-negative if $\sigma(A)\subset[0,\infty)$.
The symbol $\mathcal{S\!A}_+(\mathscr{H})$ denotes the set of all injective, non-negative self-adjoint operators acting in $\mathscr{H}.$ 
If $A$ is non-negative, then $A$ has a bounded inverse if and only if the infimum of $\sigma(A)$ is positive.

For a bounded operator $A,$ its operator norm is denoted by $\|A\|.$
Let $\mathcal{B}(\mathscr{H})$ denote the set of all bounded operators that are everywhere defined on $\mathscr{H}$.
For each $p\in[1,\infty),$ the symbol $\mathcal{C}_p(\mathscr{H})$ denotes the set of all Schatten $p$-class operators on $\mathscr{H}$.
The set $\mathcal{C}_1(\mathscr{H})$ coincides with the set of trace class operators, while $\mathcal{C}_2(\mathscr{H})$ coincides with the set of Hilbert--Schmidt operators.
The Hilbert--Schmidt norm of $A\in\mathcal{C}_2(\mathscr{H})$ is denoted by $\|A\|_\mathrm{HS}.$

\subsection{Boson Fock spaces}

Let $\mathscr{H}$ be an infinite-dimensional, separable complex Hilbert space.
The \textit{Boson Fock space} over $\mathscr{H}$ is defined by
\[
\mathscr{F}_\mathrm{b}(\mathscr{H}):=\bigoplus_{n=0}^\infty\otimes_\mathrm{s}^n\mathscr{H},
\]
where $\otimes_\mathrm{s}^n\mathscr{H}$ is the $n$-fold symmetric tensor product Hilbert space of $\mathscr{H}$ with $\otimes_\mathrm{s}^0\mathscr{H}:=\mathbb{C}$ and $\otimes_\mathrm{s}^1\mathscr{H}:=\mathscr{H}.$
Each vector $\Psi\in\mathscr{F}_\mathrm{b}(\mathscr{H})$ can be expressed as $\Psi=\{\Psi^{(n)}\}_{n=0}^\infty,$ where $\Psi^{(n)}\in\otimes_\mathrm{s}^n\mathscr{H}$ for all $n\geq0.$ 

The \textit{creation operator} $A^\dagger(f)$ with test vector $f\in\mathscr{H}$ is given by
\begin{align*}
	\mathrm{dom}(A^\dagger(f))&:=\left\{\Psi=\{\Psi^{(n)}\}_{n=0}^\infty\in\mathscr{F}_\mathrm{b}(\mathscr{H})\,\left|\, \sum_{n=1}^\infty\|\sqrt{n}S_n(f\otimes\Psi^{(n-1)})\|_{\otimes_\mathrm{s}^n\mathscr{H}}^2<\infty\right.\right\},\\
	(A^\dagger(f)\Psi)^{(0)}&:=0,\\
	(A^\dagger(f)\Psi)^{(n)}&:=\sqrt{n}S_n(f\otimes\Psi^{(n-1)}),\qquad n\in\mathbb{N},\ \Psi\in\mathrm{dom}(A^\dagger(f)),
\end{align*}
where $S_n$ is the symmetrization operator from $\otimes^n\mathscr{H}$ onto $\otimes_\mathrm{s}^n\mathscr{H}.$
Its adjoint $A(f):=A^\dagger(f)^*$ is called the \textit{annihilation operator} with test vector $f.$
We have $A(f)^*=A^\dagger(f).$
For each $f\in\mathscr{H},$ the operator $A(f)+A(f)^*$ is essentially self-adjoint,
and thus we can define a self-adjoint operator $\Phi_\mathrm{S}(f)$ by
\[
\Phi_\mathrm{S}(f) :=\frac{1}{\sqrt{2}}\overline{A(f)+A(f)^*}.
\]
The operator $\Phi_\mathrm{S}(f)$ is called the \textit{Segal field operator} with test vector $f.$

Let $T$ be a non-negative self-adjoint operator acting in $\mathscr{H}.$
The \textit{second quantization operator} $\mathrm{d}\Gamma_\mathrm{b}(T)$ of $T$ is defined by
\[
\mathrm{d}\Gamma_\mathrm{b}(T):=\bigoplus_{n=0}^\infty T^{(n)},
\]
where $T^{(0)}:=0$ and
\[
T^{(n)}:=\sum_{j=1}^n\underbrace{1\otimes\cdots\otimes1}_{j-1}\otimes \overset{j\text{th}}{T}\otimes\underbrace{1\otimes\cdots\otimes1}_{n-j},\qquad n\geq1.
\]
It is known that $\mathrm{d}\Gamma_\mathrm{b}(T)$ is self-adjoint and non-negative.

\subsection{Weyl representations of the canonical commutation relations}\label{subsec weyl}

Let $C$ be a conjugation on $\mathscr{H}.$
That is, $C$ is an anti-linear, norm-preserving map on $\mathscr{H}$ satisfying $C^2 = 1$.
For a subspace $\mathscr{D}$ of $\mathscr{H},$ we define
\[
\mathscr{D}_C:=\{f\in\mathscr{D}\mid Cf=f\}.
\]
Then $\mathscr{D}_C$ becomes a real inner product space by restricting the inner product on $\mathscr{H}$ to $\mathscr{D}_C.$
In particular, $\mathscr{H}_C$ is a real Hilbert space.
Let $\mathscr{V}$ be a (not necessarily dense) real subspace of $\mathscr{H}_C.$
The symbol $\mathcal{S\!A}_{C,\mathscr{V}}(\mathscr{H})$ denotes the set of all injective self-adjoint operators $T$ acting in $\mathscr{H}$ that satisfy the following two conditions:
\begin{itemize}
	\item $CT\subset TC,$
	\item $\mathscr{V}\subset \mathrm{dom}(T)\cap \mathrm{dom}(T^{-1}),$ and $T\mathscr{V}$ as well as $T^{-1}\mathscr{V}$ are dense in $\mathscr{H}_C.$
\end{itemize}

For each $T\in\mathcal{S\!A}_{C,\mathscr{V}}(\mathscr{H})$ and $f\in\mathscr{V},$ we define self-adjoint operators $\phi_T(f)$ and $\pi_T(f)$ by
\[
\phi_T(f):=\Phi_{\mathrm{S}}(T^{-1}f),\qquad \pi_T(f):=\Phi_{\mathrm{S}}(\mathrm{i}Tf).
\]
Note that they act in $\mathscr{F}_{\mathrm{b}}(\mathscr{H}).$
Then, the pair 
\[
\rho_T:=\{\mathscr{F}_{\mathrm{b}}(\mathscr{H}),\{\phi_T(f),\pi_T(f)\mid f\in\mathscr{V}\}\}
\]
is a \textit{Weyl representation of the canonical commutation relations} over $\mathscr{V}.$
That is, it satisfies
	\begin{align*}
	\mathrm{e}^{\mathrm{i}\phi_{T}(sf+tg)}&=\mathrm{e}^{\mathrm{i}s\phi_{T}(f)}\mathrm{e}^{\mathrm{i}t\phi_{T}(g)},
	\qquad \mathrm{e}^{\mathrm{i}\pi_{T}(sf+tg)}=\mathrm{e}^{\mathrm{i}s\pi_{T}(f)}\mathrm{e}^{\mathrm{i}t\pi_{T}(g)},\\
	\mathrm{e}^{\mathrm{i}\phi_{T}(f)}\mathrm{e}^{\mathrm{i}\pi_{T}(g)}
	&=\mathrm{e}^{-\mathrm{i}\langle f,g\rangle}\mathrm{e}^{\mathrm{i}\pi_{T}(g)}\mathrm{e}^{\mathrm{i}\phi_{T}(f)},
	\qquad\forall s,t\in\mathbb{R},\ f,g\in\mathscr{V}.
\end{align*}
Moreover, $\rho_T$ is \textit{irreducible}, meaning that the only operators in $\mathcal{B}(\mathscr{F}_{\mathrm{b}}(\mathscr{H}))$ that commute with all elements in $\{\mathrm{e}^{\mathrm{i}\phi_T(f)},\mathrm{e}^{\mathrm{i}\pi_T(f)}\mid f\in\mathscr{V}\}$ are scalar operators.
The proofs can be found in \cite[Lemma 4.3]{MR3513945} or \cite[Theorem 5.47]{MR4812858}.
Note that their proofs assume the density of $\mathscr{V}$, although these facts do not require this assumption.

Let $T_1,T_2\in\mathcal{S\!A}_{C,\mathscr{V}}(\mathscr{H})$ be arbitrary.
We say that $\rho_{T_1}$ is \textit{equivalent} to $\rho_{T_2}$ if there exists a unitary operator $U$ on $\mathscr{F}_{\mathrm{b}}(\mathscr{H})$ such that
\begin{equation}\label{def of equiv of reprs}
	U\phi_{T_1}(f)U^*=\phi_{T_2}(f),\qquad U\pi_{T_1}(f)U^*=\pi_{T_2}(f),\qquad \forall f\in\mathscr{V}.
\end{equation}
A \textit{transfer pair} $(J_+,J_-)$ from $T_1$ to $T_2$ with respect to $(C,\mathscr{V})$ is a pair of bounded operators $J_+,J_-\in\mathcal{B}(\mathscr{H})$ satisfying
\begin{equation*}
	J_+T_1f=T_2f,\qquad J_-T_1^{-1}f=T_2^{-1}f,\qquad\forall f\in\mathscr{V}.
\end{equation*}
We note that if a transfer pair exists, it is unique since $T_1\mathscr{V}$ and $T_1^{-1}\mathscr{V}$ are dense in $\mathscr{H}_C.$

The following theorem, which is a consequence of the theory of Bogoliubov transformations, provides a necessary and sufficient condition for $\rho_{T_1}$ to be equivalent to $\rho_{T_2}$.

\begin{theorem}[{\cite[Theorem 5.1]{MR4292535}} and {\cite[Theorem 1.2]{M2026classification}}]\label{eq or ineq weyl repr}
	Let $T_1,T_2\in\mathcal{S\!A}_{C,\mathscr{V}}(\mathscr{H}).$
	Then $\rho_{T_1}$ is equivalent to $\rho_{T_2}$
	if and only if a transfer pair $(J_+,J_-)$ from $T_1$ to $T_2$ with respect to $(C,\mathscr{V})$ exists and the difference $J_+-J_-$ is Hilbert--Schmidt.
\end{theorem}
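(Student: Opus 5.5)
The plan is to reduce the statement to Shale's theorem on implementability of Bogoliubov transformations. Both representations live on the same Fock space $\mathscr{F}_{\mathrm{b}}(\mathscr{H})$ and are built from the single Segal field $\Phi_{\mathrm S}$. We pass to the complex-linear combinations
\[
a_T(f):=\tfrac{1}{\sqrt2}\bigl(\phi_T(f)+\mathrm{i}\pi_T(f)\bigr),
\]
which are formally $A(\tfrac12(T^{-1}f - Tf))+A(\tfrac12 (T^{-1}f+Tf))^*$, up to fixing conventions on the anti-linearity of $A(\cdot)$. With these in hand, the correspondence $\Phi_{\mathrm S}(T_1^{-1}f)\mapsto\Phi_{\mathrm S}(T_2^{-1}f)$, $\Phi_{\mathrm S}(\mathrm{i}T_1f)\mapsto\Phi_{\mathrm S}(\mathrm{i}T_2 f)$ is read as a real-linear map $S$ on $\mathscr{H}$, viewed as a real symplectic space with form $\mathrm{Im}\langle\cdot,\cdot\rangle$. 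The map is specified on the real span of $T_1^{-1}\mathscr{V}$ and $\mathrm{i}T_1\mathscr{V}$. Since $C$ commutes with $T_j$ and $\mathscr{V}\subset\mathscr{H}_C$, and $T_1^{\pm1}\mathscr{V}$ is dense in $\mathscr{H}_C$, this span is dense in $\mathscr{H}=\mathscr{H}_C\oplus\mathrm{i}\mathscr{H}_C$.

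For sufficiency, suppose the transfer pair $(J_+,J_-)$ exists. Then $S$ is bounded: on $\mathscr{H}_C$ it acts by $J_-$, and on $\mathrm{i}\mathscr{H}_C$ it acts by $\mathrm{i}J_+(-\mathrm{i})$. We should first check that $J_\pm$ commute with $C$, which follows by density since $C$ fixes both $T_1^{\pm1}\mathscr{V}$ and $T_2^{\pm1}\mathscr{V}$. Thus
\[
S=\tfrac12(J_++J_-)+\tfrac12(J_--J_+)C.
\]
This splits $S$ into a complex-linear part $\tfrac12(J_++J_-)$ and an anti-linear part $\tfrac12(J_--J_+)C$. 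Next, $S$ is symplectic: the CCR $\langle f,g\rangle=\mathrm{Im}\langle T_j^{-1}f,\mathrm{i}T_jg\rangle$ holds for both $j$ and extends by continuity. Applying the same reasoning with roles swapped requires $T_1\mathscr{V}$ and $T_1^{-1}\mathscr{V}$ to be dense, which is given; this makes $S$ invertible. Shale's theorem then says $S$ is unitarily implementable on Fock space iff its anti-linear part is Hilbert--Schmidt, i.e. iff $J_+-J_-\in\mathcal{C}_2(\mathscr{H})$. The implementing unitary $U$ satisfies $U\Phi_{\mathrm S}(h)U^*=\Phi_{\mathrm S}(Sh)$, and applying this to $h=T_1^{-1}f$ and $h=\mathrm{i}T_1f$ gives \eqref{def of equiv of reprs}.

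For necessity, suppose $U$ implements the equivalence. The main obstacle is to produce the bounded operators $J_\pm$ from $U$ alone. I would use the Fock vacuum $\Omega$ and its image $U^*\Omega$, and compare generating functionals. From
\[
\langle\Omega,\mathrm{e}^{\mathrm{i}\Phi_{\mathrm S}(h)}\Omega\rangle=\mathrm{e}^{-\|h\|^2/4}
\]
and the strong continuity of $h\mapsto \mathrm{e}^{\mathrm{i}\Phi_{\mathrm S}(h)}$, we get that $\|T_2^{-1}f_n\|\to0$ whenever $T_1^{-1}f_n\to0$ with $f_n\in\mathscr{V}$. Indeed, $\mathrm{e}^{\mathrm{i}t\phi_{T_2}(f_n)}=U\mathrm{e}^{\mathrm{i}t\phi_{T_1}(f_n)}U^*\to1$ strongly, so the vacuum expectation $\mathrm{e}^{-t^2\|T_2^{-1}f_n\|^2/4}\to1$. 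This gives closability of $T_1^{-1}f\mapsto T_2^{-1}f$, but boundedness needs more. I would argue by contradiction: if $\|T_2^{-1}f_n\|\to\infty$ while $\|T_1^{-1}f_n\|=1$, rescale to $g_n=f_n/\|T_2^{-1}f_n\|$. Then $\phi_{T_1}(g_n)\to0$ strongly, while $\phi_{T_2}(g_n)$ has vacuum expectation of its exponential equal to $\mathrm{e}^{-1/4}$. Applying the same reasoning with the state $U^*\Omega$ in place of $\Omega$ together with a weak-compactness argument gives a contradiction. The same argument applies to $\pi$, yielding $J_\pm$. The Hilbert--Schmidt condition then follows from the converse direction of Shale's theorem applied to the now bounded symplectic $S$.

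Since the paper cites \cite{MR4292535,M2026classification} for exactly this statement, the write-up can defer these analytic details to those references.
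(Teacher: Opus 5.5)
Your route is the Bogoliubov-transformation argument behind the cited results; the paper itself only cites them and gives no proof. You encode the equivalence as the real-linear map $S=\tfrac12(J_++J_-)+\tfrac12(J_--J_+)C$ on $(\mathscr{H},\mathrm{Im}\langle\cdot,\cdot\rangle)$ and apply Shale's criterion. Your decomposition of $S$ and your symplecticity check are correct.

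\textbf{Invertibility of $S$ in the sufficiency direction.} This step fails as written. ``Swapping roles'' would need the reverse maps $T_2^{\pm1}f\mapsto T_1^{\pm1}f$ to be bounded, which is not a hypothesis there; the density of $T_1^{\pm1}\mathscr{V}$ is beside the point. Invertibility is also not automatic. A non-surjective complex isometry $V$ is symplectic with zero anti-linear part. Yet $h\mapsto \mathrm{e}^{\mathrm{i}\Phi_{\mathrm{S}}(Vh)}$ commutes with $\mathrm{e}^{\mathrm{i}\Phi_{\mathrm{S}}(k)}$ for $k\perp\mathrm{ran}\,V$, so it is reducible and cannot be unitarily implemented.

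The repair is short:
\begin{enumerate}[(i)]
\item Polarizing $\langle T_2^{-1}f,T_2g\rangle=\langle f,g\rangle=\langle T_1^{-1}f,T_1g\rangle$ and using density gives $J_-^*J_+=1=J_+^*J_-$.
\item Hence $J_\pm$ are bounded below and have closed range.
\item The range of $J_+$ contains $T_2\mathscr{V}$ and that of $J_-$ contains $T_2^{-1}\mathscr{V}$. Their complex spans are dense because $T_2\in\mathcal{S\!A}_{C,\mathscr{V}}(\mathscr{H})$.
\item So $J_\pm$ are bijective, with $J_+^{-1}=J_-^*$ and $J_-^{-1}=J_+^*$, and $S$ is invertible.
\end{enumerate}
In the necessity direction the role swap is legitimate, since you can use $U^*$.

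\textbf{Boundedness in the necessity direction.} Here you undersell your own argument. Showing that $T_1^{-1}f_n\to0$ forces $T_2^{-1}f_n\to0$ is sequential continuity at $0$ of a linear map, which already is boundedness, not mere closability. Equivalently, your rescaled contradiction closes at once:
\[
\mathrm{e}^{-1/4}=\langle\Omega,\mathrm{e}^{\mathrm{i}\phi_{T_2}(g_n)}\Omega\rangle=\langle U^*\Omega,\mathrm{e}^{\mathrm{i}\phi_{T_1}(g_n)}U^*\Omega\rangle\to1 .
\]
No weak-compactness step is needed.

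Before invoking Shale there, you also need $U\mathrm{e}^{\mathrm{i}\Phi_{\mathrm{S}}(h)}U^*=\mathrm{e}^{\mathrm{i}\Phi_{\mathrm{S}}(Sh)}$ for all $h\in\mathscr{H}$. This extends from the real span of $T_1^{-1}\mathscr{V}$ and $\mathrm{i}T_1\mathscr{V}$ using the Weyl relations, symplecticity of $S$, and strong continuity.

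Finally, with the paper's conventions the correct formula is $a_T(f)=A(\tfrac12(T^{-1}f+Tf))+A(\tfrac12(T^{-1}f-Tf))^*$, not the other way round. This plays no role in your argument.
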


Thus, to determine whether $\rho_{T_1}$ is equivalent to $\rho_{T_2},$ it is sufficient to investigate the relationship between $T_1$ and $T_2.$

\subsection{Abstract free Bose field models}

Recall that the symbol $\mathcal{S\!A}_+(\mathscr{H})$ denotes the set of all injective, non-negative self-adjoint operators acting in $\mathscr{H}.$ 
For each $T\in\mathcal{S\!A}_+(\mathscr{H})$ with $T^{1/2}\in\mathcal{S\!A}_{C,\mathscr{V}}(\mathscr{H}),$ the triplet
\[
\mathbb{M}_T:=\big\{\mathscr{F}_{\mathrm{b}}(\mathscr{H}),\mathrm{d}\Gamma_\mathrm{b}(T),\{\phi_{T^{1/2}}(f),\pi_{T^{1/2}}(f)\mid f\in\mathscr{V}\}\big\}
\]
is called an \textit{abstract free Bose field model} with \textit{one-particle Hamiltonian} $T$ (see \cite[Example 10.2]{MR4292535} or \cite[Section 5.18 and p. 844]{MR4812858}).
In this setting, the self-adjoint operator $\mathrm{d}\Gamma_{\mathrm{b}}(T)$ is called the \textit{Hamiltonian} of the model, and the operator-valued functionals
\[
f\mapsto\phi_{T^{1/2}}(f),\qquad f\mapsto\pi_{T^{1/2}}(f)
\]
represent the \textit{time-zero fields} of the model.
Then the \textit{time-$t$ fields} are given by
\begin{align*}
	\phi_{T^{1/2}}(t,f)&:=\mathrm{e}^{\mathrm{i}t\mathrm{d}\Gamma_{\mathrm{b}}(T)}\phi_{T^{1/2}}(f)\mathrm{e}^{-\mathrm{i}t\mathrm{d}\Gamma_{\mathrm{b}}(T)}
	=\Phi_{\mathrm{S}}(\mathrm{e}^{\mathrm{i}tT}T^{-1/2}f),\\
	\pi_{T^{1/2}}(t,f)&:=\mathrm{e}^{\mathrm{i}t\mathrm{d}\Gamma_{\mathrm{b}}(T)}\pi_{T^{1/2}}(f)\mathrm{e}^{-\mathrm{i}t\mathrm{d}\Gamma_{\mathrm{b}}(T)}
	=\Phi_{\mathrm{S}}(\mathrm{i}\mathrm{e}^{\mathrm{i}tT}T^{1/2}f),\qquad t\in\mathbb{R},\ f\in\mathscr{V}.
\end{align*}
They satisfy the following differential equations:

\begin{proposition}\label{AFBFM eq}
	For any $\Psi\in\mathrm{dom}(\mathrm{d}\Gamma_{\mathrm{b}}(1)^{1/2})$ and $f\in\mathscr{V}\cap\mathrm{dom}(T^{2})$ with $T^2f\in\mathscr{V},$
	we have
	\[
	\frac{\mathrm{d}}{\mathrm{d}t}\phi_{T^{1/2}}(t,f)\Psi=\pi_{T^{1/2}}(t,f)\Psi,\qquad \frac{\mathrm{d}^2}{\mathrm{d}t^2}\phi_{T^{1/2}}(t,f)\Psi+\phi_{T^{1/2}}(t,T^2f)\Psi=0.
	\]
\end{proposition}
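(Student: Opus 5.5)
The plan is to reduce both identities to statements about vector-valued functions in $\mathscr{H}$ and then use the standard bound for Segal field operators,
\[
\|\Phi_\mathrm{S}(g)\Psi\|\le \sqrt{2}\,\|g\|\,\|(\mathrm{d}\Gamma_\mathrm{b}(1)+1)^{1/2}\Psi\|,\qquad \Psi\in\mathrm{dom}(\mathrm{d}\Gamma_\mathrm{b}(1)^{1/2}).
\]
This holds because $\|A(g)\Psi\|\le\|g\|\|\mathrm{d}\Gamma_\mathrm{b}(1)^{1/2}\Psi\|$ and $\|A(g)^*\Psi\|\le\|g\|\|(\mathrm{d}\Gamma_\mathrm{b}(1)+1)^{1/2}\Psi\|$. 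The same estimate also shows that $\mathrm{dom}(\mathrm{d}\Gamma_\mathrm{b}(1)^{1/2})\subset\mathrm{dom}(\Phi_\mathrm{S}(g))$ for every $g$. Since $g\mapsto\Phi_\mathrm{S}(g)\Psi$ is real-linear on this domain, $\Phi_\mathrm{S}(\cdot)\Psi$ is a bounded real-linear map from $\mathscr{H}$ to $\mathscr{F}_\mathrm{b}(\mathscr{H})$. It therefore commutes with differentiation in $t$: if $t\mapsto g(t)$ is differentiable in $\mathscr{H}$, then $t\mapsto\Phi_\mathrm{S}(g(t))\Psi$ is differentiable and its derivative is $\Phi_\mathrm{S}(g'(t))\Psi$.

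First, set $g(t):=\mathrm{e}^{\mathrm{i}tT}T^{-1/2}f$, so that $\phi_{T^{1/2}}(t,f)\Psi=\Phi_\mathrm{S}(g(t))\Psi$ by the formula given before the proposition. Since $f\in\mathscr{V}\subset\mathrm{dom}(T^{1/2})\cap\mathrm{dom}(T^{-1/2})$, the vector $h:=T^{-1/2}f$ lies in $\mathrm{dom}(T)$, and $Th=T^{1/2}f$. Stone's theorem then gives
\[
g'(t)=\mathrm{i}T\mathrm{e}^{\mathrm{i}tT}h=\mathrm{i}\mathrm{e}^{\mathrm{i}tT}T^{1/2}f.
\]
Applying $\Phi_\mathrm{S}(\cdot)\Psi$ yields $\pi_{T^{1/2}}(t,f)\Psi$, which is the first identity.

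Second, differentiate once more. The vector $T^{1/2}f$ lies in $\mathrm{dom}(T)$ because $f\in\mathrm{dom}(T^2)$ implies $f\in\mathrm{dom}(T^{3/2})$. Stone's theorem therefore gives
\[
\frac{\mathrm{d}}{\mathrm{d}t}\,\mathrm{i}\mathrm{e}^{\mathrm{i}tT}T^{1/2}f=-\mathrm{e}^{\mathrm{i}tT}T^{3/2}f.
\]
It remains to identify this with $-\mathrm{e}^{\mathrm{i}tT}T^{-1/2}(T^2f)$, using the hypothesis $T^2f\in\mathscr{V}\subset\mathrm{dom}(T^{-1/2})$. By functional calculus, $T^{-1/2}T^2f=T^{3/2}f$ for $f\in\mathrm{dom}(T^2)$, because $T^{-1/2}T^{2}\subset T^{3/2}$ (injectivity of $T$ is used here). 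Hence
\[
\frac{\mathrm{d}^2}{\mathrm{d}t^2}\phi_{T^{1/2}}(t,f)\Psi=-\Phi_\mathrm{S}(\mathrm{e}^{\mathrm{i}tT}T^{-1/2}T^2f)\Psi=-\phi_{T^{1/2}}(t,T^2f)\Psi,
\]
where the last expression is well defined because $T^2f\in\mathscr{V}$.

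The main point needing care is the justification that $\Phi_\mathrm{S}$ commutes with the $t$-derivative, that is, the continuity estimate above together with real-linearity on the common domain. The domain bookkeeping for the functional-calculus identities is routine.
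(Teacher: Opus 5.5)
Your proof is correct. The paper gives no argument of its own here; it only cites \cite[Corollary 6.8 (ii)]{MR4292535} or \cite[Theorem 5.33]{MR4812858}. So your proof is a self-contained replacement for that citation, built on what is presumably the standard mechanism behind the cited results.

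The key step is the relative bound $\|\Phi_\mathrm{S}(g)\Psi\|\le\sqrt{2}\,\|g\|\,\|(\mathrm{d}\Gamma_\mathrm{b}(1)+1)^{1/2}\Psi\|$. It makes $g\mapsto\Phi_\mathrm{S}(g)\Psi$ a bounded real-linear map on $\mathscr{H}$, which is all you need since only real difference quotients occur. Strong $t$-derivatives therefore pass through $\Phi_\mathrm{S}$. The rest is Stone's theorem applied to $T^{-1/2}f\in\mathrm{dom}(T)$ and $T^{1/2}f\in\mathrm{dom}(T)$, together with routine functional calculus.

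The citation buys brevity; your version buys transparency about domains. For instance, it shows that $T^{-1/2}T^2f=T^{3/2}f$ holds automatically for $f\in\mathrm{dom}(T^2)$. The hypothesis $T^2f\in\mathscr{V}$ is therefore used only to ensure that $\phi_{T^{1/2}}(t,T^2f)$ is one of the fields of the representation.
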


\begin{proof}
	This follows from \cite[Corollary 6.8 (ii)]{MR4292535} or \cite[Theorem 5.33]{MR4812858}.
\end{proof}

Thus, the abstract free Bose field model is an abstraction of a quantum field model obeying the Klein-Gordon equation \eqref{KGeq}.




\section{Equivalence relations}\label{Equivalence relations}

Let $\mathscr{H}$ be an infinite-dimensional, separable complex Hilbert space.
In this section, we introduce two equivalence relations on $\mathcal{S\!A}_+(\mathscr{H}),$ and prove that they are closed under taking powers of operators.


\subsection{Weak equivalence relation}

Let $S,T\in\mathcal{S\!A}_+(\mathscr{H})$ be arbitrary.
We write $S\preceq T$ if both $\mathrm{dom}(T^{1/2})\subset\mathrm{dom}(S^{1/2})$ and 
\[
\|S^{1/2}\psi\|\leq\|T^{1/2}\psi\|,\qquad\forall \psi\in\mathrm{dom}(T^{1/2})
\]
hold.

\begin{definition}
	Let $S,T\in\mathcal{S\!A}_+(\mathscr{H})$ be arbitrary.
	 We say that $S$ and $T$ are \textit{weakly equivalent} if there exist positive real numbers $c_1,c_2>0$ such that
		\begin{equation}\label{c1c2 def}
		c_1 T\preceq S\preceq c_2 T.
		\end{equation}
		In this case, we write $S\overset{\mathrm{w}}{\sim} T.$
\end{definition}

\begin{proposition}
	The above binary relation $\,\overset{\mathrm{w}}{\sim}$ is an equivalence relation on $\mathcal{S\!A}_+(\mathscr{H}).$
\end{proposition}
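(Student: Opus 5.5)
The plan is to check reflexivity, symmetry and transitivity directly from the definition of $\preceq$. The only preparatory facts are that $\preceq$ is itself reflexive and transitive on $\mathcal{S\!A}_+(\mathscr{H})$, and that it behaves well under positive scalar multiplication. For the scaling, note that for $c>0$ and $T\in\mathcal{S\!A}_+(\mathscr{H})$ we have $cT\in\mathcal{S\!A}_+(\mathscr{H})$, with
\[
(cT)^{1/2}=c^{1/2}T^{1/2},\qquad \mathrm{dom}((cT)^{1/2})=\mathrm{dom}(T^{1/2}),
\]
by functional calculus. Consequently, $S\preceq T$ implies $aS\preceq aT$ for every $a>0$: the domain inclusion is unchanged, and the norm inequality is multiplied by $a^{1/2}$.

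\emph{Reflexivity.} Take $c_1=c_2=1$. Then $T\preceq T$ trivially, so $T\overset{\mathrm{w}}{\sim}T$.

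\emph{Transitivity of $\preceq$.} Suppose $R\preceq S$ and $S\preceq T$. Then $\mathrm{dom}(T^{1/2})\subset\mathrm{dom}(S^{1/2})\subset\mathrm{dom}(R^{1/2})$. For $\psi\in\mathrm{dom}(T^{1/2})$ we get $\|R^{1/2}\psi\|\le\|S^{1/2}\psi\|\le\|T^{1/2}\psi\|$. Hence $R\preceq T$.

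\emph{Transitivity of $\overset{\mathrm{w}}{\sim}$.} Suppose $c_1T\preceq S\preceq c_2T$ and $d_1U\preceq T\preceq d_2U$. Scaling the second chain by $c_1$ and by $c_2$ gives
\[
c_1d_1U\preceq c_1T \qquad\text{and}\qquad c_2T\preceq c_2d_2U.
\]
Chaining these with transitivity of $\preceq$ yields $c_1d_1U\preceq S\preceq c_2d_2U$, that is, $S\overset{\mathrm{w}}{\sim}U$.

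\emph{Symmetry.} This is the only step needing care, and even it is routine. Suppose $c_1T\preceq S\preceq c_2T$. Scaling $S\preceq c_2T$ by $c_2^{-1}$ gives $c_2^{-1}S\preceq T$. Scaling $c_1T\preceq S$ by $c_1^{-1}$ gives $T\preceq c_1^{-1}S$. Together these say $c_2^{-1}S\preceq T\preceq c_1^{-1}S$, so $T\overset{\mathrm{w}}{\sim}S$ with constants $c_2^{-1},c_1^{-1}$.

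Taken together, these show that $\overset{\mathrm{w}}{\sim}$ is an equivalence relation. Nothing beyond the functional-calculus identity for $(cT)^{1/2}$ and its domain is required, so I expect no real obstacle.
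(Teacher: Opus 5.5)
Your proposal is correct and follows the paper's proof: you check reflexivity with $T\preceq T\preceq T$, symmetry with the constants $c_2^{-1},c_1^{-1}$, and transitivity by multiplying the constants. You also spell out the scaling of $\preceq$ by positive constants and the transitivity of $\preceq$, which the paper uses without comment.
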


\begin{proof}
	Reflexivity: Take $T\in\mathcal{S\!A}_+(\mathscr{H}).$
	Then we have $T\preceq T\preceq T,$ and thus $T\overset{\mathrm{w}}{\sim} T.$
	
	Symmetry: Take $S,T\in\mathcal{S\!A}_+(\mathscr{H}),$ and suppose that $S\overset{\mathrm{w}}{\sim} T.$
	Since the inequality \eqref{c1c2 def} holds true for some $c_1,c_2>0,$ we have
	\[
	c_2^{-1}S\preceq T\preceq c_1^{-1}S.
	\]
	Hence $T\overset{\mathrm{w}}{\sim} S.$
	
	Transitivity: Take $R,S,T\in\mathcal{S\!A}_+(\mathscr{H}),$ and suppose that $R\overset{\mathrm{w}}{\sim} S$ and $S\overset{\mathrm{w}}{\sim} T.$
	Then we have
	\[
	b_1 S\preceq R\preceq b_2 S\quad\textrm{and}\quad c_1 T\preceq S\preceq c_2 T
	\]
	for some $b_1,b_2,c_1,c_2>0.$
	Thus it holds that
	\[
	b_1c_1 T\preceq R\preceq b_2c_2 T,
	\]
	which implies that $R\overset{\mathrm{w}}{\sim} T.$
	This finishes the proof.
\end{proof}

\begin{proposition}\label{weq iff}
	Let $S,T\in\mathcal{S\!A}_+(\mathscr{H})$ be arbitrary.
	Then $S\overset{\mathrm{w}}{\sim} T$ if and only if the following three conditions hold:
	\begin{enumerate}[(i)]
		\item $\mathrm{dom}(S^{1/2})=\mathrm{dom}(T^{1/2}),$
		\item $\mathrm{dom}(S^{-1/2})=\mathrm{dom}(T^{-1/2}),$
		\item $S^{1/2}T^{-1/2}$ and $S^{-1/2}T^{1/2}$ are bounded.
	\end{enumerate}
In this case, the following assertions hold true:
\begin{enumerate}[(1)]
	\item $T^{1/2}S^{-1/2}$ and $T^{-1/2}S^{1/2}$ are bounded,
	\item $(S^{1/2}T^{-1/2})^*=\overline{T^{-1/2}S^{1/2}}$ and $(S^{-1/2}T^{1/2})^*=\overline{T^{1/2}S^{-1/2}}$ hold,
	\item $\overline{S^{1/2}T^{-1/2}}$ has a bounded inverse with $\left(\overline{S^{1/2}T^{-1/2}}\right)^{-1}=\overline{T^{1/2}S^{-1/2}},$
	\item $\overline{S^{-1/2}T^{1/2}}$ has a bounded inverse with $\left(\overline{S^{-1/2}T^{1/2}}\right)^{-1}=\overline{T^{-1/2}S^{1/2}}.$
\end{enumerate}
\end{proposition}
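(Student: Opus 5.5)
The plan is to translate the form inequalities $c_1T\preceq S\preceq c_2T$ into operator statements about $S^{\pm1/2}$ and $T^{\pm1/2}$. The one nontrivial tool is the order-reversing property of inversion for non-negative injective self-adjoint operators. Concretely, $S\preceq T$ implies $T^{-1}\preceq S^{-1}$; this is the Heinz/L\"owner-type fact. Since it is not stated earlier in the paper, I will prove it directly.

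For $\varphi\in\mathrm{dom}(S^{-1/2})$ and $\psi\in\mathrm{dom}(T^{1/2})$ we have $|\langle S^{-1/2}\varphi,S^{1/2}\psi\rangle|=|\langle\varphi,\psi\rangle|$. The left side is at most $\|S^{-1/2}\varphi\|\,\|T^{1/2}\psi\|$. Since $T^{1/2}\mathrm{dom}(T^{1/2})=\mathrm{ran}(T^{1/2})$ is dense (by injectivity and self-adjointness), the functional $\chi=T^{1/2}\psi\mapsto\langle\varphi,T^{-1/2}\chi\rangle$ is bounded by $\|S^{-1/2}\varphi\|\,\|\chi\|$. Hence $\varphi\in\mathrm{dom}((T^{-1/2})^*)=\mathrm{dom}(T^{-1/2})$ and $\|T^{-1/2}\varphi\|\le\|S^{-1/2}\varphi\|$. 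Applying this to $c_1T\preceq S\preceq c_2T$ gives $c_2^{-1}T^{-1}\preceq S^{-1}\preceq c_1^{-1}T^{-1}$.

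For the forward direction, (i) follows immediately from the definition of $\preceq$ in both directions, and (ii) follows from the inverted inequalities. For (iii), take $\chi\in\mathrm{dom}(T^{-1/2})$ with $T^{-1/2}\chi\in\mathrm{dom}(S^{1/2})=\mathrm{dom}(T^{1/2})$. Then $\|S^{1/2}T^{-1/2}\chi\|\le\sqrt{c_2}\|\chi\|$, and by (i) the domain of $S^{1/2}T^{-1/2}$ is all of $\mathrm{dom}(T^{-1/2})$. Similarly, $\|S^{-1/2}T^{1/2}\chi\|\le c_1^{-1/2}\|\chi\|$ using (ii). For the converse, assume (i)--(iii). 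For $\psi\in\mathrm{dom}(T^{1/2})=\mathrm{dom}(S^{1/2})$, write $\psi=T^{-1/2}(T^{1/2}\psi)$. Then $\|S^{1/2}\psi\|\le\|S^{1/2}T^{-1/2}\|\,\|T^{1/2}\psi\|$, which gives $S\preceq c_2T$. Symmetrically, $\|T^{1/2}\psi\|=\|S^{-1/2}T^{1/2}\cdot\ \|$ applied through $S^{1/2}\psi$, i.e.\ $T^{1/2}\psi=S^{1/2}S^{-1/2}T^{1/2}\psi$, requires $T^{1/2}\psi\in\mathrm{dom}(S^{-1/2})$. Instead I will use $\|T^{1/2}\psi\|$ versus $S^{1/2}\psi\in\mathrm{ran}(S^{1/2})=\mathrm{dom}(S^{-1/2})$ together with (ii), giving $S^{1/2}\psi\in\mathrm{dom}(T^{-1/2})$. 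Then the boundedness of $T^{1/2}S^{-1/2}$, which I derive as the adjoint relation below, yields $c_1T\preceq S$. If this route tangles, I will instead get $c_1 T\preceq S$ from the inverted form of the bound on $S^{-1/2}T^{1/2}$ (namely $S^{-1}\preceq c T^{-1}$) and then apply the inversion lemma once more.

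For assertions (1)--(4), note that weak equivalence is symmetric by the preceding proposition. Applying (iii) with $S$ and $T$ swapped therefore gives (1). For (2), the inclusion $T^{-1/2}S^{1/2}\subset(S^{1/2}T^{-1/2})^*$ follows from the definitions via $\langle T^{-1/2}S^{1/2}\varphi,\chi\rangle=\langle\varphi,S^{1/2}T^{-1/2}\chi\rangle$ on the natural domains. Both sides are bounded and densely defined ($\mathrm{dom}(S^{1/2})$ is dense), so taking closures gives equality. For (3), on the dense set $\mathrm{ran}(S^{1/2})\cap\ldots$ the products $S^{1/2}T^{-1/2}\,T^{1/2}S^{-1/2}$ and $T^{1/2}S^{-1/2}\,S^{1/2}T^{-1/2}$ reduce to the identity. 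Continuity then extends these identities to the closures, so each closure is the bounded two-sided inverse of the other; (4) follows in the same way. The main obstacle is the inversion lemma and keeping track of domains when composing unbounded operators. For the latter I will consistently use that $\mathrm{ran}(A^{1/2})=\mathrm{dom}(A^{-1/2})$ for $A\in\mathcal{S\!A}_+(\mathscr{H})$.
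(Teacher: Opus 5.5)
Your proposal is correct and follows the paper's proof. The paper likewise inverts $c_1T\preceq S\preceq c_2T$ via the order-reversing lemma, which it cites from the appendix as Lemma \ref{form op ine inverse} rather than proving it directly as you do; it gets the lower bound in the converse from the inclusion $T^{1/2}S^{-1/2}\subset(S^{-1/2}T^{1/2})^*$; and it obtains (1)--(4) by symmetry, an adjoint computation, and checking that both products are the identity on the dense set $\mathrm{dom}(S^{-1/2})$. In the converse, use that adjoint inclusion directly, since it needs only (ii) and (iii), rather than the symmetry argument you reserve for (1), which would be circular there. Your fallback route (derive $S^{-1}\preceq cT^{-1}$ from (ii)--(iii) and invert once more) is also valid.
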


\begin{proof}
	We first prove the ``only if'' part.
	Suppose that $S\overset{\mathrm{w}}{\sim} T.$
	Then, by definition, there exist positive real numbers $c_1,c_2>0$ such that the inequality \eqref{c1c2 def} holds.
	This, together with Lemma \ref{form op ine inverse}, implies that 
	\begin{equation}\label{weq iff op ine inverse}
		c_2^{-1}T^{-1}\preceq S^{-1}\preceq c_1^{-1}T^{-1}.
	\end{equation}
The inequality \eqref{c1c2 def} tells us that $\mathrm{dom}(S^{1/2})$ coincides with $\mathrm{dom}(T^{1/2})$ and
	\[
	\|S^{1/2}\psi\|\leq c_2^{1/2}\|T^{1/2}\psi\|,\qquad\forall \psi\in\mathrm{dom}(T^{1/2}).
	\]
	Replacing $\psi$ with $T^{-1/2}\psi,$ we obtain
	\[
	\|S^{1/2}T^{-1/2}\psi\|\leq c_2^{1/2}\|\psi\|,\qquad\forall\psi\in\mathrm{dom}(T^{-1/2}),
	\]
	and thus $S^{1/2}T^{-1/2}$ is bounded.
	Similarly, it follows from the inequality \eqref{weq iff op ine inverse} that $S^{-1/2}T^{1/2}$ is bounded.
	
	We next prove the ``if'' part.
	Suppose that (i), (ii) and (iii) hold true.
	Then we have
	\[
	\|S^{1/2}\psi\|\leq\|S^{1/2}T^{-1/2}\|\|T^{1/2}\psi\|,\qquad\forall\psi\in\mathrm{dom}(T^{1/2}),
	\]
	whence $S\preceq\|S^{1/2}T^{-1/2}\|^2T$ holds.
	Since $(S^{-1/2}T^{1/2})^*\supset T^{1/2}S^{-1/2}$,
	the operator $T^{1/2}S^{-1/2}$ is bounded.
	By the same argument as above, we get $\|S^{-1/2}T^{1/2}\|^{-2}T\preceq S.$
	Hence $S\overset{\mathrm{w}}{\sim} T.$
	
	Finally, we prove the rest of the proposition.
	Suppose that $S\overset{\mathrm{w}}{\sim} T.$
	Then $T\overset{\mathrm{w}}{\sim} S,$ and thus (1) follows.
	To prove (2), let $\psi\in\mathrm{dom}(T^{-1/2}),\phi\in\mathrm{dom}(S^{1/2})$ be arbitrary.
	Then
	\[
	\langle\psi,(S^{1/2}T^{-1/2})^*\phi\rangle
	=\langle S^{1/2}T^{-1/2}\psi,\phi\rangle
	=\langle \psi,T^{-1/2}S^{1/2}\phi\rangle,
	\]
	and thus we obtain $(S^{1/2}T^{-1/2})^*=\overline{T^{-1/2}S^{1/2}}.$
	Similarly, we get $(S^{-1/2}T^{1/2})^*=\overline{T^{1/2}S^{-1/2}}$.
	Hence (2) follows.
	Since
	\[
\overline{S^{1/2}T^{-1/2}}\cdot \overline{T^{1/2}S^{-1/2}}\psi=\psi
=\overline{T^{1/2}S^{-1/2}}\cdot \overline{S^{1/2}T^{-1/2}}\psi,\qquad\forall\psi\in\mathrm{dom}(S^{-1/2}),
		\] 
	we obtain (3). 
	By the same argument as above, we get (4).
	This completes the proof.
\end{proof}

\begin{proposition}\label{weak eq rel p}
	Let $S,T\in\mathcal{S\!A}_+(\mathscr{H})$ satisfy $S\overset{\mathrm{w}}{\sim} T,$ and let $p\in(0,1].$
	Then $S^p\overset{\mathrm{w}}{\sim} T^p$ and $S^{-p}\overset{\mathrm{w}}{\sim} T^{-p}$ hold.
\end{proposition}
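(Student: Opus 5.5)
The approach is the Löwner--Heinz inequality in its form-sense version. The relation $S\preceq T$ is exactly the form inequality $S\le T$ for non-negative self-adjoint operators: $\mathrm{dom}(T^{1/2})\subset\mathrm{dom}(S^{1/2})$ and $\|S^{1/2}\psi\|\le\|T^{1/2}\psi\|$. The Löwner--Heinz theorem says that $t\mapsto t^p$ is operator monotone for $p\in(0,1]$. Hence $S\preceq T$ implies $S^p\preceq T^p$, and I will assume this is available among the operator-theoretic results collected in the Appendix. We also have the scaling identity $(cT)^p=c^pT^p$ for $c>0$, which is immediate from the functional calculus. Applying monotonicity to $c_1T\preceq S\preceq c_2T$ then gives $c_1^pT^p\preceq S^p\preceq c_2^pT^p$. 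Since $S^p,T^p\in\mathcal{S\!A}_+(\mathscr{H})$ (they remain injective and non-negative), this shows $S^p\overset{\mathrm{w}}{\sim}T^p$.

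For the negative powers, I first pass to inverses. Lemma \ref{form op ine inverse} has already been used in the proof of Proposition \ref{weq iff} to turn \eqref{c1c2 def} into \eqref{weq iff op ine inverse}, that is, $c_2^{-1}T^{-1}\preceq S^{-1}\preceq c_1^{-1}T^{-1}$. This says $S^{-1}\overset{\mathrm{w}}{\sim}T^{-1}$, with $S^{-1},T^{-1}\in\mathcal{S\!A}_+(\mathscr{H})$. Applying the first part to this pair gives $(S^{-1})^p\overset{\mathrm{w}}{\sim}(T^{-1})^p$. Finally, $(S^{-1})^p=S^{-p}$ by the functional calculus, since $S$ is injective and $\lambda\mapsto\lambda^{-p}$ is defined spectral-almost everywhere. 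This gives the second claim.

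The main obstacle is justifying Löwner--Heinz for \emph{unbounded} operators in the form sense, with the domain inclusion built into $\preceq$. If the Appendix does not supply it directly, I would reduce to the bounded case via resolvents. Lemma \ref{form op ine inverse} gives the equivalence $S\preceq T \iff (T+\epsilon)^{-1}\le(S+\epsilon)^{-1}$ as bounded operators for every $\epsilon>0$.

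I would then use the integral representation
\[
\lambda^p=\frac{\sin p\pi}{\pi}\int_0^\infty t^{p-1}\frac{\lambda}{\lambda+t}\,\mathrm{d}t
=\frac{\sin p\pi}{\pi}\int_0^\infty t^{p-1}\Big(1-t(\lambda+t)^{-1}\Big)\mathrm{d}t,\qquad 0<p<1,
\]
which is valid in the quadratic-form sense on $\mathrm{dom}(T^{1/2})$. The integrand is monotone in the operator: by the resolvent inequality, $1-t(S+t)^{-1}\le 1-t(T+t)^{-1}$ for each $t>0$. Integrating gives $\|S^{p/2}\psi\|^2\le\|T^{p/2}\psi\|^2$ for $\psi$ in a core, for instance $\mathrm{dom}(T)$. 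The domain inclusion $\mathrm{dom}(T^{p/2})\subset\mathrm{dom}(S^{p/2})$ then follows by closedness of $S^{p/2}$ together with monotone convergence of the spectral integrals. The case $p=1$ is trivial.
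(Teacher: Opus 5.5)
Your proposal is correct and matches the paper's proof, which simply cites Lemma \ref{Heinz ine} (exactly the form-sense L\"owner--Heinz inequality you assumed, including the domain inclusion) and Lemma \ref{form op ine inverse} for the negative powers. Your fallback sketch of the unbounded Heinz inequality via the resolvent integral representation is also sound, but it is not needed, since the Appendix already supplies that lemma.
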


\begin{proof}
	This follows from Lemma \ref{Heinz ine} and Lemma \ref{form op ine inverse}.
\end{proof}

The following proposition gives a necessary and sufficient condition for two self-adjoint operators with bounded inverses to be weakly equivalent.
 
\begin{proposition}\label{we suf cond for massive ops}
	Let $S,T\in\mathcal{S\!A}_+(\mathscr{H})$ be arbitrary.
	Suppose that $S$ and $T$ have bounded inverses.
	Then $S\overset{\mathrm{w}}{\sim} T$ if and only if $\mathrm{dom}(S^{1/2})=\mathrm{dom}(T^{1/2}).$
\end{proposition}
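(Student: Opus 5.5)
The ``only if'' direction is immediate from Proposition \ref{weq iff} (i), so the work is in the ``if'' direction. Assume $\mathrm{dom}(S^{1/2})=\mathrm{dom}(T^{1/2})$. I will verify conditions (i)--(iii) of Proposition \ref{weq iff}. Condition (i) is the hypothesis. Condition (ii) is automatic: since $S$ and $T$ have bounded inverses, $S^{-1/2}$ and $T^{-1/2}$ are bounded and everywhere defined, so $\mathrm{dom}(S^{-1/2})=\mathscr{H}=\mathrm{dom}(T^{-1/2})$. It remains to show that $S^{1/2}T^{-1/2}$ and $S^{-1/2}T^{1/2}$ are bounded.

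For $S^{1/2}T^{-1/2}$, I will use the closed graph theorem. Since $T^{-1/2}$ is bounded with range $\mathrm{dom}(T^{1/2})=\mathrm{dom}(S^{1/2})$, the operator $S^{1/2}T^{-1/2}$ is everywhere defined. It is closed, because it is the product of a closed operator $S^{1/2}$ with a bounded everywhere-defined operator $T^{-1/2}$. Indeed, if $\psi_n\to\psi$ and $S^{1/2}T^{-1/2}\psi_n\to\phi$, then $T^{-1/2}\psi_n\to T^{-1/2}\psi$, and the closedness of $S^{1/2}$ gives $S^{1/2}T^{-1/2}\psi=\phi$. The closed graph theorem then shows that $S^{1/2}T^{-1/2}$ is bounded. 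Swapping the roles of $S$ and $T$, the same argument shows that $T^{1/2}S^{-1/2}$ is bounded and everywhere defined.

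For $S^{-1/2}T^{1/2}$, defined on $\mathrm{dom}(T^{1/2})$, I will argue by adjoints. For $\psi\in\mathrm{dom}(T^{1/2})$ and $\phi\in\mathscr{H}$,
\[
\langle S^{-1/2}T^{1/2}\psi,\phi\rangle=\langle T^{1/2}\psi,S^{-1/2}\phi\rangle=\langle \psi,T^{1/2}S^{-1/2}\phi\rangle,
\]
where the second equality uses $S^{-1/2}\phi\in\mathrm{dom}(S^{1/2})=\mathrm{dom}(T^{1/2})$ and the self-adjointness of $T^{1/2}$. Hence $|\langle S^{-1/2}T^{1/2}\psi,\phi\rangle|\le \|T^{1/2}S^{-1/2}\|\|\psi\|\|\phi\|$, so $\|S^{-1/2}T^{1/2}\psi\|\le\|T^{1/2}S^{-1/2}\|\|\psi\|$ and $S^{-1/2}T^{1/2}$ is bounded. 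Proposition \ref{weq iff} then yields $S\overset{\mathrm{w}}{\sim} T$.

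I expect no serious obstacle. The only point needing care is the closedness of the composition $S^{1/2}T^{-1/2}$ that justifies the closed graph theorem. The bounded inverses are used both for this step and to make condition (ii) automatic.
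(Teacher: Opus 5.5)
Your proposal is correct and follows the paper's proof. Both apply the closed graph theorem to get boundedness of $S^{1/2}T^{-1/2}$ and $T^{1/2}S^{-1/2}$, pass to the adjoint to bound $S^{-1/2}T^{1/2}$, and conclude via Proposition \ref{weq iff}; you additionally spell out the closedness of the composition and why condition (ii) is automatic, which the paper leaves implicit.
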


\begin{proof}
	The ``only if'' part follows from Proposition \ref{weq iff}.
	To prove the ``if'' part, suppose that $\mathrm{dom}(S^{1/2})$ coincides with $\mathrm{dom}(T^{1/2}).$
	By assumption, $S^{-1/2}$ and $T^{-1/2}$ belong to $\mathcal{B}(\mathscr{H}).$
	Since $S^{1/2}T^{-1/2}$ is a closed operator defined on the whole Hilbert space $\mathscr{H},$ the closed graph theorem implies that $S^{1/2}T^{-1/2}$ is in $\mathcal{B}(\mathscr{H}).$
	Similarly, $T^{1/2}S^{-1/2}$ belongs to $\mathcal{B}(\mathscr{H}).$ 
	Taking its adjoint, we see that $S^{-1/2}T^{1/2}$ is bounded.
	By Proposition \ref{weq iff}, we conclude that $S\overset{\mathrm{w}}{\sim} T.$
\end{proof}

\subsection{Equivalence relation}

\begin{definition}\label{def of equiv rel on sa}
	Let $S,T\in\mathcal{S\!A}_+(\mathscr{H})$ be arbitrary.
	We say that $S$ and $T$ are \textit{equivalent} if $S\overset{\mathrm{w}}{\sim} T$ and the operator
	\[
	K_{S,T}:=\overline{T^{-1/2}S^{1/2}}\cdot\overline{S^{1/2}T^{-1/2}}-1
	\] 
	is Hilbert--Schmidt.
	In this case, we write $S\sim T.$ 
\end{definition}

\begin{remark}
	Let $S,T\in\mathcal{S\!A}_+(\mathscr{H})$ satisfy $S\overset{\mathrm{w}}{\sim} T.$
	Since $\overline{S^{-1/2}T^{1/2}}$ has a bounded inverse, $K_{S,T}$ is Hilbert--Schmidt if and only if
	\[
	\overline{S^{1/2}T^{-1/2}}-\overline{S^{-1/2}T^{1/2}}=\overline{S^{-1/2}T^{1/2}}K_{S,T}
	\]
	is Hilbert--Schmidt.
	The latter condition is the same as the second condition in the definition of the equivalence relation given in \cite[Remark 5.4]{MR3513945}.
	In that paper, it is assumed that $S^{-1/2}$ and $T^{-1/2}$ are bounded, but we do not make this assumption.
\end{remark}

\begin{proposition}
	The binary relation $\sim$ defined in Definition \ref{def of equiv rel on sa} is an equivalence relation on $\mathcal{S\!A}_+(\mathscr{H}).$
\end{proposition}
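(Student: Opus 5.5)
We verify reflexivity, symmetry and transitivity in turn. Weak equivalence is already known to be an equivalence relation, so in each case only the Hilbert--Schmidt property of $K$ needs checking. Throughout, for $S\overset{\mathrm{w}}{\sim}T$ we write $A_{S,T}:=\overline{S^{1/2}T^{-1/2}}$, a bounded operator with bounded inverse by Proposition \ref{weq iff}. Proposition \ref{weq iff} (2) gives $A_{S,T}^*=\overline{T^{-1/2}S^{1/2}}$, so that
\[
K_{S,T}=A_{S,T}^*A_{S,T}-1 .
\]

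\textbf{Reflexivity.} $A_{T,T}$ is the closure of the identity on $\mathrm{dom}(T^{-1/2})$, hence $A_{T,T}=1$. Therefore $K_{T,T}=0$, which is Hilbert--Schmidt.

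\textbf{Symmetry.} By Proposition \ref{weq iff} (3), $A_{T,S}=\overline{T^{1/2}S^{-1/2}}=A_{S,T}^{-1}$. Hence
\[
K_{T,S}=A_{S,T}^{-1}(A_{S,T}^{-1})^*-1 .
\]
Write $B:=A_{S,T}$. Then
\[
B^{-1}(B^{-1})^*-1=B^{-1}\bigl(1-BB^*\bigr)(B^{-1})^* .
\]
We also have $BB^*-1=B(B^*B-1)B^{-1}=BK_{S,T}B^{-1}$. Since $\mathcal{C}_2(\mathscr{H})$ is a two-sided ideal in $\mathcal{B}(\mathscr{H})$, the operator $BB^*-1$ is Hilbert--Schmidt, and consequently so is $K_{T,S}$.

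\textbf{Transitivity.} Suppose $R\sim S$ and $S\sim T$. The key identity to establish is the chain rule
\[
A_{R,T}=A_{R,S}A_{S,T}.
\]
On the core $\mathrm{dom}(T^{-1/2})$, we have $T^{-1/2}\psi\in\mathrm{dom}(T^{1/2})=\mathrm{dom}(S^{1/2})$. Put $\phi:=S^{1/2}T^{-1/2}\psi\in\mathrm{dom}(S^{-1/2})$, and note that $S^{-1/2}\phi=T^{-1/2}\psi\in\mathrm{dom}(R^{1/2})$. It follows that
\[
R^{1/2}S^{-1/2}S^{1/2}T^{-1/2}\psi=R^{1/2}T^{-1/2}\psi .
\]
Both sides of the chain rule are bounded and agree on the dense set $\mathrm{dom}(T^{-1/2})$, so they are equal. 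This domain bookkeeping, which relies on Proposition \ref{weq iff} (i),(ii), is the only delicate step.

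Now set $X:=A_{R,S}$ and $Y:=A_{S,T}$. Then
\[
K_{R,T}=Y^*X^*XY-1=Y^*(X^*X-1)Y+(Y^*Y-1)=Y^*K_{R,S}Y+K_{S,T}.
\]
This is Hilbert--Schmidt by the ideal property, so $R\sim T$.
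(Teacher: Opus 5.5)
Your strategy is the paper's: $K_{T,T}=0$ for reflexivity, a bounded conjugation of $K_{S,T}$ for symmetry, and for transitivity the identity $K_{R,T}=Y^*K_{R,S}Y+K_{S,T}$. That identity is exactly the paper's $K_{R,T}=\overline{T^{-1/2}S^{1/2}}K_{R,S}\overline{S^{1/2}T^{-1/2}}+K_{S,T}$. Your check of $A_{R,T}=A_{R,S}A_{S,T}$ on $\mathrm{dom}(T^{-1/2})$ is correct and supplies domain bookkeeping the paper leaves implicit. Reflexivity and transitivity are fine.

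The symmetry step, however, has the factors in the wrong order. Put $B:=A_{S,T}$. From your own formula $K_{S,T}=A_{S,T}^*A_{S,T}-1$ and $A_{T,S}=B^{-1}$ you get
\[
K_{T,S}=A_{T,S}^*A_{T,S}-1=(B^{-1})^*B^{-1}-1 ,
\]
not $B^{-1}(B^{-1})^*-1$. The operator you wrote is $\overline{T^{1/2}S^{-1/2}}\cdot\overline{S^{-1/2}T^{1/2}}-1=K_{S^{-1},T^{-1}}$ (compare Lemma \ref{eq rel implies the inverse}). It equals $(B^*B)^{-1}-1$, whereas $K_{T,S}=(BB^*)^{-1}-1$, and these differ unless $B$ is normal. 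So, as written, you have shown that the wrong operator is Hilbert--Schmidt.

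The repair is one line and needs no detour through $BB^*-1$:
\[
(B^{-1})^*B^{-1}-1=-(B^{-1})^*(B^*B-1)B^{-1}=-(B^{-1})^*K_{S,T}B^{-1}.
\]
Since $(B^{-1})^*=\overline{S^{-1/2}T^{1/2}}$ and $B^{-1}=\overline{T^{1/2}S^{-1/2}}$, this is precisely the paper's identity $K_{T,S}=-\overline{S^{-1/2}T^{1/2}}K_{S,T}\overline{T^{1/2}S^{-1/2}}$. Alternatively, your computation can be rescued with $C^*C-1=C^{-1}(CC^*-1)C$ for $C=B^{-1}$. With this correction the proof is complete.
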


\begin{proof}
	Reflexivity: Take $T\in\mathcal{S\!A}_+(\mathscr{H}).$
	Then we have $K_{T,T}=0,$ and thus $T\sim T.$
	
	Symmetry: Take $S,T\in\mathcal{S\!A}_+(\mathscr{H}),$ and suppose that $S\sim T.$
	Since $K_{S,T}$ is Hilbert--Schmidt, so is
	\[
	K_{T,S}=-\overline{S^{-1/2}T^{1/2}}K_{S,T}\overline{T^{1/2}S^{-1/2}}.
	\]
	Thus $T\sim S.$
	
	Transitivity: Take $R,S,T\in\mathcal{S\!A}_+(\mathscr{H}),$ and suppose that $R\sim S$ and $S\sim T.$
	Then it holds that
	\[
	K_{R,T}=\overline{T^{-1/2}S^{1/2}}K_{R,S}\overline{S^{1/2}T^{-1/2}}+K_{S,T}
	\]
	is Hilbert--Schmidt.
	Thus $R\sim T.$
	This finishes the proof.
\end{proof}

The following proposition gives a sufficient condition for two self-adjoint operators to be equivalent.

\begin{proposition}\label{def of L}
	Let $S,T\in\mathcal{S\!A}_+(\mathscr{H})$ satisfy $S\overset{\mathrm{w}}{\sim} T$.
	If the operator 
	\[
	L_{S,T}:=\overline{S^{1/2}T^{-1/2}}-1
	\]
	is Hilbert--Schmidt, then $S\sim T.$ 
\end{proposition}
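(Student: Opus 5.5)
Write $A:=\overline{S^{1/2}T^{-1/2}}$, which is in $\mathcal{B}(\mathscr{H})$ by Proposition \ref{weq iff}. By part (2) of that proposition, $A^*=(S^{1/2}T^{-1/2})^*=\overline{T^{-1/2}S^{1/2}}$. Hence
\[
K_{S,T}=A^*A-1 .
\]
The plan is to express $A^*A-1$ in terms of $L:=L_{S,T}=A-1$ and check that each resulting term is Hilbert--Schmidt.

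Substituting $A=1+L$ and $A^*=1+L^*$ gives
\[
K_{S,T}=(1+L^*)(1+L)-1=L+L^*+L^*L .
\]
Now use the standard facts about $\mathcal{C}_2(\mathscr{H})$:
\begin{itemize}
\item it is closed under taking adjoints, so $L^*\in\mathcal{C}_2(\mathscr{H})$;
\item it is a two-sided ideal in $\mathcal{B}(\mathscr{H})$, so $L^*L\in\mathcal{C}_2(\mathscr{H})$ (in fact it is even trace class);
\item it is a linear space, so the sum $L+L^*+L^*L$ lies in $\mathcal{C}_2(\mathscr{H})$.
\end{itemize}
Therefore $K_{S,T}$ is Hilbert--Schmidt. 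Together with the hypothesis $S\overset{\mathrm{w}}{\sim} T$, Definition \ref{def of equiv rel on sa} gives $S\sim T$.

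The only delicate point is identifying the first factor of $K_{S,T}$ as the adjoint of $A$. This is exactly assertion (2) of Proposition \ref{weq iff}, so no further domain issues arise. All the operators involved are bounded and everywhere defined, so the algebra above is legitimate.
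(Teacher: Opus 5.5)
Your proof is correct and is essentially the paper's argument. The paper writes $K_{S,T}=\overline{T^{-1/2}S^{1/2}}\,L_{S,T}+L_{S,T}^*$, and since $\overline{T^{-1/2}S^{1/2}}=1+L_{S,T}^*$, this is exactly your decomposition $L+L^*+L^*L$, resting on the same identification $A^*=\overline{T^{-1/2}S^{1/2}}$ from Proposition \ref{weq iff}.
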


\begin{proof}
	We have
	\[
	K_{S,T}=\overline{T^{-1/2}S^{1/2}}\left(\overline{S^{1/2}T^{-1/2}}-1\right)+\overline{T^{-1/2}S^{1/2}}-1
	=\overline{T^{-1/2}S^{1/2}}	L_{S,T}+L_{S,T}^*.
	\]
	The right-hand side is Hilbert--Schmidt, and thus the proposition follows.
\end{proof}

\begin{question}\label{rem of def of L}
	Is the converse of Proposition \ref{def of L} true? 
	That is, if $S,T\in\mathcal{S\!A}_+(\mathscr{H})$ satisfy $S\sim T$, does it follow that $L_{S,T}$ is Hilbert--Schmidt?
\end{question}

The converse of Proposition \ref{def of L} holds after taking powers of the operators.

\begin{theorem}\label{main thm ver2}
	Let $S,T\in\mathcal{S\!A}_+(\mathscr{H})$ satisfy $S\sim T,$ and let $p\in(0,1).$
	Then $L_{S^p,T^p}$ is Hilbert--Schmidt.
\end{theorem}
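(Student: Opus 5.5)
The plan is to write $q:=p/2\in(0,1/2)$ and to express $L_{S^p,T^p}=\overline{S^{q}T^{-q}}-1$ as a double operator integral of the Hilbert--Schmidt operator $K:=K_{S,T}$ against the spectral measures of $S$ and $T$. It then suffices to check that the resulting symbol is bounded. First, by Proposition \ref{weak eq rel p}, $S^p\overset{\mathrm{w}}{\sim}T^p$, so $\overline{S^qT^{-q}}$ is bounded and the statement makes sense. Put $A:=\overline{S^{1/2}T^{-1/2}}$. By Proposition \ref{weq iff} (2), (3), $A$ has a bounded inverse, $A^*=\overline{T^{-1/2}S^{1/2}}$, and $K=A^*A-1$. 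On $\mathrm{dom}(T^{1/2})=\mathrm{dom}(S^{1/2})$ we have $S^{1/2}=AT^{1/2}$, hence $T^{1/2}=A^{-1}S^{1/2}$. In the sense of forms on $\mathrm{dom}(T^{1/2})$ this gives
\[
\langle S^{1/2}\psi,S^{1/2}\phi\rangle-\langle T^{1/2}\psi,T^{1/2}\phi\rangle=\langle T^{1/2}\psi,K\,T^{1/2}\phi\rangle .
\]

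Next I would use the Balakrishnan-type formula $x^{q}=\frac{\sin\pi q}{\pi}\int_0^\infty\lambda^{q-1}\,x(x+\lambda)^{-1}\,\mathrm{d}\lambda$. This yields, on a core such as $\mathrm{dom}(T)\cap\mathrm{dom}(T^{-1})$ (or vectors with compact spectral support in $(0,\infty)$ for $T$),
\[
S^{q}T^{-q}\psi-\psi=\frac{\sin\pi q}{\pi}\int_0^\infty\lambda^{q}\big[(T+\lambda)^{-1}-(S+\lambda)^{-1}\big]T^{-q}\psi\,\mathrm{d}\lambda .
\]
The second resolvent identity in form sense, combined with the display above, gives
\[
(T+\lambda)^{-1}-(S+\lambda)^{-1}=\big(T^{1/2}(S+\lambda)^{-1}\big)^*K\,T^{1/2}(T+\lambda)^{-1},
\]
where $T^{1/2}(S+\lambda)^{-1}=A^{-1}S^{1/2}(S+\lambda)^{-1}$. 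Therefore
\[
L_{S^p,T^p}=(A^{-1})^{*}\cdot\frac{\sin\pi q}{\pi}\int_0^\infty\lambda^{q}\,S^{1/2}(S+\lambda)^{-1}\,K\,T^{1/2-q}(T+\lambda)^{-1}\,\mathrm{d}\lambda .
\]
Since $(A^{-1})^*$ is bounded, it remains to show that the integral is Hilbert--Schmidt.

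Estimating the integrand in norm alone gives only $\lambda^{q}\lambda^{-1/2}\lambda^{-1/2-q}\|K\|_{\mathrm{HS}}=\lambda^{-1}\|K\|_{\mathrm{HS}}$. This is logarithmically divergent at both ends, so a crude bound fails; this is the main obstacle. To overcome it I would rewrite the integral as the double operator integral $\iint\varphi(s,t)\,\mathrm{d}E_S(s)\,K\,\mathrm{d}E_T(t)$ with symbol
\[
\varphi(s,t)=\frac{\sin\pi q}{\pi}\int_0^\infty\frac{\lambda^{q}s^{1/2}t^{1/2-q}}{(s+\lambda)(t+\lambda)}\,\mathrm{d}\lambda .
\]
On $\mathcal{C}_2(\mathscr{H})$ such a transformer is bounded with norm at most $\|\varphi\|_\infty$. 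This is elementary: $\mathcal{C}_2$ is $L^2$ of the product spectral measure, so one can verify it directly via Fubini on spectral projections. The symbol $\varphi$ is homogeneous of degree $0$, so it suffices to take $s=1$, $t=r$. As $r\to0$, the function behaves like $r^{1/2-q}\cdot O(1)\to0$. As $r\to\infty$, it behaves like $r^{1/2-q}\cdot r^{-1}\cdot r^{q}=r^{-1/2}\to0$. Hence $\varphi$ is bounded precisely because $0<q<1/2$, i.e.\ $p<1$; at $p=1$ the exponent $1/2-q$ vanishes, which fits with the open Question \ref{rem of def of L}. Finally, the rigorous justification would consist of (a) checking the resolvent and form identities on the chosen core, (b) truncating the $\lambda$-integral to $[\varepsilon,R]$, where everything is a Bochner integral of Hilbert--Schmidt operators, and (c) passing to the limit using the uniform bound $\|\varphi_{\varepsilon,R}\|_\infty\le\|\varphi\|_\infty$ and pointwise convergence of the truncated symbols. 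The limit is Hilbert--Schmidt and agrees with $L_{S^p,T^p}$ on the core, hence everywhere.
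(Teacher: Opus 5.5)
Your argument is correct after one repair, and it takes a genuinely different route from the paper.

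\textbf{The repair.} Taking adjoints reverses order. From $T^{1/2}(S+\lambda)^{-1}=A^{-1}S^{1/2}(S+\lambda)^{-1}$ you get $\bigl(T^{1/2}(S+\lambda)^{-1}\bigr)^*=S^{1/2}(S+\lambda)^{-1}(A^{-1})^*$, not $(A^{-1})^*S^{1/2}(S+\lambda)^{-1}$. So $(A^{-1})^*=\overline{S^{-1/2}T^{1/2}}$ cannot be pulled outside the integral. The correct integrand is $\lambda^{q}S^{1/2}(S+\lambda)^{-1}\,\overline{S^{-1/2}T^{1/2}}K\,T^{1/2-q}(T+\lambda)^{-1}$, which is exactly $F^{(p)}_{S,T}(\lambda)\lambda^{p/2}$ from Lemma \ref{int rep of K}. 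This costs nothing. Run your double operator integral on the Hilbert--Schmidt operator $K':=\overline{S^{-1/2}T^{1/2}}K$ instead of $K$, with no prefactor, and you get $\|L_{S^p,T^p}\|_{\mathrm{HS}}\le\|\varphi\|_\infty\|K'\|_{\mathrm{HS}}$.

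\textbf{The remark about $p=1$ is wrong, in your favour.} The $\lambda$-integral evaluates to $\varphi(s,t)=s^{1/2}t^{1/2-q}(t^{q}-s^{q})/(t-s)$, and $0\le\varphi\le1$ for every $q\in(0,1/2]$. At $q=1/2$ it equals $\sqrt{s}/(\sqrt{s}+\sqrt{t})$; this tends to $1$ rather than $0$ as $t/s\to0$, but it is still bounded. Lemma \ref{integral rep power p ver2} with $\alpha=1/2$ gives the weak Balakrishnan identity for $\psi\in\mathrm{dom}(S^{1/2})$ and $\phi\in\mathrm{dom}(T^{1/2})\cap\mathrm{dom}(T^{-1/2})$. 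Your truncated symbols increase monotonically to $\varphi$, so steps (a)--(c) go through unchanged. Your method therefore also gives $\|L_{S,T}\|_{\mathrm{HS}}\le\|S^{-1/2}T^{1/2}\|\,\|K_{S,T}\|_{\mathrm{HS}}$, which answers Question \ref{rem of def of L} affirmatively. A direct cross-check: both $\iint\frac{\sqrt{s}}{\sqrt{s}+\sqrt{t}}\,\mathrm{d}E_S\,K'\,\mathrm{d}E_T$ and $\overline{S^{1/2}T^{-1/2}}-1$ solve the weak Sylvester equation $S^{1/2}Y+YT^{1/2}=S^{1/2}K'$. Its homogeneous version has only the bounded solution $Y=0$, because $\mathrm{e}^{-uS^{1/2}}\to0$ strongly as $u\to\infty$.

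\textbf{Comparison with the paper.} The paper starts from the same integral (Lemmas \ref{diff of two resolvent} and \ref{int rep of K}) but estimates it using only the spectral theorem. It splits the integrand by Cauchy--Schwarz in $t$ and applies Lemma \ref{gen int formula1}. To justify Fubini it first treats trace-class $K$ by an eigen-expansion (Lemma \ref{main thm trace class ver}). It then approximates $S$ by $S_N=T+U_N$ with finite-rank $K_{S_N,T}$, proves norm continuity (Lemma \ref{conti of K wrt G}), and concludes by lower semicontinuity of the Hilbert--Schmidt norm. That route needs no double-operator-integral machinery. However, the splitting ignores the structure of the symbol, and the resulting constant $\frac{\sin(\pi p/2)}{\pi}(C^{(p)})^{1/2}$, with $C^{(p)}=\pi p/\sin(\pi p)$, blows up as $p\to1$; this is why the paper stops at $p<1$. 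Your route uses the standard fact that $X\mapsto E_S(\cdot)XE_T(\cdot)$ is a projection-valued measure on the Hilbert space $\mathcal{C}_2(\mathscr{H})$. In exchange, it handles a general Hilbert--Schmidt $K$ in one step, with no approximation scheme, and gives a bound uniform in $p\in(0,1]$.
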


To prove the theorem, we need four lemmas.

\begin{lemma}\label{diff of two resolvent}
	Let $S,T\in\mathcal{S\!A}_+(\mathscr{H})$ satisfy $S\overset{\mathrm{w}}{\sim} T$, and let $z\in\mathbb{C}$ be such that $z\not\in\sigma(S)\cup\sigma(T)$.
	Then, we have
	\[
	(S-z)^{-1}-(T-z)^{-1}=-S^{1/2}(S-z)^{-1}\overline{S^{-1/2}T^{1/2}}K_{S,T}T^{1/2}(T-z)^{-1}.
	\]
\end{lemma}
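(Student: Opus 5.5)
The plan is to simplify the right-hand side algebraically using Proposition \ref{weq iff}, and then verify the identity by applying both sides to an arbitrary vector $\psi\in\mathscr{H}$. Write $A:=\overline{S^{-1/2}T^{1/2}}$, $B:=\overline{T^{-1/2}S^{1/2}}$ and $C:=\overline{S^{1/2}T^{-1/2}}$, so that $K_{S,T}=BC-1$. By Proposition \ref{weq iff} (4), $A$ has the bounded inverse $B$, hence $AB=1$. Therefore
\[
AK_{S,T}=ABC-A=C-A,
\]
and the claim reduces to
\[
(S-z)^{-1}-(T-z)^{-1}=-S^{1/2}(S-z)^{-1}(C-A)T^{1/2}(T-z)^{-1}.
\]
All operators here are bounded. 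In particular $S^{1/2}(S-z)^{-1}$ and $T^{1/2}(T-z)^{-1}$ are bounded by functional calculus, since $z\notin\sigma(S)\cup\sigma(T)$ and the function $\lambda\mapsto\lambda^{1/2}/(\lambda-z)$ is bounded on $\sigma(S)\cup\sigma(T)\subset[0,\infty)$.

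Fix $\psi\in\mathscr{H}$ and put $\varphi:=(T-z)^{-1}\psi\in\mathrm{dom}(T)$. I next compute $C$ and $A$ on the vector $T^{1/2}\varphi$, using the domain equalities of Proposition \ref{weq iff} (i) and (ii).

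For $C$: since $\varphi\in\mathrm{dom}(T^{1/2})=\mathrm{dom}(S^{1/2})$, the vector $T^{1/2}\varphi$ lies in the domain of $S^{1/2}T^{-1/2}$. Hence $CT^{1/2}\varphi=S^{1/2}\varphi$.

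For $A$: the vector $T^{1/2}\varphi$ lies in $\mathrm{dom}(T^{1/2})$, and $T^{1/2}T^{1/2}\varphi=T\varphi\in\mathrm{ran}(T^{1/2})=\mathrm{dom}(T^{-1/2})=\mathrm{dom}(S^{-1/2})$. Hence $AT^{1/2}\varphi=S^{-1/2}T\varphi=S^{-1/2}(\psi+z\varphi)$.

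I expect this careful domain bookkeeping to be the only delicate point. One has to confirm that the closures $A$ and $C$ really act by the unclosed formulas on these particular vectors, and that is exactly what Proposition \ref{weq iff} (i) and (ii) guarantee.

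Finally I apply $S^{1/2}(S-z)^{-1}$ to both results, using the inclusions $(S-z)^{-1}S^{1/2}\subset S^{1/2}(S-z)^{-1}$ and $(S-z)^{-1}S^{-1/2}\subset S^{-1/2}(S-z)^{-1}$ from functional calculus.

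For the $C$ term, this gives
\[
S^{1/2}(S-z)^{-1}S^{1/2}\varphi=S(S-z)^{-1}\varphi=\varphi+z(S-z)^{-1}\varphi.
\]

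For the $A$ term, it gives
\[
S^{1/2}(S-z)^{-1}S^{-1/2}(\psi+z\varphi)=(S-z)^{-1}\psi+z(S-z)^{-1}\varphi.
\]

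Subtracting the second from the first, the $z(S-z)^{-1}\varphi$ terms cancel and we get $\varphi-(S-z)^{-1}\psi$. Multiplying by $-1$ yields $(S-z)^{-1}\psi-(T-z)^{-1}\psi$. Since $\psi\in\mathscr{H}$ was arbitrary, this proves the lemma.
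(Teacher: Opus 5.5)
Your proof is correct, but it takes a different route from the paper's.

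\textbf{The paper's route.} The paper never simplifies $\overline{S^{-1/2}T^{1/2}}K_{S,T}$. It works weakly. It writes $\langle\psi,(S-z)^{-1}\phi\rangle-\langle\psi,(T-z)^{-1}\phi\rangle$ as the form difference $\langle T^{1/2}u,T^{1/2}v\rangle-\langle S^{1/2}u,S^{1/2}v\rangle$, with $u=(S-\bar z)^{-1}\psi$ and $v=(T-z)^{-1}\phi$. It then substitutes $S^{1/2}u=\overline{S^{1/2}T^{-1/2}}\,T^{1/2}u$, so $K_{S,T}$ appears directly as a form difference. Passing from this weak identity to the stated operator identity uses the adjoint relations of Proposition \ref{weq iff}(2) without comment. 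Specifically, the adjoint of $T^{1/2}(S-\bar z)^{-1}=\overline{T^{1/2}S^{-1/2}}\,S^{1/2}(S-\bar z)^{-1}$ is $S^{1/2}(S-z)^{-1}\overline{S^{-1/2}T^{1/2}}$.

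\textbf{Your route.} You first use Proposition \ref{weq iff}(4) to rewrite $\overline{S^{-1/2}T^{1/2}}K_{S,T}$ as $\overline{S^{1/2}T^{-1/2}}-\overline{S^{-1/2}T^{1/2}}$. This is the identity in the Remark after Definition \ref{def of equiv rel on sa}. You then check the resulting identity vector by vector, so no adjoints are needed. The one delicate point is that $T\varphi=\psi+z\varphi$ lies in $\mathrm{dom}(T^{-1/2})=\mathrm{dom}(S^{-1/2})$ although $\psi$ itself need not. You handle this correctly, as you do the commutation inclusions from functional calculus.

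\textbf{Comparison.} The paper's argument is shorter and shows why $K_{S,T}$ is the natural object: it is the second resolvent identity in form sense, which suits the form-defined operators used later. Yours makes every domain step explicit. It also expresses the resolvent difference through $\overline{S^{1/2}T^{-1/2}}-\overline{S^{-1/2}T^{1/2}}$, the same combination that appears as $J_+-J_-$ in Theorem \ref{m_1not=m_2}.
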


\begin{proof}
	For any $\psi,\phi\in\mathscr{H}$, we have
	\begin{align*}
		&\langle\psi,(S-z)^{-1}\phi\rangle-\langle\psi,(T-z)^{-1}\phi\rangle\\
		&=\langle(S-\bar{z})^{-1}\psi,(T-z)(T-z)^{-1}\phi\rangle-\langle(S-\bar{z})(S-\bar{z})^{-1}\psi,(T-z)^{-1}\phi\rangle\\
		&=\langle(S-\bar{z})^{-1}\psi,T(T-z)^{-1}\phi\rangle-\langle S(S-\bar{z})^{-1}\psi,(T-z)^{-1}\phi\rangle\\
		&=\langle T^{1/2}(S-\bar{z})^{-1}\psi,T^{1/2}(T-z)^{-1}\phi\rangle\\
		&\qquad\qquad\qquad\qquad-\langle S^{1/2}T^{-1/2}\cdot T^{1/2}(S-\bar{z})^{-1}\psi,S^{1/2}T^{-1/2}\cdot T^{1/2}(T-z)^{-1}\phi\rangle\\
		&=-\langle T^{1/2}(S-\bar{z})^{-1}\psi,K_{S,T}T^{1/2}(T-z)^{-1}\phi\rangle.
	\end{align*}
	Thus we obtain the desired result.
\end{proof}

\begin{lemma}\label{int rep of K}
	Let $S,T\in\mathcal{S\!A}_+(\mathscr{H})$ satisfy $S\overset{\mathrm{w}}{\sim} T,$ and let $p\in(0,1).$
	Define a $\mathcal{B}(\mathscr{H})$-valued function $F_{S,T}^{(p)}$ by
	\[
	F_{S,T}^{(p)}(t):=S^{1/2}(S+t)^{-1}\overline{S^{-1/2}T^{1/2}}K_{S,T}T^{(1-p)/2}(T+t)^{-1},\qquad t>0.
	\]
	Then, we have
	\[
	\langle \psi,L_{S^p,T^p}\phi\rangle = \frac{\sin{(\pi p/2)}}{\pi}\int_0^\infty\langle\psi,F_{S,T}^{(p)}(t)\phi\rangle t^{p/2}\,\mathrm{d}t,\qquad\forall\psi,\phi\in\mathscr{H}.
	\]
\end{lemma}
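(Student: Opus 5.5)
The plan is to derive the formula first on a dense set of vectors, by combining the Balakrishnan-type integral representation of fractional powers with Lemma \ref{diff of two resolvent}. I would then extend it to all $\psi,\phi$ with a uniform Cauchy--Schwarz bound on the integral.

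Set $\alpha:=p/2\in(0,1/2)$. By Proposition \ref{weak eq rel p} we have $S^p\overset{\mathrm{w}}{\sim}T^p$, so by Proposition \ref{weq iff}, $\mathrm{dom}(S^{\alpha})=\mathrm{dom}(T^{\alpha})$ and $\overline{S^{\alpha}T^{-\alpha}}$ is bounded. For $\phi\in\mathrm{dom}(T^{-\alpha})$, the vector $\chi:=T^{-\alpha}\phi$ lies in $\mathrm{dom}(T^\alpha)=\mathrm{dom}(S^\alpha)$. Hence
\[
L_{S^p,T^p}\phi=S^\alpha\chi-T^\alpha\chi .
\]
For $0<\alpha<1$ and $\lambda\ge0$ we have $\lambda^\alpha=\frac{\sin\pi\alpha}{\pi}\int_0^\infty t^{\alpha-1}\frac{\lambda}{\lambda+t}\,\mathrm{d}t$, and $\frac{\lambda}{\lambda+t}=1-\frac{t}{\lambda+t}$. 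By the spectral theorem (Fubini on the spectral measure), this gives, for $\chi\in\mathrm{dom}(S^\alpha)\cap\mathrm{dom}(T^\alpha)$,
\[
\langle\psi,(S^\alpha-T^\alpha)\chi\rangle=\frac{\sin\pi\alpha}{\pi}\int_0^\infty t^{\alpha}\big\langle\psi,\big[(T+t)^{-1}-(S+t)^{-1}\big]\chi\big\rangle\,\mathrm{d}t .
\]
Here each of the two integrals $\int t^{\alpha-1}\langle\psi,X(X+t)^{-1}\chi\rangle\,\mathrm{d}t$, for $X=S$ and $X=T$, converges absolutely on its own, so subtracting them is legitimate.

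Next I apply Lemma \ref{diff of two resolvent} with $z=-t$, which is allowed since $-t\notin\sigma(S)\cup\sigma(T)$. It gives
\[
(T+t)^{-1}-(S+t)^{-1}=S^{1/2}(S+t)^{-1}\overline{S^{-1/2}T^{1/2}}K_{S,T}T^{1/2}(T+t)^{-1}.
\]
Applying this to $\chi=T^{-\alpha}\phi$ and using $T^{1/2}(T+t)^{-1}T^{-p/2}\phi=T^{(1-p)/2}(T+t)^{-1}\phi$ (functional calculus), the integrand becomes exactly $\langle\psi,F^{(p)}_{S,T}(t)\phi\rangle t^{p/2}$. This proves the claimed identity for all $\psi\in\mathscr{H}$ and all $\phi$ in the dense subspace $\mathrm{dom}(T^{-p/2})$.

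The main obstacle is extending the identity to all $\phi\in\mathscr{H}$. For this I need a bound
\[
\int_0^\infty|\langle\psi,F^{(p)}_{S,T}(t)\phi\rangle|t^{p/2}\,\mathrm{d}t\le C\|\psi\|\|\phi\|
\]
with $C$ independent of $\psi,\phi$. I split the estimate using Cauchy--Schwarz in $t$. This bounds the integral by $\|\overline{S^{-1/2}T^{1/2}}K_{S,T}\|$ times
\[
\Big(\int_0^\infty\|S^{1/2}(S+t)^{-1}\psi\|^2\,\mathrm{d}t\Big)^{1/2}\Big(\int_0^\infty t^{p}\|T^{(1-p)/2}(T+t)^{-1}\phi\|^2\,\mathrm{d}t\Big)^{1/2}.
\]
I then evaluate both factors by the spectral theorem. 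Since $\int_0^\infty s(s+t)^{-2}\,\mathrm{d}t=1$, the first factor is $\|\psi\|$. Similarly, $\int_0^\infty t^{p}\lambda^{1-p}(\lambda+t)^{-2}\,\mathrm{d}t=B(p+1,1-p)$ is a finite constant independent of $\lambda>0$, because $0<p<1$; so the second factor is a constant multiple of $\|\phi\|$. Consequently both sides of the identity are bounded sesquilinear forms in $(\psi,\phi)$. The right-hand side is bounded by dominated convergence together with the bound above, and the left-hand side is bounded because $L_{S^p,T^p}$ is bounded. Since the two forms agree on the dense set $\mathscr{H}\times\mathrm{dom}(T^{-p/2})$, they agree everywhere.
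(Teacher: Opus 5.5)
Your proof is correct and follows essentially the same route as the paper: you use the Balakrishnan-type representation (Lemma \ref{integral rep power p ver2}) together with Lemma \ref{diff of two resolvent} at $z=-t$ on a dense set, and then extend to all vectors with the Cauchy--Schwarz bound built from the integrals of Lemma \ref{gen int formula1}. The one difference is that you verify the identity for arbitrary $\psi$ and $\phi\in\mathrm{dom}(T^{-p/2})$, whereas the paper takes both $\psi,\phi\in\mathrm{dom}(T^{p/2})\cap\mathrm{dom}(T^{-p/2})$ so that Lemma \ref{integral rep power p ver2} applies as stated. Your absolute-convergence claim for arbitrary $\psi$ is true, but it needs a short spectral-measure (total variation) argument that the paper's restriction on $\psi$ avoids.
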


\begin{proof}
	It follows from Lemma \ref{gen int formula1} that
	\begin{align*}
		&\int_0^\infty|\langle\psi,F_{S,T}^{(p)}(t)\phi\rangle| t^{p/2}\,\mathrm{d}t\\
		&\leq \left\|\overline{S^{-1/2}T^{1/2}}K_{S,T}\right\|\int_0^\infty\left\|S^{1/2}(S+t)^{-1}\psi\right\| \|T^{(1-p)/2}(T+t)^{-1}\phi\| t^{p/2}\,\mathrm{d}t\\
		&\leq \left\|\overline{S^{-1/2}T^{1/2}}K_{S,T}\right\|\left[\int_0^\infty\left\|S^{1/2}(S+t)^{-1}\psi\right\|^2\,\mathrm{d}t\right]^{1/2}\\
		&\qquad\qquad\qquad\qquad\qquad\qquad\qquad\qquad\times\left[\int_0^\infty \|T^{(1-p)/2}(T+t)^{-1}\phi\|^2 t^p\,\mathrm{d}t\right]^{1/2}\\
		&= (C^{(p)})^{1/2}\left\|\overline{S^{-1/2}T^{1/2}}K_{S,T}\right\|\|\psi\|\|\phi\|
	\end{align*}
	for any $\psi,\phi\in\mathscr{H}.$
	Thus there exists a unique bounded operator $A$ such that
	\[
	\langle\psi, A\phi\rangle =  \frac{\sin{(\pi p/2)}}{\pi}\int_0^\infty\langle\psi,F_{S,T}^{(p)}(t)\phi\rangle t^{p/2}\,\mathrm{d}t,\qquad \forall\psi,\phi\in\mathscr{H}.
	\]
	It is now enough to show that $A=L_{S^p,T^p}.$
	For this, let $\psi,\phi\in\mathrm{dom}(T^{p/2})\cap\mathrm{dom}(T^{-p/2})$ be arbitrary.
	Then, by Lemma \ref{integral rep power p ver2}, we have
	\begin{align*}
		&\langle\psi, L_{S^p,T^p}\phi\rangle 
		= \langle\psi,S^{p/2}T^{-p/2}\phi\rangle-\langle\psi,T^{p/2}T^{-p/2}\phi\rangle\\
		&=\frac{\sin{(\pi p/2)}}{\pi}\int_0^\infty \frac{\langle \psi, S(S+t)^{-1}T^{-p/2}\phi\rangle}{t^{1-p/2}}\,\mathrm{d}t\\
		&\qquad\qquad\qquad\qquad-\frac{\sin{(\pi p/2)}}{\pi}\int_0^\infty \frac{\langle \psi, T(T+t)^{-1}T^{-p/2}\phi\rangle}{t^{1-p/2}}\,\mathrm{d}t.
	\end{align*}
	Lemma \ref{diff of two resolvent} tells us that the numerator of the integrand of the right-hand side can be computed as
	\begin{align*}
		&\langle \psi, S(S+t)^{-1}T^{-p/2}\phi\rangle-\langle \psi, T(T+t)^{-1}T^{-p/2}\phi\rangle\\
		&=-t\langle \psi, (S+t)^{-1}T^{-p/2}\phi\rangle+t\langle \psi, (T+t)^{-1}T^{-p/2}\phi\rangle\\
		&=t\langle \psi,F_{S,T}^{(p)}(t)\phi\rangle.
	\end{align*}
	Hence we have
	\[
	\langle\psi, L_{S^p,T^p}\phi\rangle =\frac{\sin{(\pi p/2)}}{\pi}\int_0^\infty\langle \psi,F_{S,T}^{(p)}(t)\phi\rangle t^{p/2}\,\mathrm{d}t=\langle\psi, A\phi\rangle.
	\]
	Since $\psi,\phi\in\mathrm{dom}(T^{p/2})\cap\mathrm{dom}(T^{-p/2})$ are arbitrary, we get $L_{S^p,T^p}=A$.
	This finishes the proof.
\end{proof}

\begin{lemma}\label{main thm trace class ver}
	Let $S,T\in\mathcal{S\!A}_+(\mathscr{H})$ satisfy $S\sim T,$ and let $p\in(0,1).$
	If $K_{S,T}$ is of trace class, then $L_{S^p,T^p}$ is Hilbert--Schmidt with
	\[
	\|L_{S^p,T^p}\|_{\mathrm{HS}}\leq\frac{\sin{(\pi p/2)}}{\pi}(C^{(p)})^{1/2}\|S^{-1/2}T^{1/2}\|\|K_{S,T}\|_\mathrm{HS},
	\]
	where the constant $C^{(p)}$ is defined in Lemma \ref{gen int formula1}.
\end{lemma}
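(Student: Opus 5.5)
The plan is to use the integral representation of Lemma \ref{int rep of K} and to exploit the trace-class assumption through a rank-one decomposition of $K_{S,T}$. Since $K_{S,T}$ is trace class, we write its singular value (Schmidt) decomposition
\[
K_{S,T}=\sum_{n}s_n\langle e_n,\cdot\rangle f_n,\qquad \sum_n s_n<\infty,
\]
with orthonormal systems $\{e_n\},\{f_n\}$. Put $B:=\overline{S^{-1/2}T^{1/2}}$, which is bounded with $\|B\|=\|S^{-1/2}T^{1/2}\|$ by Proposition \ref{weq iff}. Then each $F^{(p)}_{S,T}(t)$ decomposes as
\[
F^{(p)}_{S,T}(t)=\sum_n s_n\,\bigl\langle T^{(1-p)/2}(T+t)^{-1}e_n,\cdot\bigr\rangle\,S^{1/2}(S+t)^{-1}Bf_n .
\]
Integrating in $t$ weakly against $t^{p/2}$, Lemma \ref{int rep of K} should express $L_{S^p,T^p}$ as a norm-convergent sum $\sum_n s_n R_n$. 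Here $R_n$ is the operator defined weakly by
\[
\langle\psi,R_n\phi\rangle=\frac{\sin(\pi p/2)}{\pi}\int_0^\infty \langle\psi,S^{1/2}(S+t)^{-1}Bf_n\rangle\langle T^{(1-p)/2}(T+t)^{-1}e_n,\phi\rangle t^{p/2}\,\mathrm{d}t .
\]

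The key step is to bound $\|R_n\|_{\mathrm{HS}}$. Take an orthonormal basis $\{\phi_k\}$ and compute $\sum_k\|R_n\phi_k\|^2$. Writing $R_n\phi=\int u_t\,\langle v_t,\phi\rangle t^{p/2}\,\mathrm{d}t$, with $u_t=S^{1/2}(S+t)^{-1}Bf_n$ and $v_t=T^{(1-p)/2}(T+t)^{-1}e_n$, Minkowski and Cauchy--Schwarz in $t$ give
\[
\|R_n\phi\|\le\frac{\sin(\pi p/2)}{\pi}\Bigl[\int_0^\infty\|u_t\|^2\,\mathrm{d}t\Bigr]^{1/2}\Bigl[\int_0^\infty|\langle v_t,\phi\rangle|^2t^p\,\mathrm{d}t\Bigr]^{1/2}.
\]
Squaring and summing over $k$ uses $\sum_k|\langle v_t,\phi_k\rangle|^2=\|v_t\|^2$. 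Lemma \ref{gen int formula1} then bounds $\int\|u_t\|^2\,\mathrm{d}t$ by a constant times $\|Bf_n\|^2\le\|B\|^2$, and $\int\|v_t\|^2t^p\,\mathrm{d}t$ by $C^{(p)}$. This is the same pair of integrals already used in the proof of Lemma \ref{int rep of K}. The result is $\|R_n\|_{\mathrm{HS}}\le\frac{\sin(\pi p/2)}{\pi}(C^{(p)})^{1/2}\|B\|$ uniformly in $n$, with the normalisation exactly as in that proof.

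Summing over $n$ then shows that $L_{S^p,T^p}=\sum_n s_nR_n$ converges in HS norm, so it is Hilbert--Schmidt, with HS norm at most the constant above times $\sum_n s_n=\|K_{S,T}\|_1$. The main obstacle is that the claimed bound has $\|K_{S,T}\|_{\mathrm{HS}}$ rather than the trace norm, so the crude triangle inequality is too weak. To get the HS norm, the plan is to keep the sum over $n$ inside: compute $\sum_k\|L\phi_k\|^2$ directly. In $L\phi=\int S^{1/2}(S+t)^{-1}BK_{S,T}\,v_t(\phi)\,t^{p/2}\,\mathrm{d}t$, with $v_t(\phi)=T^{(1-p)/2}(T+t)^{-1}\phi$, we apply Cauchy--Schwarz in $t$ to split off the $S$-factor. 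Then $\sum_k\|K_{S,T}T^{(1-p)/2}(T+t)^{-1}\phi_k\|^2=\|K_{S,T}T^{(1-p)/2}(T+t)^{-1}\|_{\mathrm{HS}}^2$, which is at most $\|K_{S,T}\|_{\mathrm{HS}}^2\|T^{(1-p)/2}(T+t)^{-1}\|^2$.

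The remaining difficulty is that operator norms in $t$ are not integrable, so the $t$-integral must be handled through the spectral representation as in Lemma \ref{gen int formula1}. I would therefore exchange the order: write the HS norm via the adjoint, $\|L\|_{\mathrm{HS}}^2=\sum_j\|L^*\psi_j\|^2$, and arrange that the $T$-side integral is evaluated vectorwise in $\phi$ while $K_{S,T}$ is summed in HS norm. The trace-class hypothesis is used only to justify the interchange of sums and integrals through absolute convergence.
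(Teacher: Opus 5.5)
\textbf{There is a genuine gap: the stated estimate with $\|K_{S,T}\|_{\mathrm{HS}}$ is not proved.} Your rank-one decomposition is correct and does show that $L_{S^p,T^p}$ is Hilbert--Schmidt. But summing the per-term estimates only gives $\|L_{S^p,T^p}\|_{\mathrm{HS}}\le\frac{\sin(\pi p/2)}{\pi}(C^{(p)})^{1/2}\|S^{-1/2}T^{1/2}\|\,\|K_{S,T}\|_1$, with the trace norm $\|K_{S,T}\|_1=\sum_n s_n$ in place of $\|K_{S,T}\|_{\mathrm{HS}}$. The difference matters. In the proof of Theorem \ref{main thm ver2} the lemma is applied to the finite-rank operators $K_{S_N,T}$. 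Only their Hilbert--Schmidt norms are uniformly bounded (by $\|K_{S,T}\|_{\mathrm{HS}}$); their trace norms may diverge.

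Your last paragraph does not supply the upgrade. Passing to $\sum_j\|L^*\psi_j\|^2$ just swaps the roles of $S$ and $T$. The $T$-side integral becomes vectorwise, but you are left with $\int_0^\infty\|KB^*Y_t\|_{\mathrm{HS}}^2\,\mathrm{d}t$, where $Y_t:=S^{1/2}(S+t)^{-1}$. The bound $\|K\|_{\mathrm{HS}}^2\|B\|^2\|Y_t\|^2$ fails exactly as before, since $\|Y_t\|^2=\sup_{\lambda\in\sigma(S)}\lambda(\lambda+t)^{-2}$ is in general of order $1/t$ and not integrable. The trace-class hypothesis plays no role in overcoming this.

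The missing idea is to apply $\|A\|_{\mathrm{HS}}=\|A^*\|_{\mathrm{HS}}$ to the integrand, not to $L$. Put $K:=K_{S,T}$, $B:=\overline{S^{-1/2}T^{1/2}}$, $X_t:=T^{(1-p)/2}(T+t)^{-1}$ and $c_p:=\sin(\pi p/2)/\pi$.

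First, test Lemma \ref{int rep of K} against $\psi$. This must be done by duality, since a Minkowski estimate would bring back $\|Y_t\|$. Cauchy--Schwarz in $t$ and Lemma \ref{gen int formula1} with $\alpha=0$ then give $\|L_{S^p,T^p}\phi\|^2\le c_p^2\|B\|^2\int_0^\infty\|KX_t\phi\|^2t^p\,\mathrm{d}t$.

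Next, sum over an orthonormal basis (Tonelli) and use $\|KX_t\|_{\mathrm{HS}}=\|X_tK\|_{\mathrm{HS}}$, valid because $K$ and $X_t$ are bounded and self-adjoint. For any orthonormal basis $\{f_j\}$ this gives
\[
\|L_{S^p,T^p}\|_{\mathrm{HS}}^2\le c_p^2\|B\|^2\int_0^\infty\|X_tK\|_{\mathrm{HS}}^2\,t^p\,\mathrm{d}t
=c_p^2\|B\|^2\sum_j\int_0^\infty\|X_tKf_j\|^2\,t^p\,\mathrm{d}t
=c_p^2C^{(p)}\|B\|^2\|K\|_{\mathrm{HS}}^2,
\]
where the last step is Lemma \ref{gen int formula1} with $\alpha=p$ applied to each vector $Kf_j$. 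This is exactly the claimed estimate. Every interchange involves nonnegative terms, so the trace-class hypothesis is never used.

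With this step your route is genuinely different from, and shorter than, the paper's. The paper expands $\|L_{S^p,T^p}\|_{\mathrm{HS}}^2$ as a double $(s,t)$-integral over $\sum_{j,k}\lambda_j\lambda_k$. It needs $\sum_j|\lambda_j|<\infty$ to apply Fubini, and then symmetrizes with $|\lambda_j\lambda_k|\le\frac12(|\lambda_j|^2+|\lambda_k|^2)$. Your corrected argument would also make the approximation argument in Theorem \ref{main thm ver2} unnecessary.
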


\begin{proof}
	Throughout the proof, we denote $F_{S,T}^{(p)},$ $K_{S,T}$ and $L_{S^p,T^p}$ by $F,$ $K$ and $L,$ respectively. 
	Let $\{e_n\}_{n=1}^\infty$ be an orthonormal basis of $\mathscr{H}.$
	It follows from Lemma \ref{int rep of K} that
	\[
	\sum_{n=1}^\infty\|Le_n\|^2=\sum_{n=1}^\infty\langle Le_n,Le_n \rangle 
	=\frac{\sin^2{(\pi p/2)}}{\pi^2}\sum_{n=1}^\infty\int_0^\infty\int_0^\infty\langle F(s)e_n, F(t)e_n\rangle s^{p/2}t^{p/2}\,\mathrm{d}s\,\mathrm{d}t.
	\]
	We write $K$ as
	\[
	K=\sum_{j=1}^\infty\lambda_j|f_j\rangle\langle f_j|,
	\]
	where $\{\lambda_j\}_{j=1}^\infty$ is an absolutely summable sequence of real numbers and $\{f_j\}_{j=1}^\infty$ is an orthonormal basis of $\mathscr{H}.$
	Then we have
	\begin{align*}
		&\frac{\pi^2}{\sin^2{(\pi p/2)}}\sum_{n=1}^\infty\|Le_n\|^2\\
		&=\sum_{n=1}^\infty\int_0^\infty\int_0^\infty\sum_{j=1}^\infty\sum_{k=1}^\infty\lambda_j\lambda_k\langle T^{(1-p)/2}(T+s)^{-1}e_n, f_j\rangle\langle  f_k,T^{(1-p)/2}(T+t)^{-1}e_n\rangle\\ 
		&\quad\times\langle S^{1/2}(S+s)^{-1}\overline{S^{-1/2}T^{1/2}}f_j,S^{1/2}(S+t)^{-1}\overline{S^{-1/2}T^{1/2}}f_k\rangle s^{p/2}t^{p/2}\,\mathrm{d}s\,\mathrm{d}t.
	\end{align*}
	To use Fubini's theorem, we check the absolute integrability of the above integral.
	The inequality
	\begin{align*}
		&\sum_{n=1}^\infty|\langle T^{(1-p)/2}(T+s)^{-1}e_n, f_j\rangle\langle  f_k,T^{(1-p)/2}(T+t)^{-1}e_n\rangle|\\
		&\qquad\qquad\qquad\qquad\leq \|T^{(1-p)/2}(T+s)^{-1}f_j\|\|T^{(1-p)/2}(T+t)^{-1}f_k\|,
	\end{align*}
	and Lemma \ref{gen int formula1} yield that
	\begin{align*}
		&\sum_{n=1}^\infty\int_0^\infty\int_0^\infty\sum_{j=1}^\infty\sum_{k=1}^\infty|\lambda_j||\lambda_k||\langle T^{(1-p)/2}(T+s)^{-1}e_n, f_j\rangle||\langle  f_k,T^{(1-p)/2}(T+t)^{-1}e_n\rangle|\\ 
		&\quad\times |\langle S^{1/2}(S+s)^{-1}\overline{S^{-1/2}T^{1/2}}f_j,S^{1/2}(S+t)^{-1}\overline{S^{-1/2}T^{1/2}}f_k\rangle| s^{p/2}t^{p/2}\,\mathrm{d}s\,\mathrm{d}t\\
		&\leq\int_0^\infty\int_0^\infty\sum_{j=1}^\infty\sum_{k=1}^\infty|\lambda_j||\lambda_k|\|T^{(1-p)/2}(T+s)^{-1}f_j\|\|T^{(1-p)/2}(T+t)^{-1}f_k\|\\ 
		&\qquad\qquad\times \|S^{1/2}(S+s)^{-1}\overline{S^{-1/2}T^{1/2}}f_j\|\|S^{1/2}(S+t)^{-1}\overline{S^{-1/2}T^{1/2}}f_k\| s^{p/2}t^{p/2}\,\mathrm{d}s\,\mathrm{d}t\\
		&\leq C^{(p)}\left\|\overline{S^{-1/2}T^{1/2}}\right\|^2\sum_{j=1}^\infty\sum_{k=1}^\infty|\lambda_j||\lambda_k|
		<\infty.
	\end{align*}
	Hence, by Fubini's theorem, we get
	\begin{align*}
		&\frac{\pi^2}{\sin^2{(\pi p/2)}}\sum_{n=1}^\infty\|Le_n\|^2\\
		&=\int_0^\infty\int_0^\infty\sum_{j=1}^\infty\sum_{k=1}^\infty\lambda_j\lambda_k\sum_{n=1}^\infty\langle T^{(1-p)/2}(T+s)^{-1}e_n, f_j\rangle\langle  f_k,T^{(1-p)/2}(T+t)^{-1}e_n\rangle\\ 
		&\quad\times\langle S^{1/2}(S+s)^{-1}\overline{S^{-1/2}T^{1/2}}f_j,S^{1/2}(S+t)^{-1}\overline{S^{-1/2}T^{1/2}}f_k\rangle s^{p/2}t^{p/2}\,\mathrm{d}s\,\mathrm{d}t\\
		&\leq\int_0^\infty\int_0^\infty\sum_{j=1}^\infty\sum_{k=1}^\infty\frac{|\lambda_j|^2+|\lambda_k|^2}{2}|\langle T^{(1-p)/2}(T+t)^{-1}f_k, T^{(1-p)/2}(T+s)^{-1}f_j\rangle|\\ 
		&\quad\times |\langle S^{1/2}(S+s)^{-1}\overline{S^{-1/2}T^{1/2}}f_j,S^{1/2}(S+t)^{-1}\overline{S^{-1/2}T^{1/2}}f_k\rangle| s^{p/2}t^{p/2}\,\mathrm{d}s\,\mathrm{d}t.
	\end{align*}
	By Lemma \ref{gen int formula1}, we have
	\begin{align*}
		&\int_0^\infty\int_0^\infty\sum_{j=1}^\infty\sum_{k=1}^\infty|\lambda_j|^2|\langle T^{(1-p)/2}(T+t)^{-1}f_k, T^{(1-p)/2}(T+s)^{-1}f_j\rangle|\\ 
		&\qquad\times |\langle S^{1/2}(S+s)^{-1}\overline{S^{-1/2}T^{1/2}}f_j,S^{1/2}(S+t)^{-1}\overline{S^{-1/2}T^{1/2}}f_k\rangle| s^{p/2}t^{p/2}\,\mathrm{d}s\,\mathrm{d}t\\
		&\leq\int_0^\infty\int_0^\infty\sum_{j=1}^\infty|\lambda_j|^2\|T^{(1-p)/2}(T+t)^{-1}T^{(1-p)/2}(T+s)^{-1}f_j\|\\ 
		&\qquad\times \left\|\left(S^{-1/2}T^{1/2}\right)^*S^{1/2}(S+t)^{-1}S^{1/2}(S+s)^{-1}\overline{S^{-1/2}T^{1/2}}f_j\right\|s^{p/2}t^{p/2}\,\mathrm{d}s\,\mathrm{d}t\\
		&\leq C^{(p)}\|S^{-1/2}T^{1/2}\|^2\sum_{j}|\lambda_j|^2
		=C^{(p)}\|S^{-1/2}T^{1/2}\|^2\|K\|_\mathrm{HS}^2.
	\end{align*}
	Similarly, we obtain
	\begin{align*}
		&\int_0^\infty\int_0^\infty\sum_{j=1}^\infty\sum_{k=1}^\infty|\lambda_k|^2|\langle T^{(1-p)/2}(T+t)^{-1}f_k, T^{(1-p)/2}(T+s)^{-1}f_j\rangle|\\ 
		&\qquad\times |\langle S^{1/2}(S+s)^{-1}\overline{S^{-1/2}T^{1/2}}f_j,S^{1/2}(S+t)^{-1}\overline{S^{-1/2}T^{1/2}}f_k\rangle| s^{p/2}t^{p/2}\,\mathrm{d}s\,\mathrm{d}t\\
		&\leq C^{(p)}\|S^{-1/2}T^{1/2}\|^2\|K\|_\mathrm{HS}^2.
	\end{align*}
	Therefore we arrive at
	\[
	\sum_{n=1}^\infty\|Le_n\|^2\leq\frac{\sin^2{(\pi p/2)}}{\pi^2}C^{(p)}\|S^{-1/2}T^{1/2}\|^2\|K\|_\mathrm{HS}^2.
	\]
	This completes the proof.
\end{proof}

\begin{lemma}\label{conti of K wrt G}
	Let $S,T\in\mathcal{S\!A}_+(\mathscr{H})$ satisfy $S\overset{\mathrm{w}}{\sim} T,$ and let $p\in(0,1).$
	Let $\{S_n\}_{n=1}^\infty$ be a sequence in $\mathcal{S\!A}_+(\mathscr{H}).$
	Suppose that $S_n\overset{\mathrm{w}}{\sim} T$ for all $n\in\mathbb{N},$ 
	and that $\{K_{S_n,T}\}_{n=1}^\infty$ converges to $K_{S,T}$ in the operator norm.
	Then $\{L_{S_n^p,T^p}\}_{n=1}^\infty$ converges to $L_{S^p,T^p}$ in the operator norm as well.
    Moreover, we have
    \begin{equation}\label{sup of S_n^{-1/2}T^{1/2}}
    \sup_{n\in\mathbb{N}}\|S_n^{-1/2}T^{1/2}\|<\infty.
    \end{equation}
\end{lemma}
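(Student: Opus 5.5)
First we prove the uniform bound \eqref{sup of S_n^{-1/2}T^{1/2}}, and then we use it with the integral representation of Lemma \ref{int rep of K} to get norm convergence.

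\emph{Uniform bound.} Write $B_n:=\overline{S_n^{1/2}T^{-1/2}}$, so that $K_{S_n,T}=B_n^*B_n-1$ by Proposition \ref{weq iff} (2). For $\psi\in\mathrm{dom}(T^{1/2})$ put $\phi=T^{1/2}\psi$. Then
\[
\|S_n^{1/2}\psi\|^2=\|B_n\phi\|^2=\langle\phi,(1+K_{S_n,T})\phi\rangle .
\]
The analogous identity holds for $S$. Since $S\overset{\mathrm{w}}{\sim}T$, we have $1+K_{S,T}=B^*B\geq c\,1$ for some $c>0$, namely $c=\|S^{-1/2}T^{1/2}\|^{-2}$. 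Norm convergence gives $\|K_{S_n,T}-K_{S,T}\|<c/2$ for all $n\geq n_0$. For such $n$ this yields $\|S_n^{1/2}\psi\|^2\geq (c/2)\|T^{1/2}\psi\|^2$, that is, $(c/2)T\preceq S_n$. By Lemma \ref{form op ine inverse} this gives $S_n^{-1}\preceq (2/c)T^{-1}$, hence $\|S_n^{-1/2}T^{1/2}\|\leq (2/c)^{1/2}$. The finitely many indices $n<n_0$ each have finite norm because $S_n\overset{\mathrm{w}}{\sim}T$. This proves \eqref{sup of S_n^{-1/2}T^{1/2}}.

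\emph{Norm convergence.} By Lemma \ref{int rep of K}, $L_{S_n^p,T^p}-L_{S^p,T^p}$ is represented weakly by the integral of $F^{(p)}_{S_n,T}(t)-F^{(p)}_{S,T}(t)$ against $t^{p/2}$. With $D_n:=\overline{S_n^{-1/2}T^{1/2}}$, one has the adjoint identity
\[
\bigl(S_n^{1/2}(S_n+t)^{-1}D_n\bigr)^*\supset T^{1/2}S_n^{-1/2}\,S_n^{1/2}(S_n+t)^{-1}=T^{1/2}(S_n+t)^{-1}.
\]
So the first factor of $F^{(p)}_{S_n,T}(t)$ is the adjoint of $\overline{T^{1/2}(S_n+t)^{-1}}$. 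I would split the difference as
\[
X_n(t)K_{S_n,T}Y(t)-X(t)K_{S,T}Y(t)=X_n(t)(K_{S_n,T}-K_{S,T})Y(t)+(X_n(t)-X(t))K_{S,T}Y(t),
\]
where $X_n(t)=S_n^{1/2}(S_n+t)^{-1}D_n$ and $Y(t)=T^{(1-p)/2}(T+t)^{-1}$.

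The first term is handled exactly as in the Cauchy--Schwarz estimate of Lemma \ref{int rep of K}. Using Lemma \ref{gen int formula1} and the uniform bound on $\|D_n\|$, it is bounded by $\mathrm{const}\cdot\|K_{S_n,T}-K_{S,T}\|\,\|\psi\|\|\phi\|$.

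\emph{Main obstacle.} The second term is the difficulty, since $X_n(t)$ involves $S_n$ and need not converge in norm for each fixed $t$. I would try to express $X_n(t)-X(t)$ via a resolvent identity in the spirit of Lemma \ref{diff of two resolvent}. Pairing $\psi$ against $T^{1/2}$, one gets $\langle\psi,T^{1/2}[(S_n+t)^{-1}-(S+t)^{-1}]\cdots\rangle$. Lemma \ref{diff of two resolvent} rewrites the resolvent difference as a product involving $K_{S_n,T}-K_{S,T}$ after inserting $T$: the difference $(S_n+t)^{-1}-(S+t)^{-1}$ equals
\[
S_n^{1/2}(S_n+t)^{-1}D_n\,(K_{S,T}-K_{S_n,T})\,\overline{T^{1/2}S^{-1/2}}\,S^{1/2}(S+t)^{-1}
\]
(up to sign). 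This follows by subtracting the two expressions from Lemma \ref{diff of two resolvent} and using the form computation with $\langle B\cdot,B\cdot\rangle$ directly between $S_n$ and $S$. Each remaining factor is then square-integrable in $t$ by Lemma \ref{gen int formula1}: factors like $T^{1/2}S_n^{1/2}(S_n+t)^{-1}$, controlled by $\|D_n\|$ via the square-function bound $\int_0^\infty\|S^{1/2}(S+t)^{-1}\psi\|^2\,\mathrm{d}t\le C\|\psi\|^2$. The extra factor of $t^{p/2}$ must be absorbed using the $T^{(1-p)/2}$ weight, exactly as in Lemma \ref{int rep of K}. Taking the supremum over unit $\psi,\phi$ then gives a bound by $\mathrm{const}\cdot\|K_{S_n,T}-K_{S,T}\|\to0$, which completes the proof.
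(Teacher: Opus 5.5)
Your proof is correct, but it is organized differently from the paper's.

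\textbf{Uniform bound.} You work with the pair $(S_n,T)$: from $1+K_{S,T}\geq\|S^{-1/2}T^{1/2}\|^{-2}$ and norm convergence you get $(c/2)T\preceq S_n$ for large $n$, and you invert with Lemma \ref{form op ine inverse}. The paper instead passes to the pair $(S_n,S)$. It observes that $K_{S_n,S}=\overline{S^{-1/2}T^{1/2}}\,(K_{S_n,T}-K_{S,T})\,\overline{T^{1/2}S^{-1/2}}\to0$, deduces $\sup_n\|S_n^{-1/2}S^{1/2}\|<\infty$ from $(K_{S_n,S}+1)^{-1}\to1$, and multiplies by $\|S^{-1/2}T^{1/2}\|$.

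\textbf{Norm convergence.} The paper never splits into two terms. It writes $\langle\psi,(L_{S_n^p,T^p}-L_{S^p,T^p})\phi\rangle=\langle\psi,(S_n^{p/2}-S^{p/2})T^{-p/2}\phi\rangle$, applies Lemma \ref{integral rep power p ver2} to $S_n$ and to $S$, and uses Lemma \ref{diff of two resolvent} for the pair $(S_n,S)$. This produces a single term carrying $K_{S_n,S}$. The weight $t^{p/2}$ is absorbed by $S^{(1-p)/2}(S+t)^{-1}$ together with the boundedness of $S^{p/2}T^{-p/2}$ (Proposition \ref{weak eq rel p}, i.e.\ Heinz).

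Your resolvent identity for $(S_n+t)^{-1}-(S+t)^{-1}$ is exactly this $(S_n,S)$-identity rewritten through the formula for $K_{S_n,S}$ above, so the key idea is shared. What each route buys:
\begin{itemize}
\item Your version reuses Lemma \ref{int rep of K} and avoids Heinz, since the weight sits in $Y(t)$, but it needs two estimates.
\item The paper's version is shorter and gives a clean bound in terms of $\|K_{S_n,S}\|$.
\end{itemize}

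\textbf{Two steps to state more carefully.}
\begin{itemize}
\item The identity does not come from subtracting two instances of Lemma \ref{diff of two resolvent}. It comes from the direct form computation $\langle Bu,Bv\rangle-\langle B_nu,B_nv\rangle$ with $u=T^{1/2}(S_n+t)^{-1}\psi$ and $v=T^{1/2}(S+t)^{-1}\phi$.
\item In the second term, not every factor is square-integrable in $t$. The factor $T^{1/2}S_n^{1/2}(S_n+t)^{-1}=\overline{T^{1/2}S_n^{-1/2}}\,S_n(S_n+t)^{-1}$ is only bounded by $\|D_n\|$, uniformly in $t$. The Cauchy--Schwarz step pairs $S^{1/2}(S+t)^{-1}\psi$ (weight $t^0$) against $Y(t)\phi$ (weight $t^p$). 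Up to the factor $\sin(\pi p/2)/\pi$, this gives the bound $(C^{(p)})^{1/2}\|D_n\|^2\|\overline{S^{-1/2}T^{1/2}}\|\,\|K_{S,T}\|\,\|K_{S_n,T}-K_{S,T}\|\,\|\psi\|\|\phi\|$.
\end{itemize}
Incidentally, the same computation shows $X_n(t)\to X(t)$ in norm for each fixed $t>0$, so the obstacle you anticipated is really integrability in $t$, not pointwise convergence.
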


\begin{proof}
By direct computation, it holds that
	\begin{equation}\label{conti of K wrt G lim S_nS0}
	K_{S_n,S}=\overline{S^{-1/2}T^{1/2}}K_{S_n,T}\overline{T^{1/2}S^{-1/2}}+K_{T,S}
	\xrightarrow{n\to\infty}\overline{S^{-1/2}T^{1/2}}K_{S,T}\overline{T^{1/2}S^{-1/2}}+K_{T,S}=0,
	\end{equation}
	where the convergence is in the operator norm.
	Thus we have
	\[
	\overline{S^{1/2}S_n^{-1/2}}\cdot\left(\overline{S^{1/2}S_n^{-1/2}}\right)^*
	=\overline{S^{1/2}S_n^{-1/2}}\cdot\overline{S_n^{-1/2}S^{1/2}}
	=\left(K_{S_n,S}+1\right)^{-1}
	\xrightarrow{n\to\infty} 1
	\]
	in the operator norm.
	In particular, we obtain
	 \begin{equation}\label{conti of K wrt G supS_nS}
		\sup_{n\in\mathbb{N}}\|S_n^{-1/2}S^{1/2}\|=\sup_{n\in\mathbb{N}}\|(S_n^{-1/2}S^{1/2})^*\|=\sup_{n\in\mathbb{N}}\|S^{1/2}S_n^{-1/2}\|<\infty.
	\end{equation}
	The inequality \eqref{sup of S_n^{-1/2}T^{1/2}} follows from
	\[
	\sup_{n\in\mathbb{N}}\|S_n^{-1/2}T^{1/2}\|\leq\sup_{n\in\mathbb{N}}\|S_n^{-1/2}S^{1/2}\|\|S^{-1/2}T^{1/2}\|<\infty.
	\]
	For any $\psi,\phi\in\mathrm{dom}(T^{p/2})\cap\mathrm{dom}(T^{-p/2})$, Lemma \ref{integral rep power p ver2} yields that
	\begin{equation}\label{conti of K wrt G integral rep K1}
		\langle\psi,L_{S_n^p,T^p}\phi\rangle-\langle\psi,L_{S^p,T^p}\phi\rangle
		=\langle \psi,(S_n^{p/2}-S^{p/2})T^{-p/2}\phi\rangle
		=\frac{\sin{(\pi p/2)}}{\pi}\int_0^\infty \frac{f_n(t)}{t^{1-p/2}}\,\mathrm{d}t,
	\end{equation}
	where
	\[
	f_n(t):=\langle \psi,S_n(S_n+t)^{-1}T^{-p/2}\phi\rangle-\langle \psi,S(S+t)^{-1}T^{-p/2}\phi\rangle,\qquad t>0.
	\]
	Lemma \ref{diff of two resolvent} tells us that the function $f_n$ can be computed as 
	\begin{align*}
		f_n(t)&=-t\langle \psi,(S_n+t)^{-1}T^{-p/2}\phi\rangle+t\langle \psi,(S+t)^{-1}T^{-p/2}\phi\rangle\\
		&=t\langle \psi,S_n^{1/2}(S_n+t)^{-1}\overline{S_n^{-1/2}S^{1/2}}K_{S_n,S}S^{1/2}(S+t)^{-1}T^{-p/2}\phi\rangle,
	\end{align*}
and thus we obtain
	\begin{equation*}
	|f_n(t)|\leq t\|S_n^{1/2}(S_n+t)^{-1}\psi\|\|S_n^{-1/2}S^{1/2}\|\|K_{S_n,S}\|\|S^{(1-p)/2}(S+t)^{-1}\cdot S^{p/2}T^{-p/2}\phi\|.
\end{equation*}
By Lemma \ref{gen int formula1}, we have
\begin{equation*}
	\int_0^\infty\frac{|f_n(t)|}{t^{1-p/2}}\,\mathrm{d}t
	\leq (C^{(p)})^{1/2}\|S_n^{-1/2}S^{1/2}\|\|K_{S_n,S}\|\|\psi\|\|S^{p/2}T^{-p/2}\phi\|.
\end{equation*}
Combining this with \eqref{conti of K wrt G lim S_nS0}, \eqref{conti of K wrt G supS_nS}, and \eqref{conti of K wrt G integral rep K1}, we obtain
\[
\|L_{S_n^p,T^p}-L_{S^p,T^p}\|\leq (C^{(p)})^{1/2}\|S_n^{-1/2}S^{1/2}\|\|K_{S_n,S}\|\|S^{p/2}T^{-p/2}\|\xrightarrow{n\to\infty}0.
\]
This finishes the proof.
\end{proof}

\begin{proof}[Proof of Theorem \ref{main thm ver2}]
	Let $T=\int_0^\infty\lambda\,\mathrm{d}E_T(\lambda)$ be the spectral resolution of $T.$
	We write $K_{S,T}$ as
	\[
	K_{S,T}=\sum_n\lambda_n|f_n\rangle\langle f_n|,
	\]
	where $\{\lambda_n\}_{n=1}^\infty$ is a square-summable sequence of real numbers and $\{f_n\}_{n=1}^\infty$ is an orthonormal basis of $\mathscr{H}.$
	For each $N\in\mathbb{N},$ we denote by $P_N$ the orthogonal projection onto the subspace spanned by $f_1,\dots, f_N.$
	Let
	\begin{align*}
		U_{N}&:=\sum_{n=1}^N\lambda_n|T^{1/2}E_T([1/N,N])f_n\rangle\langle T^{1/2}E_T([1/N,N])f_n|\\
		&\ =T^{1/2}E_T([1/N,N])P_NK_{S,T}P_NT^{1/2}E_T([1/N,N]),
	\end{align*}
	and let
	\[
	S_N:=T+U_N.
	\]
	Since the operator
	\[
	K_{S,T}+1=\overline{T^{-1/2}S^{1/2}}\cdot\overline{S^{1/2}T^{-1/2}}=\left(\overline{S^{1/2}T^{-1/2}}\right)^*\overline{S^{1/2}T^{-1/2}}
	\]
	is non-negative and has a bounded inverse, 
	we can take a negative real constant $\gamma<0$ such that
	\[
	\inf\sigma(K_{S,T})>\gamma>-1.
	\]
	Then, for any $\psi\in\mathrm{dom}(T)$, it follows that
	\begin{align*}
		\langle\psi,S_N\psi\rangle
		&=\|T^{1/2}\psi\|^2+\langle P_NE_T([1/N,N])T^{1/2}\psi,K_{S,T}P_NE_T([1/N,N])T^{1/2}\psi\rangle\\
		&\geq\|T^{1/2}\psi\|^2+\gamma\|P_NE_T([1/N,N])T^{1/2}\psi\|^2
		\geq\|T^{1/2}\psi\|^2+\gamma\|T^{1/2}\psi\|^2\\
		&=(1+\gamma)\|T^{1/2}\psi\|^2,
	\end{align*}
	whence $S_N\in\mathcal{S\!A}_+(\mathscr{H})$.
	Moreover it holds that
	\begin{equation}\label{appx S_N equivalence1}
		(1+\gamma)T\preceq S_N.
	\end{equation}
	On the other hand, for any $\psi\in\mathrm{dom}(T)$, we have
	\begin{align*}
		\langle\psi,S_N\psi\rangle
		&=\|T^{1/2}\psi\|^2+\langle P_NE_T([1/N,N])T^{1/2}\psi,K_{S,T}P_NE_T([1/N,N])T^{1/2}\psi\rangle\\
		&\leq\|T^{1/2}\psi\|^2+\|K_{S,T}\|\|P_NE_T([1/N,N])T^{1/2}\psi\|^2\\
		&=(1+\|K_{S,T}\|)\|T^{1/2}\psi\|^2.
	\end{align*}
	This, together with the inequality \eqref{appx S_N equivalence1}, implies that
	\[
	(1+\gamma)T\preceq S_N\preceq(1+\|K_{S,T}\|)T,
	\]
	whence $S_N\overset{\mathrm{w}}{\sim} T.$
	
	We next compute $K_{S_N,T}.$ 
	Note that we have $\mathrm{dom}(S_N)=\mathrm{dom}(T)$,
	so for any $\psi,\phi\in\mathrm{dom}(T^{1/2})\cap\mathrm{dom}(T^{-1/2}),$ we have 
	\begin{align*}
		\langle\psi,K_{S_N,T}\phi\rangle
		&=\langle S_N^{1/2}T^{-1/2}\psi,S_N^{1/2}T^{-1/2}\phi\rangle-\langle\psi,\phi\rangle\\
		&=\langle T^{-1/2}\psi,S_NT^{-1/2}\phi\rangle-\langle\psi,\phi\rangle\\
		&=\langle \psi,E_T([1/N,N])P_NK_{S,T}P_NE_T([1/N,N])\phi\rangle,
	\end{align*}
	and thus we get
	\begin{equation}\label{main thm G_{T,S_N} formula}
		K_{S_N,T}=E_T([1/N,N])P_NK_{S,T}P_NE_T([1/N,N]).
	\end{equation}
	This, in particular, means that 
	\begin{equation}\label{main thm appx S_N HS norm ineq}
		\|K_{S_N,T}\|_{\mathrm{HS}}\leq\|K_{S,T}\|_{\mathrm{HS}},\qquad \forall N\in\mathbb{N}.
	\end{equation}
	Since both $\{E_T([1/N,N])\}_{N=1}^\infty$ and $\{P_N\}_{N=1}^\infty$ converge to the identity operator in the strong operator topology,
	and since $K_{S,T}$ is a compact operator, 
	it follows from the equality \eqref{main thm G_{T,S_N} formula} that $\{K_{S_N,T}\}_{N=1}^\infty$ converges to $K_{S,T}$ in the operator norm.
	This, together with Lemma \ref{conti of K wrt G}, yields that $\{L_{S_N^p,T^p}\}_{N=1}^\infty$ converges to $L_{S^p,T^p}$  in the operator norm.
	Moreover, we have
	\begin{equation}\label{uniform bddness ofS_N^{-1/2}T^{1/2}}
	M:=\sup_{N\in\mathbb{N}}\|S_N^{-1/2}T^{1/2}\|<\infty.
	\end{equation}
	
	We now use Lemma \ref{HS norm is lower semi-continuous}, Lemma \ref{main thm trace class ver}, the inequality \eqref{uniform bddness ofS_N^{-1/2}T^{1/2}} and the inequality \eqref{main thm appx S_N HS norm ineq} to get
	\begin{align*}
		\|L_{S^p,T^p}\|_\mathrm{HS}
		&\leq\liminf_{N\to\infty}\|L_{S_N^p,T^p}\|_\mathrm{HS}
		\leq\liminf_{N\to\infty}\frac{\sin{(\pi p/2)}}{\pi}(C^{(p)})^{1/2}\|S_N^{-1/2}T^{1/2}\|\|K_{S_N,T}\|_\mathrm{HS}\\
		&\leq\frac{\sin{(\pi p/2)}}{\pi}(C^{(p)})^{1/2}M\|K_{S,T}\|_\mathrm{HS}<\infty.
	\end{align*}
	Therefore $L_{S^p,T^p}$ is Hilbert--Schmidt.
\end{proof}

The main result of this section is the following, which is analogous to Proposition \ref{weak eq rel p}.

\begin{theorem}\label{main thm1}
	Let $S,T\in\mathcal{S\!A}_+(\mathscr{H})$ satisfy $S\sim T,$ and let $p\in(0,1].$
	Then $S^p\sim T^p$ and $S^{-p}\sim T^{-p}$.
\end{theorem}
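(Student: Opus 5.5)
The case $p=1$ is trivial for $S^p\sim T^p$, since it is just the hypothesis. So the plan is first to treat $S^p\sim T^p$ for $p\in(0,1)$, and then to deduce the statements for negative powers $S^{-p}\sim T^{-p}$ (including $p=1$).

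\textbf{Positive powers.} Fix $p\in(0,1)$. By Proposition \ref{weak eq rel p} we have $S^p\overset{\mathrm{w}}{\sim} T^p$. By Theorem \ref{main thm ver2}, the operator $L_{S^p,T^p}$ is Hilbert--Schmidt. Proposition \ref{def of L}, applied to the pair $(S^p,T^p)$, then gives $S^p\sim T^p$.

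\textbf{Negative powers.} Here the plan is to reduce to the positive case through an algebraic identity. First, Proposition \ref{weak eq rel p} gives $S^{-p}\overset{\mathrm{w}}{\sim} T^{-p}$ for every $p\in(0,1]$, so only the Hilbert--Schmidt property of $K_{S^{-p},T^{-p}}$ remains. Write $A:=\overline{S^{1/2}T^{-1/2}}$, so that $K_{S,T}=A^*A-1$. By definition,
\[
K_{S^{-1},T^{-1}}=\overline{T^{1/2}S^{-1/2}}\cdot\overline{S^{-1/2}T^{1/2}} .
\]
Using Proposition \ref{weq iff} (2)--(4), this equals $(A^*)^{-1}A^{-1}-1=(AA^*)^{-1}-1$. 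The identity can be checked on the dense set $\mathrm{dom}(S^{\pm1/2})\cap\mathrm{dom}(T^{\pm1/2})$ and then extended by boundedness.

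Next, $AA^*-1$ is Hilbert--Schmidt if and only if $A^*A-1$ is. Indeed, $A$ is invertible, so $AA^*-1=A(A^*A-1)A^{-1}$. Finally, $(AA^*)^{-1}-1=-(AA^*)^{-1}(AA^*-1)$ is Hilbert--Schmidt because $(AA^*)^{-1}$ is bounded. This yields $S^{-1}\sim T^{-1}$ whenever $S\sim T$. The identification of closures and adjoints in these identities is the only delicate point, and Proposition \ref{weq iff} is designed exactly for it.

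Once $S^{-1}\sim T^{-1}$ is established, the first step applied to $S^{-1},T^{-1}$ gives $S^{-p}=(S^{-1})^p\sim (T^{-1})^p=T^{-p}$ for $p\in(0,1)$. The case $p=1$ has already been covered.

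I expect the main obstacle to be the rigorous operator identity $K_{S^{-1},T^{-1}}+1=(AA^*)^{-1}$, that is, keeping track of domains and closures. Everything else is a direct combination of earlier results.
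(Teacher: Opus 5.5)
Your strategy matches the paper's proof. Positive powers come from Proposition~\ref{weak eq rel p}, Theorem~\ref{main thm ver2} and Proposition~\ref{def of L}. Negative powers come from the inversion step $S\sim T\Rightarrow S^{-1}\sim T^{-1}$. The paper inverts after taking powers, applying Lemma~\ref{eq rel implies the inverse} to $S^p\sim T^p$. You invert first and then apply the positive case to $S^{-1},T^{-1}$, which works equally well.

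However, the identity you flag as the delicate point is wrong: you have swapped the two factors. Put $A=\overline{S^{1/2}T^{-1/2}}$. Proposition~\ref{weq iff}~(3) gives $\overline{T^{1/2}S^{-1/2}}=A^{-1}$. By (2), $A^*=\overline{T^{-1/2}S^{1/2}}$, and then (4) gives $\overline{S^{-1/2}T^{1/2}}=(A^*)^{-1}$. Hence
\[
K_{S^{-1},T^{-1}}+1=\overline{T^{1/2}S^{-1/2}}\cdot\overline{S^{-1/2}T^{1/2}}=A^{-1}(A^*)^{-1}=(A^*A)^{-1}=(K_{S,T}+1)^{-1},
\]
not $(A^*)^{-1}A^{-1}=(AA^*)^{-1}$. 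Since $A$ need not be normal (it is when $S$ and $T$ commute, but not in general), $(AA^*)^{-1}$ and $(A^*A)^{-1}$ differ in general. As written, you therefore prove the Hilbert--Schmidt property of an operator other than $K_{S^{-1},T^{-1}}$. Your first display is also missing the $-1$.

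The repair is immediate and shortens your argument. With the correct identity,
\[
K_{S^{-1},T^{-1}}=(K_{S,T}+1)^{-1}-1=-K_{S,T}(K_{S,T}+1)^{-1},
\]
which is Hilbert--Schmidt directly: $K_{S,T}$ is Hilbert--Schmidt and $(K_{S,T}+1)^{-1}=A^{-1}(A^*)^{-1}$ is bounded. The similarity step $AA^*-1=A(A^*A-1)A^{-1}$ is then unnecessary. This is exactly the paper's Lemma~\ref{eq rel implies the inverse}. Nor do you need a check on a dense set followed by extension: Proposition~\ref{weq iff}~(2)--(4) already identifies these closures as everywhere-defined bounded operators with the stated inverses and adjoints.
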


\begin{lemma}\label{eq rel implies the inverse}
		Let $S,T\in\mathcal{S\!A}_+(\mathscr{H})$ satisfy $S\sim T.$
		Then $S^{-1}\sim T^{-1}$ holds with 
		\[
		K_{S^{-1},T^{-1}}=-K_{S,T}(K_{S,T}+1)^{-1}.
		\]
\end{lemma}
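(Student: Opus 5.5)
We need $S^{-1}\sim T^{-1}$ and the formula $K_{S^{-1},T^{-1}}=-K_{S,T}(K_{S,T}+1)^{-1}$.

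\emph{Weak equivalence.} By Proposition \ref{weak eq rel p} with $p=1$, $S^{-1}\overset{\mathrm{w}}{\sim}T^{-1}$. By definition,
\[
K_{S^{-1},T^{-1}}=\overline{T^{1/2}S^{-1/2}}\cdot\overline{S^{-1/2}T^{1/2}}-1 .
\]
Set $A:=\overline{S^{1/2}T^{-1/2}}$. Proposition \ref{weq iff}(2) gives $A^*=\overline{T^{-1/2}S^{1/2}}$, so $K_{S,T}+1=A^*A$. Proposition \ref{weq iff}(3) gives that $A$ has a bounded inverse with $A^{-1}=\overline{T^{1/2}S^{-1/2}}$. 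Taking adjoints in (2) (the second identity), $\overline{S^{-1/2}T^{1/2}}=(\overline{T^{1/2}S^{-1/2}})^*=(A^{-1})^*$.

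\emph{Formula for the kernel.} Substituting these,
\[
K_{S^{-1},T^{-1}}+1=A^{-1}(A^{-1})^*=(A^*A)^{-1}=(K_{S,T}+1)^{-1}.
\]
Hence
\[
K_{S^{-1},T^{-1}}=(K_{S,T}+1)^{-1}-1=-K_{S,T}(K_{S,T}+1)^{-1},
\]
which is the claimed formula.

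\emph{Hilbert--Schmidt property.} The Hilbert--Schmidt operators form an ideal, $K_{S,T}$ is Hilbert--Schmidt, and $(K_{S,T}+1)^{-1}$ is bounded. Therefore $K_{S^{-1},T^{-1}}$ is Hilbert--Schmidt, and so $S^{-1}\sim T^{-1}$.

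\emph{Main obstacle.} The only delicate step is the adjoint identity $\overline{S^{-1/2}T^{1/2}}=(A^{-1})^*$. Since $\overline{T^{1/2}S^{-1/2}}$ is bounded, its adjoint is bounded, and Proposition \ref{weq iff}(2) says $(S^{-1/2}T^{1/2})^*=\overline{T^{1/2}S^{-1/2}}$. Taking adjoints of both sides gives $\overline{S^{-1/2}T^{1/2}}$, using that the double adjoint of a closable operator is its closure. This justifies the identity.
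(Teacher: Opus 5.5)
Your proof is correct and follows the paper's argument. The paper likewise identifies $(K_{S,T}+1)^{-1}$ with $\overline{T^{1/2}S^{-1/2}}\cdot\overline{S^{-1/2}T^{1/2}}$ and concludes by the ideal property of Hilbert--Schmidt operators. You additionally justify this identity through $A=\overline{S^{1/2}T^{-1/2}}$, and you make explicit the weak equivalence $S^{-1}\overset{\mathrm{w}}{\sim}T^{-1}$ (Proposition~\ref{weak eq rel p}), which the paper leaves implicit.
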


\begin{proof}
Since
\[
(K_{S,T}+1)^{-1}=\overline{T^{1/2}S^{-1/2}}\cdot\overline{S^{-1/2}T^{1/2}},
\]
we have
\[
K_{S^{-1},T^{-1}}=\overline{T^{1/2}S^{-1/2}}\cdot\overline{S^{-1/2}T^{1/2}}-1=(K_{S,T}+1)^{-1}-1=-K_{S,T}(K_{S,T}+1)^{-1}.
\]
The right-hand side is Hilbert--Schmidt, and thus $S^{-1}\sim T^{-1}$. 
\end{proof}

\begin{proof}[Proof of Theorem \ref{main thm1}]
	The case $p=1$ follows from Lemma \ref{eq rel implies the inverse}.
	In the following, we consider the case $p\in(0,1).$ 
	Theorem \ref{main thm ver2} and Proposition \ref{def of L} yield that $S^p\sim T^p$. 
	Combining this with Lemma \ref{eq rel implies the inverse}, we obtain $S^{-p}\sim T^{-p}$. 
\end{proof}

The following corollary generalizes \cite[Lemma 3.4]{MR4213757}.

\begin{corollary}\label{gener of MSU}
	Let $S,T\in\mathcal{S\!A}_+(\mathscr{H})$ satisfy $S^2\sim T^2.$
	Define a bounded operator $Y$ by
	\[
	Y:=\frac{1}{2}\left(\overline{T^{-1/2}S^{1/2}}-\overline{T^{1/2}S^{-1/2}}\right).
	\] 
	Then, $Y$ is Hilbert--Schmidt.
\end{corollary}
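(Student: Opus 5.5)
From $S^2\sim T^2$, Theorem \ref{main thm1} with $p=1/4$ gives $S^{1/2}\sim T^{1/2}$. We also get $S\overset{\mathrm{w}}{\sim}T$ from Proposition \ref{weak eq rel p} with $p=1/2$. Hence all the closures appearing in $Y$ are bounded operators with bounded inverses, by Proposition \ref{weq iff} applied to the pair $S,T$.

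Set
\[
A:=\overline{S^{1/2}T^{-1/2}},
\]
a bounded operator with bounded inverse. By Proposition \ref{weq iff}(2),(3) we have
\[
A^*=\overline{T^{-1/2}S^{1/2}},\qquad (A^*)^{-1}=\overline{T^{1/2}S^{-1/2}} .
\]
Therefore
\[
Y=\tfrac12\bigl(A^*-(A^*)^{-1}\bigr)=\tfrac12(A^*)^{-1}\bigl(AA^*-1\bigr).
\]
So it suffices to show that $AA^*-1$ is Hilbert--Schmidt. Equivalently, since $A^*A$ and $AA^*$ are unitarily similar off their kernels (both are invertible here), it suffices to show that $A^*A-1$ is Hilbert--Schmidt.

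Now
\[
A^*A=\overline{T^{-1/2}S^{1/2}}\cdot\overline{S^{1/2}T^{-1/2}},
\]
so $A^*A-1=K_{S,T}$. The defining condition of $S\sim T$ (Definition \ref{def of equiv rel on sa}) is exactly that $K_{S,T}$ is Hilbert--Schmidt. Thus we need $S\sim T$, which follows from $S^2\sim T^2$ by Theorem \ref{main thm1} with $p=1/2$. (Only this step is needed; the $p=1/4$ remark above is not used.)

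Then $AA^*-1=A(A^*A-1)A^{-1}=AK_{S,T}A^{-1}$ is Hilbert--Schmidt. Multiplying by the bounded operator $\tfrac12(A^*)^{-1}$ shows that $Y$ is Hilbert--Schmidt.

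The main care point is the operator-domain bookkeeping. One must verify that $A^*$ and $(A^*)^{-1}$ coincide with the closures written in $Y$; this is supplied by Proposition \ref{weq iff}(2),(3) for the weakly equivalent pair $S,T$. One must also verify the factorization
\[
A^*-(A^*)^{-1}=(A^*)^{-1}(AA^*-1),
\]
which is a purely algebraic identity among bounded operators once $A$ is known to be invertible.
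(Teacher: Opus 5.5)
Your strategy works, but one identification is wrong, and it invalidates the identity you call purely algebraic. Proposition~\ref{weq iff}(3) says that $\overline{T^{1/2}S^{-1/2}}$ is the inverse of $A=\overline{S^{1/2}T^{-1/2}}$ itself. The inverse of $A^*=\overline{T^{-1/2}S^{1/2}}$ is $\overline{S^{-1/2}T^{1/2}}$, by (4). Hence $Y=\tfrac12(A^*-A^{-1})$, not $\tfrac12(A^*-(A^*)^{-1})$. Your identity $A^*-(A^*)^{-1}=(A^*)^{-1}(AA^*-1)$ is false unless $A=A^*$: its right-hand side is $(A^*)^{-1}AA^*-(A^*)^{-1}$, and $A=A^*$ fails in general when $S$ and $T$ do not commute.

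The repair costs one line: $A^*-A^{-1}=A^{-1}(AA^*-1)=(A^*A-1)A^{-1}$, so
\[
Y=\tfrac12\,K_{S,T}\,\overline{T^{1/2}S^{-1/2}},
\]
which is Hilbert--Schmidt once $S\sim T$ is known. The detour through $AA^*-1$ and unitary similarity is then unnecessary. If you read your symbol $(A^*)^{-1}$ as the concrete operator $\overline{T^{1/2}S^{-1/2}}$, your final formula for $Y$ is in fact correct; only its justification is wrong.

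With this correction, your proof is a genuinely shorter route than the paper's. Both arguments get $S\sim T$ from Theorem~\ref{main thm1} with $p=1/2$, which is where the real work lies. The paper then computes $4YY^*=K_{S,T}+K_{S^{-1},T^{-1}}$ and invokes Lemma~\ref{eq rel implies the inverse} to obtain $4YY^*=K_{S,T}^2(K_{S,T}+1)^{-1}$, which is trace class.

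Your factorization bypasses Lemma~\ref{eq rel implies the inverse} and the trace-class step. It is just the adjoint of the identity in the Remark after Definition~\ref{def of equiv rel on sa}: $2Y^*$ is exactly the operator $\overline{S^{1/2}T^{-1/2}}-\overline{S^{-1/2}T^{1/2}}$ appearing there. So it shows at once that, for weakly equivalent $S,T$, $Y$ is Hilbert--Schmidt if and only if $K_{S,T}$ is, with $\|Y\|_{\mathrm{HS}}\le\frac12\|T^{1/2}S^{-1/2}\|\,\|K_{S,T}\|_{\mathrm{HS}}$. The paper's identity instead gives the exact value $\|Y\|_{\mathrm{HS}}^2=\frac14\,\mathrm{tr}\bigl(K_{S,T}^2(K_{S,T}+1)^{-1}\bigr)$.
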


\begin{proof}
	Note that, by Theorem \ref{main thm1}, we have $S\sim T.$
	This in particular implies that $S\overset{\mathrm{w}}{\sim} T,$ and thus $Y$ is well-defined.
	We observe that
	\[
	4YY^*
	=\overline{T^{-1/2}S^{1/2}}\cdot\overline{S^{1/2}T^{-1/2}}+\overline{T^{1/2}S^{-1/2}}\cdot\overline{S^{-1/2}T^{1/2}}-2
	=K_{S,T}+K_{S^{-1},T^{-1}}.
	\]
	By Lemma \ref{eq rel implies the inverse}, we have
	\[
	4YY^*
	=K_{S,T}-K_{S,T}(K_{S,T}+1)^{-1}
	=(K_{S,T})^2(K_{S,T}+1)^{-1}.
	\]
	Since $K_{S,T}$ is Hilbert--Schmidt, $YY^*$ is of trace class.
	Hence $Y$ is Hilbert--Schmidt.
\end{proof}

The following proposition gives a necessary condition for two self-adjoint operators to be equivalent.

\begin{proposition}\label{eq necessary condition HS}
	Let $S,T\in\mathcal{S\!A}_+(\mathscr{H})$ satisfy $S\sim T,$ and let $z\in\mathbb{C}$ be such that $z\not\in\sigma(S)\cup\sigma(T)$.
	Then, the operator
	\[
	(S-z)^{-1}-(T-z)^{-1}
	\]
	is Hilbert--Schmidt.
	In particular, we have $\sigma_\mathrm{ess}(S)=\sigma_\mathrm{ess}(T).$
\end{proposition}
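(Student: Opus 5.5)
The plan is to read off Hilbert--Schmidt-ness directly from the resolvent identity of Lemma \ref{diff of two resolvent}. Since $S\sim T$ implies $S\overset{\mathrm{w}}{\sim} T$, that lemma applies for every $z\notin\sigma(S)\cup\sigma(T)$ and gives
\[
(S-z)^{-1}-(T-z)^{-1}=-S^{1/2}(S-z)^{-1}\,\overline{S^{-1/2}T^{1/2}}\,K_{S,T}\,T^{1/2}(T-z)^{-1}.
\]
The middle factor $\overline{S^{-1/2}T^{1/2}}$ is bounded by Proposition \ref{weq iff}, and $K_{S,T}$ is Hilbert--Schmidt by the definition of $S\sim T$. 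The Hilbert--Schmidt class is a two-sided ideal in $\mathcal{B}(\mathscr{H})$. So it suffices to check that the outer factors $S^{1/2}(S-z)^{-1}$ and $T^{1/2}(T-z)^{-1}$ belong to $\mathcal{B}(\mathscr{H})$.

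For this I will use functional calculus. The function $\lambda\mapsto\lambda^{1/2}/(\lambda-z)$ is bounded on $\sigma(S)\subset[0,\infty)$. It is continuous on the closed set $\sigma(S)$, since $z$ is at positive distance from it, and it decays like $\lambda^{-1/2}$ at infinity. Hence $S^{1/2}(S-z)^{-1}$ is everywhere defined and bounded, because $\mathrm{ran}((S-z)^{-1})=\mathrm{dom}(S)\subset\mathrm{dom}(S^{1/2})$. The same argument applies to $T$. This proves the first assertion.

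For the essential spectrum, fix a nonreal $z$, say $z=\mathrm{i}$. Then $(S-z)^{-1}-(T-z)^{-1}$ is Hilbert--Schmidt, hence compact. Weyl's theorem in its resolvent form then gives $\sigma_\mathrm{ess}(S)=\sigma_\mathrm{ess}(T)$: if the resolvents of two self-adjoint operators at one nonreal point differ by a compact operator, their essential spectra coincide. Concretely, compactness of the difference gives $\sigma_\mathrm{ess}((S-z)^{-1})=\sigma_\mathrm{ess}((T-z)^{-1})$, and the spectral mapping $\lambda\mapsto(\lambda-z)^{-1}$ maps $\sigma_\mathrm{ess}(S)$ bijectively onto $\sigma_\mathrm{ess}((S-z)^{-1})\setminus\{0\}$. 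The same holds for $T$.

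The only real point to watch is this spectral-mapping step for normal bounded operators together with Weyl's theorem on compact perturbations. Both are standard, and I would cite them, for instance from the Appendix or from Reed--Simon, rather than reprove them. Everything else is the ideal property combined with Lemma \ref{diff of two resolvent}.
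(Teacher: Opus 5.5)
Your proposal is correct and matches the paper's proof: the paper gets the Hilbert--Schmidt property from the factorization in Lemma \ref{diff of two resolvent}, using that $K_{S,T}$ is Hilbert--Schmidt and the other factors are bounded, and then applies Lemma \ref{lem;ess_spec} for the equality of essential spectra. Your functional-calculus check that $S^{1/2}(S-z)^{-1}$ and $T^{1/2}(T-z)^{-1}$ are bounded fills in a detail the paper leaves implicit.
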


\begin{proof}
This follows from Lemma \ref{diff of two resolvent} and Lemma \ref{lem;ess_spec}.
\end{proof}

\begin{corollary}\label{S^{-2}-T^{-2} is HS}
	Let $S,T\in\mathcal{S\!A}_+(\mathscr{H})$ satisfy $S\sim T.$
	Suppose that $S$ and $T$ have bounded inverses.
	Then $S^{-2}-T^{-2}$ is Hilbert--Schmidt.
	In particular, we have $\sigma_\mathrm{ess}(S^2)=\sigma_\mathrm{ess}(T^2).$
\end{corollary}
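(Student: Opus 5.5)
The plan is to pass from $S\sim T$ to the inverses via Lemma \ref{eq rel implies the inverse}, and then apply Proposition \ref{eq necessary condition HS} at the point $z=0$, which lies in the resolvent sets. Since $S$ and $T$ have bounded inverses, $S^{-1}$ and $T^{-1}$ are bounded, injective and non-negative, so they belong to $\mathcal{S\!A}_+(\mathscr{H})$. Lemma \ref{eq rel implies the inverse} gives $S^{-1}\sim T^{-1}$.

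Next we choose where to evaluate the resolvent difference. Since $\inf\sigma(S)>0$ and $\inf\sigma(T)>0$, we have $0\notin\sigma(S)\cup\sigma(T)$. Proposition \ref{eq necessary condition HS}, applied to the pair $(S,T)$ with $z=0$, then shows that $S^{-1}-T^{-1}$ is Hilbert--Schmidt.

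To reach the squares, we write
\[
S^{-2}-T^{-2}=S^{-1}(S^{-1}-T^{-1})+(S^{-1}-T^{-1})T^{-1}.
\]
Both terms are products of a bounded operator with a Hilbert--Schmidt operator, so $S^{-2}-T^{-2}$ is Hilbert--Schmidt. This step actually uses only the first part of Proposition \ref{eq necessary condition HS}; the inverse relation $S^{-1}\sim T^{-1}$ is not needed for it.

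For the essential spectra, we use that a Hilbert--Schmidt operator is compact. Weyl's theorem (Lemma \ref{lem;ess_spec}) then gives $\sigma_\mathrm{ess}(S^{-2})=\sigma_\mathrm{ess}(T^{-2})$. The remaining step is to transfer this to $S^2$ and $T^2$ through the map $\lambda\mapsto\lambda^{-1}$. The operators $S^{-2}$ and $T^{-2}$ are bounded with $0$ in neither spectrum, since $S^2$ and $T^2$ have bounded inverses. By the spectral mapping theorem for the continuous bijection $\lambda\mapsto1/\lambda$ on $(0,\infty)$, $\lambda>0$ lies in $\sigma_\mathrm{ess}(S^2)$ exactly when $1/\lambda$ lies in $\sigma_\mathrm{ess}(S^{-2})$, because isolated eigenvalues of finite multiplicity correspond under this map. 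The same holds for $T$. Hence $\sigma_\mathrm{ess}(S^2)=\sigma_\mathrm{ess}(T^2)$.

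The only point that needs care is this spectral mapping step for the essential spectrum. It is standard: the spectral projections of $S^2$ and $S^{-2}$ on corresponding sets coincide, and that settles it.
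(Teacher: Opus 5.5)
Your Hilbert--Schmidt argument is exactly the paper's: apply Proposition \ref{eq necessary condition HS} at $z=0$, then use the splitting $S^{-2}-T^{-2}=S^{-1}(S^{-1}-T^{-1})+(S^{-1}-T^{-1})T^{-1}$. As you note yourself, the step through Lemma \ref{eq rel implies the inverse} is not needed.

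For the essential spectra the paper is more direct. It reads $S^{-2}-T^{-2}$ as the resolvent difference $(S^2-0)^{-1}-(T^2-0)^{-1}$, where $0\notin\sigma(S^2)\cup\sigma(T^2)$. It then applies Lemma \ref{lem;ess_spec} to $S^2$ and $T^2$ in one step, so no spectral-mapping argument is required.

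Your detour through $S^{-2}$ and $T^{-2}$ is valid, but one claim in it is false. It is not true in general that $0$ lies outside $\sigma(S^{-2})$. A bounded inverse of $S^2$ only makes $S^{-2}$ bounded. In fact $0\in\sigma(S^{-2})$ whenever $S$ is unbounded, which is precisely the case of interest, $S^2=-\Delta+m^2+V$. Since $S^{-2}$ is injective, $0$ is then not an eigenvalue, so $0\in\sigma_{\mathrm{ess}}(S^{-2})$.

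The conclusion still holds, because $\sigma_{\mathrm{ess}}(S^2)\subset(0,\infty)$ and you only use the correspondence $\lambda\leftrightarrow 1/\lambda$ for $\lambda>0$. You should state that step as $\sigma_{\mathrm{ess}}(S^2)=\{1/\mu\mid \mu\in\sigma_{\mathrm{ess}}(S^{-2})\setminus\{0\}\}$, and likewise for $T$. Alternatively, drop the detour and argue as the paper does.
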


\begin{proof}
	We have
	\[
	S^{-2}-T^{-2}=S^{-1}(S^{-1}-T^{-1})+(S^{-1}-T^{-1})T^{-1}.
	\]
	By Proposition \ref{eq necessary condition HS}, the right-hand side is Hilbert--Schmidt.
	Moreover, since $0\not\in\sigma(S^2)\cup\sigma(T^2)$, we can write
	\[
	S^{-2}-T^{-2}=(S^2-0)^{-1}-(T^2-0)^{-1}.
	\]
    Hence, by Lemma \ref{lem;ess_spec}, we obtain $\sigma_\mathrm{ess}(S^2)=\sigma_\mathrm{ess}(T^2)$.
\end{proof}

\subsection{A related question}
Let $S,T\in\mathcal{S\!A}_+(\mathscr{H})$ satisfy $S^2\overset{\mathrm{w}}{\sim} T^2.$
Then, it follows from Theorem \ref{main thm1} that $S^2\sim T^2$ implies $S\sim T.$
The converse may also be true:

\begin{question}
	Let $S,T\in\mathcal{S\!A}_+(\mathscr{H})$ satisfy $S^2\overset{\mathrm{w}}{\sim} T^2$ and $S\sim T.$
	Is it true that $S^2\sim T^2$? 
\end{question}

The reason we expect this to be true is that it holds in the following two special cases.

\begin{proposition}
		Let $S,T\in\mathcal{S\!A}_+(\mathscr{H})$ satisfy $S^2\overset{\mathrm{w}}{\sim} T^2$ and $S\sim T.$
		If $S$ and $T$ strongly commute, meaning that their spectral projections $E_{S}(B)$ and $E_{T}(B)$ commute for all Borel sets $B$ of $\mathbb{R},$ then $S^2\sim T^2$ holds.
\end{proposition}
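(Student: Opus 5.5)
Because $S$ and $T$ strongly commute, they admit a joint spectral measure $E$ on $[0,\infty)^2$. The plan is to use the joint functional calculus to express $K_{S,T}$ and $K_{S^2,T^2}$ as functions of the pair $(S,T)$, and then to exhibit the second as the first multiplied by a bounded operator. Concretely, $S=\int s\,\mathrm{d}E(s,t)$ and $T=\int t\,\mathrm{d}E(s,t)$. Since both are injective, $E$ is concentrated on $(0,\infty)^2$, so every power $S^{\alpha}$ and $T^{\beta}$ with $\alpha,\beta\in\mathbb{R}$ is given by $\int s^\alpha t^\beta$-type integrals against the same measure.

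The first step is to identify the closures. Take $\psi\in\mathrm{dom}(T^{-1/2})$. Then $S^{1/2}T^{-1/2}\psi=\int (s/t)^{1/2}\,\mathrm{d}E\,\psi$. The operator $S^{1/2}T^{-1/2}$ is bounded by Proposition \ref{weq iff}, so the function $(s/t)^{1/2}$ is $E$-essentially bounded. Hence
\[
\overline{S^{1/2}T^{-1/2}}=\int (s/t)^{1/2}\,\mathrm{d}E .
\]
The same argument gives $\overline{T^{-1/2}S^{1/2}}=\int (s/t)^{1/2}\,\mathrm{d}E$, and therefore
\[
K_{S,T}=\int\Big(\frac{s}{t}-1\Big)\,\mathrm{d}E .
\]
Repeating this with $S^2,T^2$ in place of $S,T$, which is legitimate because $S^2\overset{\mathrm{w}}{\sim}T^2$, shows that $s/t$ is $E$-essentially bounded and that
\[
K_{S^2,T^2}=\int\Big(\frac{s^2}{t^2}-1\Big)\,\mathrm{d}E .
\]
In fact the boundedness of $s/t$ already follows from the first step.

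The second step is the factorization
\[
\frac{s^2}{t^2}-1=\Big(\frac{s}{t}-1\Big)\Big(\frac{s}{t}+1\Big).
\]
The functional calculus is multiplicative on bounded functions, so
\[
K_{S^2,T^2}=K_{S,T}\,B,\qquad B:=\int\Big(\frac{s}{t}+1\Big)\,\mathrm{d}E\in\mathcal{B}(\mathscr{H}).
\]
Here $K_{S,T}$ is Hilbert--Schmidt because $S\sim T$, and $B$ is bounded, so the product is Hilbert--Schmidt. Together with the hypothesis $S^2\overset{\mathrm{w}}{\sim}T^2$, this gives $S^2\sim T^2$.

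The only delicate point is the first step: verifying that the closures of the products of unbounded operators coincide with the joint spectral integrals. This requires domain bookkeeping. Specifically, one must check that $\mathrm{dom}(T^{-1/2})$, and likewise $\mathrm{dom}(S^{1/2})$, is a core for the bounded multiplication operator, and that on it the product agrees with the integral. I would handle both by noting that vectors of the form $E(B)\psi$, where $B$ is a bounded set bounded away from the axes, are dense and lie in every relevant domain, and that on such vectors everything reduces to the multiplicativity of the functional calculus.
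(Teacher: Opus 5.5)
Your argument is correct and is essentially the paper's: the paper uses strong commutativity to identify $K_{S,T}=\overline{T^{-1/2}S^{1/2}}\cdot\overline{S^{1/2}T^{-1/2}}-1=\overline{ST^{-1}}-1=L_{S^2,T^2}$ and then applies Proposition~\ref{def of L}, whose proof in this commuting situation reduces to your factorization $K_{S^2,T^2}=K_{S,T}\bigl(\overline{ST^{-1}}+1\bigr)$. Your joint-spectral-measure treatment of the closures makes explicit what the paper asserts without detail.
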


\begin{proof}
	Since $S$ and $T$ strongly commute, we have
	\[
	K_{S,T}=\overline{T^{-1/2}S^{1/2}}\cdot\overline{S^{1/2}T^{-1/2}}-1=\overline{ST^{-1}}-1=L_{S^2,T^2}.
	\]
	Thus, $L_{S^2,T^2}$ is Hilbert--Schmidt.
	By Proposition \ref{def of L}, we conclude that $S^2\sim T^2.$
\end{proof}

\begin{proposition}
	Let $S,T\in\mathcal{S\!A}_+(\mathscr{H})$ satisfy $S^2\overset{\mathrm{w}}{\sim} T^2$ and $S\sim T.$
	If $T$ is bounded and has a bounded inverse, then $S^2\sim T^2$ holds.
\end{proposition}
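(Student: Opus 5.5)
Since $T$ is bounded with a bounded inverse, all the operators involved are bounded and everywhere defined. The plan is therefore to reduce the Hilbert--Schmidt condition for $K_{S^2,T^2}$ to the Hilbert--Schmidt property of the difference $S^2-T^2$, and then to derive the latter from $S\sim T$.

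\emph{Step 1: $S$ is bounded with a bounded inverse.} Apply Proposition \ref{weq iff} to $S^2\overset{\mathrm{w}}{\sim} T^2$. Condition (i) gives $\mathrm{dom}(S)=\mathrm{dom}(T)=\mathscr{H}$, and condition (ii) gives $\mathrm{dom}(S^{-1})=\mathrm{dom}(T^{-1})=\mathscr{H}$. Since $S$ and $S^{-1}$ are closed and everywhere defined, the closed graph theorem yields $S,S^{-1}\in\mathcal{B}(\mathscr{H})$. By functional calculus, $S^{\pm1/2}$ and $T^{\pm 1/2}$ are then also bounded. Consequently every closure appearing in the definition of $K$ is simply the product of the corresponding bounded operators.

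\emph{Step 2: the formula for $K_{S^2,T^2}$.} By definition and Step 1,
\[
K_{S^2,T^2}=T^{-1}S\cdot ST^{-1}-1=T^{-1}(S^2-T^2)T^{-1}.
\]
Because $T^{-1}$ is bounded, it suffices to show that $S^2-T^2$ is Hilbert--Schmidt.

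\emph{Step 3: $S-T$ is Hilbert--Schmidt.} Similarly, $K_{S,T}=T^{-1/2}ST^{-1/2}-1$, which can be rewritten as
\[
S-T=T^{1/2}K_{S,T}T^{1/2}.
\]
The operator $K_{S,T}$ is Hilbert--Schmidt because $S\sim T$, and $T^{1/2}$ is bounded. Hence $S-T$ is Hilbert--Schmidt.

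\emph{Step 4: conclusion.} Write
\[
S^2-T^2=S(S-T)+(S-T)T.
\]
Since $S$ and $T$ are bounded and the Hilbert--Schmidt class is a two-sided ideal, the right-hand side is Hilbert--Schmidt. Therefore $K_{S^2,T^2}$ is Hilbert--Schmidt. Together with the hypothesis $S^2\overset{\mathrm{w}}{\sim}T^2$, this gives $S^2\sim T^2$.

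The only step requiring care is Step 1, namely checking that the weak equivalence transfers boundedness and bounded invertibility from $T$ to $S$. This makes all closures trivial. After that, the argument is purely algebraic.
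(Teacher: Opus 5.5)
Your proof is correct and is essentially the paper's argument. The paper notes that $L_{S^2,T^2}=ST^{-1}-1=T^{1/2}K_{S,T}T^{-1/2}=(S-T)T^{-1}$ is Hilbert--Schmidt and invokes Proposition \ref{def of L}, whose identity $K_{S^2,T^2}=T^{-1}S\,L_{S^2,T^2}+L_{S^2,T^2}^*$ is exactly your decomposition $T^{-1}\bigl(S(S-T)+(S-T)T\bigr)T^{-1}$; your Step 1 also makes explicit the boundedness of $S^{\pm1}$, which the paper uses tacitly when it simplifies the closures.
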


\begin{proof}
	Since both $T^{1/2}$ and $T^{-1/2}$ are bounded, we observe that
	\[
	T^{1/2}K_{S,T}T^{-1/2}=T^{1/2}\overline{T^{-1/2}S^{1/2}}\cdot\overline{S^{1/2}T^{-1/2}}T^{-1/2}-1=ST^{-1}-1=L_{S^2,T^2}
	\]
	is Hilbert--Schmidt.
	Thus, by Proposition \ref{def of L}, we conclude that $S^2\sim T^2$.
\end{proof}

\section{Subspaces spread by self-adjoint operators}\label{Subspaces spread by self-adjoint operators}
In this section, we introduce a tool for constructing Weyl representations.
Let $\mathscr{H}$ be an infinite-dimensional, separable complex Hilbert space.

\begin{definition}\label{def of spread subsp}
	Let $T\in\mathcal{S\!A}_+(\mathscr{H}),$ and let $\mathscr{D}$ be a complex subspace of $\mathscr{H}.$
	We say that $\mathscr{D}$ is \textit{spread} by $T$ if the following two conditions hold:
	\begin{itemize}
	\item $\mathscr{D}\subset \mathrm{dom}(T^{1/2})\cap\mathrm{dom}(T^{-1/2}),$
	\item $T^{1/2}\mathscr{D}$ and $T^{-1/2}\mathscr{D}$ are dense in $\mathscr{H}.$
	\end{itemize}
\end{definition}

\begin{example}
	Let $T\in\mathcal{S\!A}_+(\mathscr{H}).$
	Then $\mathscr{D}:= \mathrm{dom}(T^{1/2})\cap\mathrm{dom}(T^{-1/2})$ is spread by $T.$
	Indeed, we have
	\[
	T^{\pm1/2}\mathscr{D}\supset T^{\pm1/2}\cdot T^{\mp1/2}\mathrm{dom}(T^{\mp1})=\mathrm{dom}(T^{\mp1}),
	\]
	and thus $T^{\pm1/2}\mathscr{D}$ is dense in $\mathscr{H}.$
\end{example}

The spread property is preserved under the weak equivalence relation.

\begin{proposition}\label{spread preserve we}
	Let $S,T\in\mathcal{S\!A}_+(\mathscr{H})$ satisfy $S\overset{\mathrm{w}}{\sim} T,$ and let  $\mathscr{D}$ be a complex subspace of $\mathscr{H}.$
	If $\mathscr{D}$ is spread by $T,$ then it is also spread by $S.$
\end{proposition}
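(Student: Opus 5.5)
The plan is to check the two conditions of Definition \ref{def of spread subsp} for $S$ in turn, using Proposition \ref{weq iff} to move between $T$ and $S$.

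\emph{Domain condition.} Since $S\overset{\mathrm{w}}{\sim} T$, Proposition \ref{weq iff} (i) and (ii) give
\[
\mathrm{dom}(S^{1/2})=\mathrm{dom}(T^{1/2}),\qquad \mathrm{dom}(S^{-1/2})=\mathrm{dom}(T^{-1/2}).
\]
Hence
\[
\mathscr{D}\subset\mathrm{dom}(T^{1/2})\cap\mathrm{dom}(T^{-1/2})=\mathrm{dom}(S^{1/2})\cap\mathrm{dom}(S^{-1/2}),
\]
which is the first condition.

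\emph{Density of $S^{1/2}\mathscr{D}$.} For $f\in\mathscr{D}$ we have $f=T^{-1/2}(T^{1/2}f)$ with $T^{1/2}f\in\mathrm{dom}(T^{-1/2})$. Therefore
\[
S^{1/2}f=S^{1/2}T^{-1/2}(T^{1/2}f)=\overline{S^{1/2}T^{-1/2}}\,(T^{1/2}f).
\]
This shows $S^{1/2}\mathscr{D}=B\,T^{1/2}\mathscr{D}$, where $B:=\overline{S^{1/2}T^{-1/2}}$. By Proposition \ref{weq iff} (3), $B$ is a bounded operator with bounded inverse, so it is a linear homeomorphism of $\mathscr{H}$ onto itself. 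A homeomorphism onto $\mathscr{H}$ maps dense subsets to dense subsets, and $T^{1/2}\mathscr{D}$ is dense by assumption. Hence $S^{1/2}\mathscr{D}$ is dense.

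\emph{Density of $S^{-1/2}\mathscr{D}$.} The same argument applies with $B':=\overline{S^{-1/2}T^{1/2}}$. For $f\in\mathscr{D}$ we have $f=T^{1/2}(T^{-1/2}f)$ with $T^{-1/2}f\in\mathrm{dom}(T^{1/2})$, so
\[
S^{-1/2}f=B'\,(T^{-1/2}f).
\]
Since $B'$ is bounded and bijective with bounded inverse by Proposition \ref{weq iff} (4), and $T^{-1/2}\mathscr{D}$ is dense, $S^{-1/2}\mathscr{D}$ is dense.

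The only point needing care is the domain bookkeeping: one must check that $T^{\pm1/2}f$ lies in $\mathrm{dom}(T^{\mp1/2})$, so that the unclosed product $S^{\pm1/2}T^{\mp1/2}$ is defined there and agrees with its closure. This holds because $T^{\mp1/2}T^{\pm1/2}f=f$. With that in place the argument is complete.
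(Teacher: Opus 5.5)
Your proposal is correct and matches the paper's proof: you get the domain condition from Proposition \ref{weq iff} (i)--(ii), and density from $S^{\pm1/2}\mathscr{D}=\overline{S^{\pm1/2}T^{\mp1/2}}\,T^{\pm1/2}\mathscr{D}$, using that these closures are bounded with bounded inverse by Proposition \ref{weq iff} (3)--(4). Your check that $T^{\pm1/2}f\in\mathrm{dom}(T^{\mp1/2})$ makes explicit a step the paper leaves implicit.
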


\begin{proof}
	By Proposition \ref{weq iff}, we have
	\[
	\mathscr{D}\subset \mathrm{dom}(T^{1/2})\cap\mathrm{dom}(T^{-1/2})=\mathrm{dom}(S^{1/2})\cap\mathrm{dom}(S^{-1/2}),
	\]
	and thus the first condition in Definition \ref{def of spread subsp} holds.
	
	We next show the second condition.
	It follows from Proposition \ref{weq iff} that $\overline{S^{\pm 1/2}T^{\mp1/2}}$ is bounded and has a bounded inverse, whence
	\[
	S^{\pm 1/2}\mathscr{D}=\overline{S^{\pm 1/2}T^{\mp1/2}}\cdot T^{\pm1/2}\mathscr{D}
	\]
	is dense in $\mathscr{H}.$
	This finishes the proof.
\end{proof}

Recall that the symbol $\mathcal{S\!A}_{C,\mathscr{V}}(\mathscr{H})$ for a real subspace $\mathscr{V}$ of $\mathscr{H}_C$ is defined in Subsection \ref{subsec weyl}.

\begin{proposition}\label{D_J is dense}
	Let $C$ be a conjugation on $\mathscr{H}.$
	Let $T\in\mathcal{S\!A}_+(\mathscr{H})$ satisfy $CT\subset TC,$
	and let $\mathscr{D}$ be a complex subspace of $\mathscr{H}$ satisfying $C\mathscr{D}\subset\mathscr{D}.$
	Suppose that $\mathscr{D}$ is spread by $T.$
	Then $T^{1/2}$ belongs to $\mathcal{S\!A}_{C,\mathscr{D}_C}(\mathscr{H}).$ 
\end{proposition}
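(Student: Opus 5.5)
We have to check the conditions for $T^{1/2}\in\mathcal{S\!A}_{C,\mathscr{D}_C}(\mathscr{H})$ with $\mathscr{V}=\mathscr{D}_C$. These are: $T^{1/2}$ is injective and self-adjoint; $CT^{1/2}\subset T^{1/2}C$; $\mathscr{D}_C\subset\mathrm{dom}(T^{1/2})\cap\mathrm{dom}(T^{-1/2})$; and $T^{1/2}\mathscr{D}_C$, $T^{-1/2}\mathscr{D}_C$ are dense in $\mathscr{H}_C$. The first is immediate, since $T\in\mathcal{S\!A}_+(\mathscr{H})$. The domain inclusion follows from $\mathscr{D}_C\subset\mathscr{D}$ and the first condition of Definition \ref{def of spread subsp}.

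For the commutation property, I would transfer $CT\subset TC$ to functions of $T$. Since $C$ is an antiunitary involution, $CTC\subset T$. Because $CTC$ is itself self-adjoint and self-adjoint operators are maximal, this gives $CTC=T$. Conjugating the resolvent then yields $C(T-z)^{-1}C=(T-\bar z)^{-1}$. From this, $CE_T(B)C=E_T(B)$ for all Borel sets $B$, for instance via Stone's formula, or via the functional calculus for real-valued functions. Consequently $Cf(T)C=\bar f(T)$ for every Borel function $f$, including domains. With $f(\lambda)=\lambda^{\pm1/2}$, which is real, this gives $CT^{\pm1/2}C=T^{\pm1/2}$. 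In particular $CT^{1/2}\subset T^{1/2}C$, and $T^{\pm1/2}$ maps $\mathscr{D}_C$ into $\mathscr{H}_C$.

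The density step is the main point. Take $g\in\mathscr{H}_C$ and $\varepsilon>0$. Since $T^{1/2}\mathscr{D}$ is dense in $\mathscr{H}$, choose $f\in\mathscr{D}$ with $\|T^{1/2}f-g\|<\varepsilon$. Set $f_0:=(f+Cf)/2$. Then $f_0\in\mathscr{D}$ because $C\mathscr{D}\subset\mathscr{D}$ and $\mathscr{D}$ is a complex (hence real) subspace, and $Cf_0=f_0$, so $f_0\in\mathscr{D}_C$. By the previous step $T^{1/2}Cf=CT^{1/2}f$, hence $T^{1/2}f_0=(T^{1/2}f+CT^{1/2}f)/2$. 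Using $Cg=g$ and that $C$ is isometric,
\[
\|T^{1/2}f_0-g\|\le\tfrac12\|T^{1/2}f-g\|+\tfrac12\|C(T^{1/2}f-g)\|<\varepsilon.
\]
So $T^{1/2}\mathscr{D}_C$ is dense in $\mathscr{H}_C$. The same argument applied to $T^{-1/2}$, using the density of $T^{-1/2}\mathscr{D}$, gives the density of $T^{-1/2}\mathscr{D}_C$. This completes the verification.

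The only delicate part is the functional-calculus step $CT^{\pm1/2}C=T^{\pm1/2}$. I would justify it by the maximality argument for $CTC=T$, followed by the resolvent identity, which determines the spectral measure.
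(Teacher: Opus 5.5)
Your proof is correct and follows the paper's argument: you approximate by some $f\in\mathscr{D}$, symmetrize it to $(f+Cf)/2\in\mathscr{D}_C$, and then use $CT^{\pm1/2}\subset T^{\pm1/2}C$ together with $Cg=g$ to keep the approximation. The only difference is that you justify the commutation of $C$ with $T^{\pm1/2}$ explicitly, via $CTC=T$ and the invariance of the spectral measure, whereas the paper uses this step without comment.
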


\begin{proof}
	It suffices to show that $T^{1/2}\mathscr{D}_C$ and $T^{-1/2}\mathscr{D}_C$ are dense real subspaces of $\mathscr{H}_C.$
	For any $f\in\mathscr{D}_C$, we have
	\[
	C\cdot T^{\pm1/2}f=T^{\pm1/2}Cf=T^{\pm1/2}f,
	\]
	and thus $T^{\pm1/2}f$ is in $\mathscr{H}_C.$
	In particular, $T^{1/2}\mathscr{D}_C$ and $T^{-1/2}\mathscr{D}_C$ are subsets of $\mathscr{H}_C.$
	It is easy to check that $T^{1/2}\mathscr{D}_C$ and $T^{-1/2}\mathscr{D}_C$ are real subspaces of $\mathscr{H}_C.$
	
	The remaining task is to show that $T^{1/2}\mathscr{D}_C$ and $T^{-1/2}\mathscr{D}_C$ are dense in $\mathscr{H}_C.$
	Let $f\in\mathscr{H}_C$ be arbitrary.
	Since $T^{\pm1/2}\mathscr{D}$ is dense in $\mathscr{H},$ there exists a sequence $\{f_n\}_{n=1}^\infty$ in $\mathscr{D}$ such that $\{T^{\pm1/2}f_n\}_{n=1}^\infty$ converges to $f.$
	Define $g_n:=(f_n+Cf_n)/2.$
	By assumption, we have $g_n\in\mathscr{D}_C.$ 
	Moreover, it follows that
	\[
	T^{\pm1/2}g_n=\frac{1}{2}(T^{\pm1/2}f_n+CT^{\pm1/2}f_n)\xrightarrow{n\to\infty}\frac{1}{2}(f+Cf)=f.
	\]
	This completes the proof.
\end{proof}

\section{Quantization of Klein-Gordon equation with an external potential}\label{Quantization of Klein-Gordon equation with an external potential}

In this section, we first explicitly construct our model and then prove the main results of this paper.

Let $\mathscr{H}:=L^2(\mathbb{R}^3,\mathrm{d}x).$ 
The symbol $\mathcal{S}(\mathbb{R}^3)$ denotes the Schwartz space on $\mathbb{R}^3,$
and the symbol $\mathcal{S}(\mathbb{R}^3;\mathbb{R})$ denotes the set of real-valued Schwartz functions on $\mathbb{R}^3.$
Let $C$ be a conjugation on $\mathscr{H}$ defined by
\[
(Cf)(x):=\overline{f(x)},\qquad f\in\mathscr{H},\ \textrm{a.e.}\,x\in\mathbb{R}^3.
\]
It is easy to check that $C\mathcal{S}(\mathbb{R}^3)\subset\mathcal{S}(\mathbb{R}^3)$ and $\mathcal{S}(\mathbb{R}^3)_C=\mathcal{S}(\mathbb{R}^3;\mathbb{R}).$

Let $m>0$ be a positive real number.
We define a self-adjoint operator $T_{m,0}$ by $T_{m,0}:=\sqrt{-\Delta+m^2},$ where $\Delta$ is the generalized Laplacian that acts in $\mathscr{H}.$
Then $T_{m,0}$ belongs to $\mathcal{S\!A}_+(\mathscr{H}).$
By making use of the Fourier transform, one can show that $\mathcal{S}(\mathbb{R}^3)$ is spread by $T_{m,0}.$

Let $\mathcal{R}$ be the Rollnik class:
\[
\mathcal{R}:=\left\{V:\mathbb{R}^3\to\mathbb{R}\ \text{Borel}\ \middle|\  \int_{\mathbb{R}^3}\int_{\mathbb{R}^3}\frac{|V(x)||V(y)|}{|x-y|^2}\,\mathrm{d}x\,\mathrm{d}y<\infty\right\}.
\]
Note that $\mathcal{R}$ contains $L^{3/2}(\mathbb{R}^3)$ \cite[Theorem I.1]{MR455975}, and that $\mathcal{R}$ is a real vector space \cite[Theorem I.17]{MR455975}.
The symbol $\mathcal{R}+L_\varepsilon^\infty$ denotes the set of all Borel functions $V:\mathbb{R}^3\to\mathbb{R}$ such that
for every $\varepsilon>0,$ there exist $V_1\in\mathcal{R}$ and $ V_2\in L^\infty(\mathbb{R}^3)$ satisfying $V=V_1+V_2$ and $\|V_2\|_{L^\infty(\mathbb{R}^3)}<\varepsilon.$
Similarly, the symbol $L^2+L_\varepsilon^\infty$ denotes the set of real-valued Borel functions $V$ that satisfy the condition in the definition of $\mathcal{R}+L_\varepsilon^\infty,$ with $\mathcal{R}$ replaced by $L^2(\mathbb{R}^3).$
It is known that $L^2+L_\varepsilon^\infty$ is a subset of $\mathcal{R}+L_\varepsilon^\infty.$
The proof can be found in \cite[Remark below Corollary I.2]{MR455975}.

Let $V\in\mathcal{R}+L_\varepsilon^\infty$ be arbitrary.
Then, $V$ is infinitesimally form bounded with respect to $-\Delta.$ 
Hence, by the KLMN theorem, the self-adjoint operator $-\Delta+m^2+V$ is well-defined as a form sum \cite[Corollary II.8]{MR455975}.
Note that its form domain coincides with $\mathrm{dom}(\sqrt{-\Delta}).$ 
Moreover, for any sufficiently large $E>0,$ the difference
\[
(-\Delta+m^2+V+E)^{-1}-(-\Delta+m^2+E)^{-1}
\]
is a compact operator \cite[Example 7 in Section  XIII.4]{MR493421}, from which it follows that
\begin{equation}\label{ess spec of T_{m,V}^2}
	\sigma_\mathrm{ess}(-\Delta+m^2+V)=[m^2,\infty).
\end{equation}

We now assume that $-\Delta+m^2+V$ belongs to $\mathcal{S\!A}_+(\mathscr{H}),$ and let $T_{m,V}$ denote its square root:
\[
T_{m,V}:=\sqrt{-\Delta+m^2+V}.
\]
By \eqref{ess spec of T_{m,V}^2} and the assumption that $(T_{m,V})^2$ is injective, $(T_{m,V})^2$ has a bounded inverse.
This, together with Proposition \ref{we suf cond for massive ops} implies that $(T_{m,V})^2\overset{\mathrm{w}}{\sim} (T_{m,0})^2.$
By Proposition \ref{spread preserve we}, $\mathcal{S}(\mathbb{R}^3)$ is spread by $T_{m,V}.$ 
Since $CT_{m,V}\subset T_{m,V}C$, Proposition \ref{D_J is dense} yields that $T_{m,V}^{1/2}$ belongs to $\mathcal{S\!A}_{C,\mathcal{S}(\mathbb{R}^3;\mathbb{R})}(\mathscr{H}).$
Thus, the triplet
\[
\left\{\mathscr{F}_{\mathrm{b}}(\mathscr{H}),\mathrm{d}\Gamma_\mathrm{b}(T_{m,V}),\{\phi_{T_{m,V}^{1/2}}(f),\pi_{T_{m,V}^{1/2}}(f)\mid f\in\mathcal{S}(\mathbb{R}^3;\mathbb{R})\}\right\}
\]
is an abstract free Bose field model.
It follows from Proposition \ref{AFBFM eq} that the model satisfies the following differential equations:
\[
\frac{\mathrm{d}}{\mathrm{d}t}\phi_{T_{m,V}^{1/2}}(t,f)\Psi=\pi_{T_{m,V}^{1/2}}(t,f)\Psi,\qquad \frac{\mathrm{d}^2}{\mathrm{d}t^2}\phi_{T_{m,V}^{1/2}}(t,f)\Psi+\phi_{T_{m,V}^{1/2}}(t,(-\Delta+m^2+V)f)\Psi=0
\]
for all $\Psi\in\mathrm{dom}(\mathrm{d}\Gamma_{\mathrm{b}}(1)^{1/2})$ and $f\in\mathcal{S}(\mathbb{R}^3;\mathbb{R}).$
Hence, we have constructed a quantized scalar field model obeying the field equation \eqref{KGeq}.

The aim of this section is to study the Weyl representation
\[
\rho_{m,V}:=\rho_{T_{m,V}^{1/2}}=\left\{\mathscr{F}_{\mathrm{b}}(\mathscr{H}),\{\phi_{T_{m,V}^{1/2}}(f),\pi_{T_{m,V}^{1/2}}(f)\mid f\in\mathcal{S}(\mathbb{R}^3;\mathbb{R})\}\right\}
\]
corresponding to the model.
Our first result generalizes the well-known theorem that $\rho_{m_1,0}$ is not equivalent to $\rho_{m_2,0}$ whenever $m_1\not=m_2$ (see e.g., \cite[Theorem 1.1]{MR3513945}, \cite[Theorem 10.13]{MR4292535} or \cite[Theorem X.46]{MR493420}).
It is stated as follows:

\begin{theorem}\label{m_1not=m_2}
	Let $m_1,m_2>0,$ and let $V_1,V_2\in\mathcal{R}+L_\varepsilon^\infty.$ 
	Suppose that $-\Delta+m_j^2+V_j$ belongs to $\mathcal{S\!A}_+(\mathscr{H})$ for each $j=1,2.$
	Then the following statements hold true:
	\begin{itemize}
		\item[\textup{(1)}] $(T_{m_1,V_1})^2\overset{\mathrm{w}}{\sim} (T_{m_2,V_2})^2,$
		\item[\textup{(2)}] a transfer pair $(J_+,J_-)$ from $T_{m_1,V_1}^{1/2}$ to $T_{m_2,V_2}^{1/2}$ with respect to $(C,\mathcal{S}(\mathbb{R}^3;\mathbb{R}))$ exists, and is given by
		\begin{equation}\label{J_+=T_{m_2,V_2}^{1/2}T_{m_1,V_1}^{-1/2}}
			J_+=T_{m_2,V_2}^{1/2}T_{m_1,V_1}^{-1/2},\qquad J_-=\overline{T_{m_2,V_2}^{-1/2}T_{m_1,V_1}^{1/2}},
		\end{equation}
		\item[\textup{(3)}] $\rho_{m_1,V_1}$ is equivalent to $\rho_{m_2,V_2}$ if and only if $T_{m_1,V_1}\sim T_{m_2,V_2},$
		\item[\textup{(4)}] if $m_1\not=m_2,$ then $\rho_{m_1,V_1}$ is not equivalent to $\rho_{m_2,V_2}.$
	\end{itemize}
\end{theorem}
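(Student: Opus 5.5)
Items (1)–(4) will be proved in order, each one feeding the next.

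For (1), the form domain of $(T_{m_j,V_j})^2=-\Delta+m_j^2+V_j$ is $\mathrm{dom}(\sqrt{-\Delta})$ for both $j=1,2$, by the KLMN construction. Hence $\mathrm{dom}(T_{m_1,V_1})=\mathrm{dom}(T_{m_2,V_2})$. By \eqref{ess spec of T_{m,V}^2} together with injectivity, both squares have bounded inverses. Proposition \ref{we suf cond for massive ops} applied to $S=(T_{m_1,V_1})^2$ and $T=(T_{m_2,V_2})^2$ then gives (1).

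For (2), apply Proposition \ref{weak eq rel p} with $p=1/2$ and $p=1/4$ to get $T_{m_1,V_1}^{1/2}\overset{\mathrm{w}}{\sim}T_{m_2,V_2}^{1/2}$. Since $T_{m_1,V_1}^{-1/2}$ is bounded and $\mathrm{dom}(T_{m_1,V_1}^{1/2})=\mathrm{dom}(T_{m_2,V_2}^{1/2})$, the operator $J_+:=T_{m_2,V_2}^{1/2}T_{m_1,V_1}^{-1/2}$ is everywhere defined and closed, hence bounded by the closed graph theorem. The operator $J_-:=\overline{T_{m_2,V_2}^{-1/2}T_{m_1,V_1}^{1/2}}$ is bounded by Proposition \ref{weq iff} (1). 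For $f\in\mathcal{S}(\mathbb{R}^3;\mathbb{R})\subset\mathrm{dom}(T^{\pm1/2})$, the identities $J_+T_{m_1,V_1}^{1/2}f=T_{m_2,V_2}^{1/2}f$ and $J_-T_{m_1,V_1}^{-1/2}f=T_{m_2,V_2}^{-1/2}f$ are immediate. So $(J_+,J_-)$ is a transfer pair, and it is unique by the remark after its definition.

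For (3), Theorem \ref{eq or ineq weyl repr} reduces equivalence of $\rho_{m_1,V_1}$ and $\rho_{m_2,V_2}$ to $J_+-J_-$ being Hilbert--Schmidt. Put $S=T_{m_2,V_2}$ and $T=T_{m_1,V_1}$, so that $S\overset{\mathrm{w}}{\sim}T$. Then $J_+-J_-=\overline{S^{1/2}T^{-1/2}}-\overline{S^{-1/2}T^{1/2}}$. By the Remark after Definition \ref{def of equiv rel on sa}, this difference equals $\overline{S^{-1/2}T^{1/2}}K_{S,T}$, where $\overline{S^{-1/2}T^{1/2}}$ is boundedly invertible. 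So $J_+-J_-$ is Hilbert--Schmidt if and only if $K_{S,T}$ is, that is, if and only if $T_{m_2,V_2}\sim T_{m_1,V_1}$, which is equivalent to $T_{m_1,V_1}\sim T_{m_2,V_2}$ by symmetry.

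For (4), suppose $T_{m_1,V_1}\sim T_{m_2,V_2}$. Both operators have bounded inverses, so Corollary \ref{S^{-2}-T^{-2} is HS} gives $\sigma_\mathrm{ess}((T_{m_1,V_1})^2)=\sigma_\mathrm{ess}((T_{m_2,V_2})^2)$. By \eqref{ess spec of T_{m,V}^2} this says $[m_1^2,\infty)=[m_2^2,\infty)$, so $m_1=m_2$ since $m_j>0$. Contrapositively, if $m_1\neq m_2$ the operators are not equivalent, and by (3) the representations are inequivalent.

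The only step needing care is (3). One must check that the closure identification of $J_\pm$ agrees with $\overline{S^{\pm1/2}T^{\mp1/2}}$ in the Remark, and that the Remark's factorization is valid here. Everything else is direct citation of earlier results.
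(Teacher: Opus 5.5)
Your proposal is correct and follows essentially the same route as the paper. The differences are cosmetic. In (1), you apply Proposition \ref{we suf cond for massive ops} directly to $(T_{m_1,V_1})^2$ and $(T_{m_2,V_2})^2$ via their common form domain $\mathrm{dom}(\sqrt{-\Delta})$, rather than chaining through $(T_{m_1,0})^2\overset{\mathrm{w}}{\sim}(T_{m_2,0})^2$. In (3), you use the Remark's left factorization $J_+-J_-=\overline{S^{-1/2}T^{1/2}}K_{S,T}$ with $S=T_{m_2,V_2}$, $T=T_{m_1,V_1}$, plus symmetry of $\sim$; the paper instead writes $K_{T_{m_1,V_1},T_{m_2,V_2}}=(J_--J_+)T_{m_1,V_1}^{1/2}T_{m_2,V_2}^{-1/2}$ directly. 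The closure identification you flag as delicate is immediate: $J_+$ is already everywhere defined and bounded, so it coincides with $\overline{S^{1/2}T^{-1/2}}$.
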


\begin{proof}
	We first prove (1).
	By Proposition \ref{we suf cond for massive ops}, we have $(T_{{m_1},0})^2\overset{\mathrm{w}}{\sim} (T_{m_2,0})^2,$ and thus
	\[
	(T_{{m_1},V_1})^2 \overset{\mathrm{w}}{\sim} (T_{{m_1},0})^2 \overset{\mathrm{w}}{\sim} (T_{{m_2},0})^2 \overset{\mathrm{w}}{\sim} (T_{{m_2},V_2})^2.
	\]
	
	The operators $J_+$ and $J_-$ defined in \eqref{J_+=T_{m_2,V_2}^{1/2}T_{m_1,V_1}^{-1/2}} satisfy the conditions in the definition of a transfer pair, whence (2) follows.
	
	We next prove (3).
	Since
	\[
	K_{T_{{m_1},V_1},T_{{m_2},V_2}}=(J_--J_+)T_{m_1,V_1}^{1/2}T_{m_2,V_2}^{-1/2},
	\]
	and since $T_{m_1,V_1}^{1/2}T_{m_2,V_2}^{-1/2}$ has a bounded inverse, $T_{m_1,V_1}\sim T_{m_2,V_2}$ if and only if the difference $J_+-J_-$ is Hilbert--Schmidt.
	Thus (3) follows from Theorem \ref{eq or ineq weyl repr}.

	Finally, we prove the contrapositive of (4).
	Suppose that $\rho_{m_1,V_1}$ is equivalent to $\rho_{m_2,V_2}.$
	By (3), we have $T_{m_1,V_1}\sim T_{m_2,V_2}$.
	This, together with Corollary \ref{S^{-2}-T^{-2} is HS}, yields that the essential spectrum of $(T_{m_1,V_1})^2$ coincides with that of $(T_{m_2,V_2})^2.$
	By \eqref{ess spec of T_{m,V}^2}, we conclude that $m_1=m_2.$
	This completes the proof.
\end{proof}

\begin{remark}
	Arai \cite[Theorem 6.5]{MR3513945} studied a Weyl representation corresponding to the case where $m_1=0$ and $V_1=0,$ and proved that it is not equivalent to $\rho_{m_2,0}$ for any $m_2>0.$
\end{remark}

In the following, we consider the case where $m_1=m_2>0$.
Let $\mathcal{R}+L^2$ denote the set of all Borel functions $V:\mathbb{R}^3\to\mathbb{R}$ for which there exist $ V_1\in \mathcal{R}$ and $V_2\in L^2(\mathbb{R}^3)$ such that $V=V_1+V_2.$ 
Note that $\mathcal{R}+L^2\subset\mathcal{R}+L_\varepsilon^\infty$, since $L^2+L_\varepsilon^\infty\subset\mathcal{R}+L_\varepsilon^\infty$.

The following theorem generalizes the results of \cite[Lemma 2]{MR356771} and \cite[Lemma 2.2]{MR2840105}.

\begin{theorem}\label{T_{m,V} eq if R+L^2}
	Let $m>0,$ and let $V_1,V_2\in\mathcal{R}+L_\varepsilon^\infty.$ 
	Suppose that $-\Delta+m^2+V_j$ belongs to $\mathcal{S\!A}_+(\mathscr{H})$ for each $j=1,2.$
	If $V_1-V_2$ is in $\mathcal{R}+L^2,$ then $\rho_{m,V_1}$ is equivalent to $\rho_{m,V_2}.$ 
\end{theorem}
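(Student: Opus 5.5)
By Theorem \ref{m_1not=m_2} (3), it suffices to show that $T_{m,V_1}\sim T_{m,V_2}$. Theorem \ref{main thm1} with $p=1/2$ shows this follows from $(T_{m,V_1})^2\sim (T_{m,V_2})^2$. We already know $(T_{m,V_1})^2\overset{\mathrm{w}}{\sim}(T_{m,V_2})^2$ from Theorem \ref{m_1not=m_2} (1). Write $S:=(T_{m,V_1})^2$, $T:=(T_{m,V_2})^2$ and $W:=V_1-V_2$. The task is then to prove that
\[
K_{S,T}=\overline{T^{-1/2}S^{1/2}}\cdot\overline{S^{1/2}T^{-1/2}}-1
\]
is Hilbert--Schmidt.

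First I compute $K_{S,T}$ as a form. For $\psi,\phi\in\mathrm{dom}(T^{-1/2})$, the vectors $T^{-1/2}\psi$ and $T^{-1/2}\phi$ lie in the common form domain $\mathrm{dom}(\sqrt{-\Delta})$. Since $S$ and $T$ are form sums differing by the form of $W$, we get
\[
\langle\psi,K_{S,T}\phi\rangle=\langle S^{1/2}T^{-1/2}\psi,S^{1/2}T^{-1/2}\phi\rangle-\langle\psi,\phi\rangle=w\bigl(T^{-1/2}\psi,T^{-1/2}\phi\bigr),
\]
where $w$ is the quadratic form of $W$. Formally, then, $K_{S,T}=T^{-1/2}WT^{-1/2}$, and the goal is to show this is Hilbert--Schmidt.

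Next I compare with the free operator. Set $T_0:=-\Delta+m^2=(T_{m,0})^2$. Then $T_0\overset{\mathrm{w}}{\sim}T$, so $B:=\overline{T_0^{1/2}T^{-1/2}}$ is bounded by Proposition \ref{weq iff}, and
\[
K_{S,T}=B^*\bigl(T_0^{-1/2}WT_0^{-1/2}\bigr)B .
\]
Hence it suffices to show that $T_0^{-1/2}WT_0^{-1/2}$, defined as a form and extended to a bounded operator, is Hilbert--Schmidt. Splitting $W=W_1+W_2$ with $W_1\in\mathcal{R}$ and $W_2\in L^2$, I further factor $W_j=|W_j|^{1/2}\,\mathrm{sgn}(W_j)|W_j|^{1/2}$. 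This gives
\[
T_0^{-1/2}W_jT_0^{-1/2}=\bigl(|W_j|^{1/2}T_0^{-1/2}\bigr)^*\,\mathrm{sgn}(W_j)\,\bigl(|W_j|^{1/2}T_0^{-1/2}\bigr).
\]

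The key step, which I expect to be the main obstacle, is proving Hilbert--Schmidt bounds for these pieces. I would use two different estimates.

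For $W_2\in L^2$, I would not use the factorization. Instead, I would show that $T_0^{-1/2}W_2T_0^{-1/2}$ is Hilbert--Schmidt directly in Fourier space. Its kernel is
\[
(p^2+m^2)^{-1/2}\,\widehat{W_2}(p-q)\,(q^2+m^2)^{-1/2},
\]
and its $L^2$ norm squared is $\int|\widehat{W_2}(k)|^2\bigl(\int (p^2+m^2)^{-1}((p-k)^2+m^2)^{-1}\,dp\bigr)dk$. The inner integral diverges logarithmically in three dimensions, so this direct route fails. I would therefore instead write
\[
T_0^{-1/2}W_2T_0^{-1/2}=T_0^{-1/2}\bigl(T_0^{-1/2}W_2T_0^{-1/2}\bigr)T_0^{1/2}\cdot(\text{correction}),
\]
or, more simply, verify the Hilbert--Schmidt property of the symmetrized $K_{S,T}$ via the resolvent-type kernel $(p^2+m^2)^{-1/2}\widehat{W_2}(p-q)(q^2+m^2)^{-1/2}$ using weighted Schur/Young bounds. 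If that also fails, I would fall back on the needed weaker statement. By Theorem \ref{main thm ver2} and its Remark-type reformulation, only $T_{m,V_1}\sim T_{m,V_2}$ is required, that is, a Hilbert--Schmidt property of $T_0^{-3/4}WT_0^{-3/4}$-type operators. Their kernel $(p^2+m^2)^{-3/4}\widehat{W_2}(p-q)(q^2+m^2)^{-3/4}$ gives a convergent integral $\int (p^2+m^2)^{-3/2}((p-k)^2+m^2)^{-3/2}dp<\infty$, so $\|\cdot\|_{\mathrm{HS}}\lesssim\|W_2\|_{L^2}$.

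For $W_1\in\mathcal{R}$, I use the standard fact that $|W_1|^{1/2}(-\Delta+m^2)^{-1}|W_1|^{1/2}$ has the Hilbert--Schmidt kernel $|W_1(x)|^{1/2}e^{-m|x-y|}|W_1(y)|^{1/2}/(4\pi|x-y|)$, whose norm is bounded by the Rollnik norm. Accordingly, I would organize the argument around $T_{m,V}$ rather than its square. I would show directly that $K_{T_{m,V_1},T_{m,V_2}}$ is Hilbert--Schmidt, using the resolvent formula of Lemma \ref{diff of two resolvent} together with the integral representation of square roots. This reduces the problem to the Rollnik and $L^2$ Hilbert--Schmidt estimates above, weighted in $t$ and integrated in the manner of Lemma \ref{int rep of K}.
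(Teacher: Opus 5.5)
Your opening reduction matches the paper's. Theorem \ref{m_1not=m_2}(3) together with Theorem \ref{main thm1} for $p=1/2$ reduces everything to showing that $K_{S,T}=B^*\bigl(T_0^{-1/2}WT_0^{-1/2}\bigr)B$ is Hilbert--Schmidt, and your Rollnik kernel is the one the paper uses.

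The gap is the $L^2$ piece, which you abandon because of a miscalculation. The integral $\int_{\mathbb{R}^3}(p^2+m^2)^{-1}((p-k)^2+m^2)^{-1}\,\mathrm{d}p$ does \emph{not} diverge in three dimensions, since the integrand is $O(|p|^{-4})$ against $|p|^2\,\mathrm{d}|p|$. By Cauchy--Schwarz it is bounded by $\int_{\mathbb{R}^3}(p^2+m^2)^{-2}\,\mathrm{d}p=\pi^2/m$, uniformly in $k$. So your first route already gives $\|T_0^{-1/2}W_2T_0^{-1/2}\|_{\mathrm{HS}}\lesssim\|W_2\|_{L^2}$. In position space this is just $e^{-2m|x|}/|x|^2\in L^1(\mathbb{R}^3)$ plus Young's inequality, which is the paper's argument.

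None of the replacements you offer works:
\begin{enumerate}[(i)]
\item The ``correction'' factorization is not an actual identity.
\item The claim that $T_{m,V_1}\sim T_{m,V_2}$ only needs a Hilbert--Schmidt property of $T_0^{-3/4}WT_0^{-3/4}$-type operators is false. For strongly commuting $S,T$ one has $K_{S^{1/2},T^{1/2}}=(ST^{-1})^{1/2}-1=K_{S,T}\bigl((ST^{-1})^{1/2}+1\bigr)^{-1}$. The last factor is bounded with bounded inverse, so passing to square roots gains no decay at all.
\item Proving $K_{T_{m,V_1},T_{m,V_2}}$ Hilbert--Schmidt ``directly'' in the manner of Lemma \ref{int rep of K} is only a sketch of re-proving Theorem \ref{main thm ver2}. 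That proof needs the trace-class approximants $S_N$, Fubini, and lower semicontinuity of $\|\cdot\|_{\mathrm{HS}}$. It would also still need the $L^2$ estimate you believed failed.
\end{enumerate}
As submitted, the proof is therefore incomplete.

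The repair is short and keeps your splitting: treat $W_2$ exactly like $W_1$.
\begin{enumerate}[(1)]
\item The kernel of $|W_j|^{1/2}T_0^{-1}|W_j|^{1/2}$ is $\frac{1}{4\pi}|W_j(x)|^{1/2}e^{-m|x-y|}|x-y|^{-1}|W_j(y)|^{1/2}$. Its squared $L^2$ norm is finite by the Rollnik condition for $j=1$ and by Young's inequality for $j=2$.
\item Set $A_j:=\overline{|W_j|^{1/2}T_0^{-1/2}}$. Then $A_jA_j^*\in\mathcal{C}_2$, hence $A_j\in\mathcal{C}_4$.
\item Lemma \ref{noncommutative Holder ineq} gives $T_0^{-1/2}W_jT_0^{-1/2}=A_j^*\,\mathrm{sgn}(W_j)\,A_j\in\mathcal{C}_2$. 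You left this passage from the kernel estimate to $\mathcal{C}_2$ implicit even for $W_1$.
\item Sum over $j$. Each $W_j$ is form bounded with respect to $-\Delta$, so the forms add.
\end{enumerate}
Done this way, your splitting is slightly simpler than the paper's route. The paper factors $|V|^{1/2}$ for the whole $V=V_1-V_2$, so it needs the Fourier-positivity argument in Lemma \ref{lemma for T_{m,V} eq if R+L^2} to show the double-integral condition is stable under addition. Linearity of $W\mapsto T_0^{-1/2}WT_0^{-1/2}$ makes that step unnecessary.
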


\begin{lemma}\label{lemma for T_{m,V} eq if R+L^2}
If $V\in\mathcal{R}+L^2,$ then 
\begin{equation}\label{int of lemma for T_{m,V} eq if R+L^2}
\int_{\mathbb{R}^3}\int_{\mathbb{R}^3}|V(x)|\frac{\mathrm{e}^{-2m|x-y|}}{|x-y|^2}|V(y)|\,\mathrm{d}x\,\mathrm{d}y<\infty.
\end{equation}
\end{lemma}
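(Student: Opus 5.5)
Write $V=V_1+V_2$ with $V_1\in\mathcal{R}$ and $V_2\in L^2(\mathbb{R}^3)$, and set $k(x,y):=\mathrm{e}^{-2m|x-y|}|x-y|^{-2}$. Since $|V|\le|V_1|+|V_2|$ and $k\ge0$, the integral in \eqref{int of lemma for T_{m,V} eq if R+L^2} is bounded by the sum of three integrals:
\[
I_{ab}:=\int\!\!\int|V_a(x)|\,k(x,y)\,|V_b(y)|\,\mathrm{d}x\,\mathrm{d}y,\qquad (a,b)\in\{(1,1),(1,2),(2,2)\},
\]
where $I_{21}=I_{12}$ by symmetry of $k$. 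It therefore suffices to show that each $I_{ab}$ is finite.

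\emph{The Rollnik--Rollnik term.} Since $\mathrm{e}^{-2m|x-y|}\le1$, we have $I_{11}\le\int\!\!\int|V_1(x)||V_1(y)||x-y|^{-2}\,\mathrm{d}x\,\mathrm{d}y$, and this is finite by the definition of $\mathcal{R}$.

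\emph{The $L^2$--$L^2$ term.} The kernel is of convolution type, $k(x,y)=g(x-y)$ with $g(z):=\mathrm{e}^{-2m|z|}|z|^{-2}$. In polar coordinates,
\[
\|g\|_{L^1}=4\pi\int_0^\infty \mathrm{e}^{-2mr}\,\mathrm{d}r=\frac{2\pi}{m}<\infty .
\]
By Young's inequality, $\|g*|V_2|\|_{L^2}\le\|g\|_{L^1}\|V_2\|_{L^2}$, and then Cauchy--Schwarz gives
\[
I_{22}=\langle |V_2|,\,g*|V_2|\rangle\le\frac{2\pi}{m}\|V_2\|_{L^2}^2<\infty .
\]

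\emph{The cross term.} This is the one requiring a small trick. Since $k\ge0$ is symmetric and defines a positive semidefinite quadratic form, Cauchy--Schwarz for that form gives $I_{12}\le I_{11}^{1/2}I_{22}^{1/2}$, and the two factors were bounded above. Positivity of the form holds because the Fourier transform of $g$ is positive: $g$ is the square of the Yukawa potential $\mathrm{e}^{-m|z|}/|z|$, whose Fourier transform is $4\pi/(|\xi|^2+m^2)>0$, so $\hat g$ is a constant multiple of a convolution of positive functions.

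If one prefers to avoid positive-definiteness, use $ab\le\frac12(a^2+b^2)$ in the kernel-weighted form
\[
|V_1(x)||V_2(y)|k(x,y)\le\tfrac12 k(x,y)\Bigl(\lambda|V_1(x)||V_1(y)|+\lambda^{-1}|V_2(x)||V_2(y)|\Bigr),
\]
which comes from the symmetric bilinear Schur-type argument with $\lambda=1$ after exchanging $x\leftrightarrow y$. On its own this is not pointwise valid, which is why the positive-definite route is the primary one. With $I_{11}$, $I_{22}$ and $I_{12}=I_{21}$ all finite, \eqref{int of lemma for T_{m,V} eq if R+L^2} follows.
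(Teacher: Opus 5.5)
Your argument is essentially the paper's. Both proofs rest on positivity of $\widehat g$ for $g(z)=\mathrm{e}^{-2m|z|}/|z|^2$. The paper computes $(2\pi)^{3/2}\widehat{g}(k)=\frac{4\pi}{|k|}\arctan\frac{|k|}{2m}$ explicitly. Your observation that $g$ is the square of a Yukawa potential, so $\widehat g$ is a positive multiple of $\widehat f*\widehat f$ with $\widehat f\propto(|\xi|^2+m^2)^{-1}$, is an equally valid and slicker way to see this. The paper then uses $|a+b|^2\le 2|a|^2+2|b|^2$ in Fourier space where you use Cauchy--Schwarz for the quadratic form; these are interchangeable. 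Your treatment of $I_{11}$ (definition of $\mathcal{R}$) and $I_{22}$ (Young, with $\|g\|_{L^1}=2\pi/m$) also matches the paper.

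There is one step you skip that the paper does not. Positivity of the form $Q(u,v):=\int\!\!\int u(x)g(x-y)v(y)\,\mathrm{d}x\,\mathrm{d}y$ comes from the Plancherel identity $Q(u,u)=(2\pi)^{3/2}\int\widehat g\,|\widehat u|^2\,\mathrm{d}k$. That identity only makes sense for, say, $u\in L^2$. But $|V_1|\in\mathcal{R}$ need not be square-integrable: for example, $|x|^{-9/5}\mathbf{1}_{\{|x|\le1\}}\in L^{3/2}\subset\mathcal{R}$ is not in $L^2$. So ``Cauchy--Schwarz for that form'' is not yet justified for the pair $(|V_1|,|V_2|)$.

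The fix is routine, and it is exactly what the paper does with increasing simple functions and monotone convergence. Set $u_n:=\min(|V_1|,n)\mathbf{1}_{\{|x|\le n\}}\in L^2$. Then $Q(u_n,|V_2|)\le Q(u_n,u_n)^{1/2}Q(|V_2|,|V_2|)^{1/2}\le I_{11}^{1/2}I_{22}^{1/2}$, where the second inequality uses $g\ge0$. Letting $n\to\infty$ by monotone convergence gives the bound on $I_{12}$.

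Finally, delete your last paragraph. As you note yourself, the displayed inequality there is false pointwise, so it proves nothing and only obscures the correct main route.
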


\begin{proof}
	Since the function $g_m:=\mathrm{e}^{-2m|\cdot|}/|\cdot|^2$ is integrable over $\mathbb{R}^3,$ Young's convolution inequality tells us that \eqref{int of lemma for T_{m,V} eq if R+L^2} holds for any $V\in L^2(\mathbb{R}^3).$
	On the other hand, by the definition of the Rollnik class, \eqref{int of lemma for T_{m,V} eq if R+L^2} holds for every $V\in \mathcal{R}.$
	Thus, the remaining task is to show that the set of all Borel functions $V$ satisfying \eqref{int of lemma for T_{m,V} eq if R+L^2} is closed under addition.
	
	Let $V$ be a non-negative Borel function in $L^1(\mathbb{R}^3)\cap L^2(\mathbb{R}^3).$
	Since $L^1(\mathbb{R}^3)\cap L^2(\mathbb{R}^3)$ is a subset of $\mathcal{R}$ \cite[Corollary I.3]{MR455975}, the inequality \eqref{int of lemma for T_{m,V} eq if R+L^2} holds.
	For any $f\in L^1(\mathbb{R}^3),$ we define its Fourier transform $\hat{f}$ by
	\[
	\hat{f}(k):=\frac{1}{(2\pi)^{3/2}}\int_{\mathbb{R}^3}f(x)\mathrm{e}^{-\mathrm{i}x\cdot k}\,\mathrm{d}x,\qquad k\in\mathbb{R}^3.
	\]
	The Fourier transform $\mathscr{F}$ on $L^2(\mathbb{R}^3)$ is defined in the same way, but in the $L^2$-sense.
	Then we obtain
	\begin{align*}
		&\int_{\mathbb{R}^3}\int_{\mathbb{R}^3}|V(x)|\frac{\mathrm{e}^{-2m|x-y|}}{|x-y|^2}|V(y)|\,\mathrm{d}x\,\mathrm{d}y
		=\langle V,g_m*V\rangle_{L^2(\mathbb{R}^3)}
		=\langle \mathscr{F}V,\mathscr{F}(g_m*V)\rangle_{L^2(\mathbb{R}^3)}\\
		&\qquad\qquad\qquad=(2\pi)^{3/2}\langle \mathscr{F}V,\widehat{g_m}(\mathscr{F}V)\rangle_{L^2(\mathbb{R}^3)}
		=(2\pi)^{3/2}\int_{\mathbb{R}^3}\widehat{g_m}(k)|(\mathscr{F}V)(k)|^2\,\mathrm{d}k.
	\end{align*}
Note that
\[
(2\pi)^{3/2}\widehat{g_m}(k)=\frac{4\pi}{|k|}\arctan{\frac{|k|}{2m}}\geq0,\qquad \forall k\in\mathbb{R}^3.
\]

We now consider arbitrary real-valued Borel functions $V_1,V_2$ satisfying \eqref{int of lemma for T_{m,V} eq if R+L^2}.
For each $j=1,2,$ we choose a sequence $\{f_{j,n}\}_{n=1}^\infty$ of integrable non-negative simple functions such that
\[
0\leq f_{j,n}(x)\leq f_{j,n+1}(x)\leq |V_j(x)|,\qquad \lim_{n\to\infty} f_{j,n}(x)=|V_{j}(x)|,\qquad \forall x\in\mathbb{R}^3.
\]
Since each $f_{j,n}$ belongs to $L^1(\mathbb{R}^3)\cap L^2(\mathbb{R}^3),$ the monotone convergence theorem yields that
\begin{align*}
	&\int_{\mathbb{R}^3}\int_{\mathbb{R}^3}|V_1(x)+V_2(x)|\frac{\mathrm{e}^{-2m|x-y|}}{|x-y|^2}|V_1(y)+V_2(y)|\,\mathrm{d}x\,\mathrm{d}y\\
	&\leq\int_{\mathbb{R}^3}\int_{\mathbb{R}^3}\big(|V_1(x)|+|V_2(x)|\big)\frac{\mathrm{e}^{-2m|x-y|}}{|x-y|^2}\big(|V_1(y)|+|V_2(y)|\big)\,\mathrm{d}x\,\mathrm{d}y\\
	&=\lim_{n\to\infty}\int_{\mathbb{R}^3}\int_{\mathbb{R}^3}\big(f_{1,n}(x)+f_{2,n}(x)\big)\frac{\mathrm{e}^{-2m|x-y|}}{|x-y|^2}\big(f_{1,n}(y)+f_{2,n}(y)\big)\,\mathrm{d}x\,\mathrm{d}y\\
	&=\lim_{n\to\infty}	(2\pi)^{3/2}\int_{\mathbb{R}^3}\widehat{g_m}(k)|(\mathscr{F}f_{1,n})(k)+(\mathscr{F}f_{2,n})(k)|^2\,\mathrm{d}k\\
	&\leq\lim_{n\to\infty}(2\pi)^{3/2}\left(2\int_{\mathbb{R}^3}\widehat{g_m}(k)|(\mathscr{F}f_{1,n})(k)|^2\,\mathrm{d}k+2\int_{\mathbb{R}^3}\widehat{g_m}(k)|(\mathscr{F}f_{2,n})(k)|^2\,\mathrm{d}k\right)\\
	&=2\int_{\mathbb{R}^3}\int_{\mathbb{R}^3}|V_1(x)|\frac{\mathrm{e}^{-2m|x-y|}}{|x-y|^2}|V_1(y)|\,\mathrm{d}x\,\mathrm{d}y
	+2\int_{\mathbb{R}^3}\int_{\mathbb{R}^3}|V_2(x)|\frac{\mathrm{e}^{-2m|x-y|}}{|x-y|^2}|V_2(y)|\,\mathrm{d}x\,\mathrm{d}y,
\end{align*}
where the right-hand side is finite.
Thus, $V_1+V_2$ also satisfies \eqref{int of lemma for T_{m,V} eq if R+L^2}.
This completes the proof.
\end{proof}

\begin{remark}
	Under the assumption $V\in\mathcal{R}+L_\varepsilon^\infty,$ the authors do not know whether the condition $V\in\mathcal{R}+L^2$ is necessary for \eqref{int of lemma for T_{m,V} eq if R+L^2} to hold.
\end{remark}

\begin{proof}[Proof of Theorem \ref{T_{m,V} eq if R+L^2}]
	We show that $(T_{m,V_1})^2\sim (T_{m,V_2})^2$ whenever $V_1-V_2$ belongs to $\mathcal{R}+L^2.$
	Once this is established, Theorem \ref{main thm1} implies that $T_{m,V_1}\sim T_{m,V_2},$ and thus, by Theorem \ref{m_1not=m_2}, $\rho_{m,V_1}$ is equivalent to $\rho_{m,V_2}.$
	
	Let $V:=V_1-V_2.$
	For each $j=1,2,$ it follows from the definition of $T_{m,V_j}$ that 
	\[
	\|T_{m,V_j}\psi\|^2=\|\sqrt{-\Delta}\psi\|^2+m^2\|\psi\|^2+\int_{\mathbb{R}^3}V_j(x)|\psi(x)|^2\,\mathrm{d}x,\qquad\forall\psi\in\mathrm{dom}(\sqrt{-\Delta}).
	\]
	Thus, for any $\phi\in\mathscr{H},$ we have
	\begin{align*}
		\langle \phi,K_{(T_{m,V_1})^2,(T_{m,V_2})^2}\phi\rangle
		&=\|T_{m,V_1}T_{m,V_2}^{-1}\phi\|^2- \|T_{m,V_2}T_{m,V_2}^{-1}\phi\|^2
		=\int_{\mathbb{R}^3}V(x)|(T_{m,V_2}^{-1}\phi)(x)|^2\,\mathrm{d}x\\
		&=\langle |V|^{1/2}T_{m,V_2}^{-1}\phi,\mathrm{sgn}(V)|V|^{1/2}T_{m,V_2}^{-1}\phi\rangle,
	\end{align*}
where $\mathrm{sgn}:\mathbb{R}\to\mathbb{R}$ denotes the sign function, defined by
\[
\mathrm{sgn}(x):=\begin{cases}
	1,\qquad & x>0,\\
	0, & x=0,\\
	-1, & x<0. 
\end{cases}
\]
Hence, we obtain
\begin{align}\label{formula of K_{(T_{m,V_1})^2,(T_{m,V_2})^2}}
K_{(T_{m,V_1})^2,(T_{m,V_2})^2}
&=\left(|V|^{1/2}T_{m,V_2}^{-1}\right)^*\mathrm{sgn}(V)\cdot|V|^{1/2}T_{m,V_2}^{-1} \notag\\
&=\left(T_{m,0}T_{m,V_2}^{-1}\right)^*\left(|V|^{1/2}T_{m,0}^{-1}\right)^*\mathrm{sgn}(V)\cdot|V|^{1/2}T_{m,0}^{-1}\cdot T_{m,0}T_{m,V_2}^{-1}.
\end{align}
To prove that $K_{(T_{m,V_1})^2,(T_{m,V_2})^2}$ is Hilbert--Schmidt, we first show that 
\begin{equation}\label{|V|^{1/2}T_{m,0}^{-2}|V|^{1/2} def}
|V|^{1/2}T_{m,0}^{-1}\left(|V|^{1/2}T_{m,0}^{-1}\right)^*=\overline{|V|^{1/2}T_{m,0}^{-2}|V|^{1/2}}
\end{equation}
is Hilbert--Schmidt.
Note that the domain of $|V|^{1/2}T_{m,0}^{-2}|V|^{1/2}$ coincides with that of $|V|^{1/2},$ and thus $|V|^{1/2}T_{m,0}^{-2}|V|^{1/2}$ is densely defined.
Since $\overline{|V|^{1/2}T_{m,0}^{-2}|V|^{1/2}}$ is an integral operator with kernel
\[
k_{m}(x,y):=\frac{1}{4\pi}|V(x)|^{1/2}\frac{\mathrm{e}^{-m|x-y|}}{|x-y|}|V(y)|^{1/2},\qquad x,y\in\mathbb{R}^3,
\]
it is Hilbert--Schmidt if and only if $k_m$ is square-integrable over $\mathbb{R}^3\times\mathbb{R}^3.$
By Lemma \ref{lemma for T_{m,V} eq if R+L^2}, the kernel $k_m$ is indeed square-integrable.
Thus, $\overline{|V|^{1/2}T_{m,0}^{-2}|V|^{1/2}}$ is Hilbert--Schmidt.

By \eqref{|V|^{1/2}T_{m,0}^{-2}|V|^{1/2} def}, we deduce that $\overline{|V|^{1/2}T_{m,0}^{-1}}$ belongs to the set $\mathcal{C}_4(\mathscr{H})$ of Schatten $4$-class operators.
This, together with \eqref{formula of K_{(T_{m,V_1})^2,(T_{m,V_2})^2}} and Lemma \ref{noncommutative Holder ineq}, implies that $K_{(T_{m,V_1})^2,(T_{m,V_2})^2}$ is Hilbert--Schmidt.
This completes the proof.
\end{proof}

In the following theorem, we consider potentials belonging to $L^2+L_\varepsilon^\infty$ rather than $\mathcal{R}+L_\varepsilon^\infty$ for a technical reason.

\begin{theorem}\label{T_{m,V} ineq}
	Let $m>0,$ and let $V_1,V_2\in L^2+L_\varepsilon^\infty.$ 
	Suppose that $-\Delta+m^2+V_j$ belongs to $\mathcal{S\!A}_+(\mathscr{H})$ for each $j=1,2.$
	If there exist a positive real number $R>0$ and a non-increasing Borel function $f:[R,\infty)\to(0,\infty)$ such that
	\[
	\int_R^\infty r^2f(r)^2\,\mathrm{d}r=\infty,\qquad V_1(x)-V_2(x)\geq f(|x|),\qquad\forall \,|x|\geq R,
	\]
	then $\rho_{m,V_1}$ is not equivalent to $\rho_{m,V_2}.$ 
\end{theorem}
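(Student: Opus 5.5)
We argue by contradiction. Suppose $\rho_{m,V_1}$ is equivalent to $\rho_{m,V_2}$. By Theorem \ref{m_1not=m_2} (3) this means $T_{m,V_1}\sim T_{m,V_2}$. Both operators have bounded inverses, so Proposition \ref{eq necessary condition HS} applies. The resolvent difference $T_{m,V_1}^{-1}-T_{m,V_2}^{-1}$ is then Hilbert--Schmidt, and by the argument of Corollary \ref{S^{-2}-T^{-2} is HS} so is
\[
D:=(T_{m,V_1})^{-2}-(T_{m,V_2})^{-2}=-(T_{m,V_1})^{-2}\,V\,(T_{m,V_2})^{-2},\qquad V:=V_1-V_2,
\]
where the last identity is understood in the form sense; it is the second resolvent identity at $z=0$. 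The goal is to show that $D$ cannot be Hilbert--Schmidt under the divergence hypothesis on $f$.

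\textbf{Step 1: reduce to the free resolvent.} Write $D=(T_{m,V_1})^{-2}T_{m,0}^{2}\cdot B\cdot T_{m,0}^{2}(T_{m,V_2})^{-2}$, with $B:=-T_{m,0}^{-2}VT_{m,0}^{-2}$. To peel off the outer factors we need them bounded with bounded inverses. This holds if $\mathrm{dom}((T_{m,V_j})^2)=\mathrm{dom}(-\Delta)=H^2$. Here the choice $V_j\in L^2+L^\infty_\varepsilon$ is used: such potentials are infinitesimally operator bounded relative to $-\Delta$ in three dimensions, so Kato--Rellich gives the domain equality. This is exactly the technical reason for the class. Then $D$ is Hilbert--Schmidt if and only if $B$ is. Moreover $V=V_1-V_2\in L^2+L^\infty$, so $B$ is a bounded operator.

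\textbf{Step 2: show $B\notin\mathcal{C}_2$ by testing.} The idea is to find orthonormal families, or a direct kernel computation, on which $\sum\|Be_n\|^2=\infty$. Split $V=V\chi_{\{|x|<R\}}+V\chi_{\{|x|\ge R\}}$. The inner part is in $L^2$ because $V\in L^2+L^\infty$ is locally $L^2$. For $W\in L^2$, the operator $T_{m,0}^{-2}WT_{m,0}^{-2}$ is Hilbert--Schmidt: it factors through $|W|^{1/2}T_{m,0}^{-2}$, whose kernel has $L^2$ norm at most $\|W\|_{L^1}$-type bounds; more cleanly, the Hilbert--Schmidt norm is computed via Fourier transform as $\int |\hat W(k-q)|^2\langle k\rangle^{-4}\langle q\rangle^{-4}\,dk\,dq<\infty$. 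So we may replace $V$ by $W:=V\chi_{\{|x|\ge R\}}\ge f(|x|)\chi_{\{|x|\ge R\}}=:F\ge0$. For nonnegative $W$, compute $\|B\|_{\mathrm{HS}}^2=\mathrm{tr}(T_{m,0}^{-2}WT_{m,0}^{-4}WT_{m,0}^{-2})$ as the positive-kernel integral
\[
\iint W(x)G(x-y)W(y)\,dx\,dy,\qquad G:=\big(g*g\big)^2\text{-type kernel},
\]
with $g$ the Yukawa kernel $\mathrm{e}^{-m|x|}/(4\pi|x|)$. In fact $\|B\|_{\mathrm{HS}}^2=\iint W(x)W(y)h(x-y)^2\,dx\,dy$, where $h$ is the kernel of $T_{m,0}^{-4}$, positive and continuous with $h(0)>0$. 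Since all integrands are nonnegative, monotonicity gives $\|B\|_{\mathrm{HS}}^2\ge \iint F(x)F(y)h(x-y)^2\,dx\,dy$. This is at least $c\int_{|x|\ge R+1}F(x)\int_{|y-x|<1}F(y)\,dy\,dx$, and because $f$ is non-increasing, $F(y)\ge f(|x|+1)$ on that ball. Hence
\[
\|B\|_{\mathrm{HS}}^2\ge c'\int_{R+1}^\infty r^2 f(r)f(r+1)\,dr\ge c'\int_{R+1}^\infty r^2f(r+1)^2\,dr=\infty.
\]

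\textbf{Main obstacle.} The delicate point is making Step 2 rigorous when $W$ is not in $L^2$. There $B$ is only bounded, and the Hilbert--Schmidt norm must be identified with the positive double integral. To handle this I would truncate $W_n:=\min(W,n)\chi_{\{|x|<n\}}$ and work with nonnegative kernels. The norms $\|T_{m,0}^{-2}W_nT_{m,0}^{-2}\|_{\mathrm{HS}}$ increase to the integral by monotone convergence. If $B$ were Hilbert--Schmidt, then compressing by the multiplication operators $\chi_{\{|x|<n\}}$ would bound these truncated norms uniformly, giving a contradiction. The remaining care is justifying that the splitting in Step 2 respects Hilbert--Schmidt membership in both directions, namely that $B_W\in\mathcal{C}_2$ follows from $B\in\mathcal{C}_2$ because the inner part is $\mathcal{C}_2$.
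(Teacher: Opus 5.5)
Your architecture is the paper's. You argue by contradiction via Theorem~\ref{m_1not=m_2}(3) and Corollary~\ref{S^{-2}-T^{-2} is HS}. You use the domain equality $\mathrm{dom}((T_{m,V_j})^2)=\mathrm{dom}(-\Delta)$, which is exactly where $L^2+L^\infty_\varepsilon$ enters, to reduce to $T_{m,0}^{-2}VT_{m,0}^{-2}\in\mathcal{C}_2$. You then discard the compactly supported $L^2$ piece and run the same ball estimate on $\iint W(x)\,h(x-y)^2\,W(y)\,\mathrm{d}x\,\mathrm{d}y$ with $h$ the kernel of $T_{m,0}^{-4}$.

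The gap is in the step you flagged yourself: identifying $\|T_{m,0}^{-2}WT_{m,0}^{-2}\|_{\mathrm{HS}}^2$ with that double integral when $W\notin L^2$. Your proposed justification fails. Compressing $B$ by $\chi_n:=\chi_{\{|x|<n\}}$ gives $\chi_nT_{m,0}^{-2}WT_{m,0}^{-2}\chi_n$. This is not $T_{m,0}^{-2}(\chi_nW)T_{m,0}^{-2}$, because multiplication by $\chi_n$ does not commute with $T_{m,0}^{-2}$. The height cut-off $\min(W,n)$ cannot be produced by such a compression either. So, as written, you have no uniform bound on $\|T_{m,0}^{-2}W_nT_{m,0}^{-2}\|_{\mathrm{HS}}$ in terms of $\|B\|_{\mathrm{HS}}$. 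Lower semicontinuity (Lemma~\ref{HS norm is lower semi-continuous}) points the wrong way here, so it does not help.

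Either of two ideas closes this.

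\textbf{Operator monotonicity.} From $0\le W_n\le W$ you get $0\le T_{m,0}^{-2}W_nT_{m,0}^{-2}\le T_{m,0}^{-2}WT_{m,0}^{-2}$ in the form sense. For positive $0\le P\le Q$ with $Q\in\mathcal{C}_2$, min--max gives $\lambda_k(P)\le\lambda_k(Q)$, hence $\|P\|_{\mathrm{HS}}\le\|Q\|_{\mathrm{HS}}$. Combined with your exact formula for bounded, compactly supported $W_n$ and monotone convergence, this finishes the argument.

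\textbf{The paper's route, which needs no truncation.} Set $X:=W^{1/2}T_{m,0}^{-2}$, which is bounded. Then $B=X^*X\in\mathcal{C}_2$ if and only if $X\in\mathcal{C}_4$, if and only if $XX^*=\overline{W^{1/2}T_{m,0}^{-4}W^{1/2}}\in\mathcal{C}_2$, with equal Hilbert--Schmidt norms. Moreover, $XX^*$ is directly the integral operator with kernel $\frac{1}{8\pi m}W(x)^{1/2}\mathrm{e}^{-m|x-y|}W(y)^{1/2}$ (Lemma~\ref{lemma for T_{m,V} ineq}). Its $L^2$ norm is precisely your double integral. On $XX^*$ your compression idea would in fact be legitimate, since $\chi_nW^{1/2}T_{m,0}^{-4}W^{1/2}\chi_n=(\chi_nW)^{1/2}T_{m,0}^{-4}(\chi_nW)^{1/2}$; it only fails on the $X^*X$ side.

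With either repair the rest goes through. Your final ``$=\infty$'' still needs one line. Use $\int_{R+1}^\infty r^2f(r+1)^2\,\mathrm{d}r\ge\frac14\int_{R+2}^\infty s^2f(s)^2\,\mathrm{d}s$. Also note that $\int_R^{R+2}s^2f(s)^2\,\mathrm{d}s<\infty$ because $f\le f(R)$.
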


\begin{lemma}\label{lemma for T_{m,V} ineq}
	For any $\psi\in\mathscr{H},$ we have
	\[
	\left[(-\Delta+m^2)^{-2}\psi\right](x)=\frac{1}{8\pi m}\int_{\mathbb{R}^3}\mathrm{e}^{-m|x-y|}\psi(y)\,\mathrm{d}y,\qquad\mathrm{a.e.}\,x\in\mathbb{R}^3.
	\]
\end{lemma}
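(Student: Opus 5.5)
The plan is to compute the kernel of $(-\Delta+m^2)^{-2}$ as the convolution of the Yukawa kernel of $(-\Delta+m^2)^{-1}$ with itself. The convolution will be evaluated through the Fourier transform, and the resulting formula extended from Schwartz functions to all of $\mathscr{H}$.

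We start from the standard fact that $(-\Delta+m^2)^{-1}$ is the integral operator with kernel $G_m(x-y)$, where $G_m(x):=\mathrm{e}^{-m|x|}/(4\pi|x|)$. The paper already uses this implicitly in the kernel $k_m$ in the proof of Theorem \ref{T_{m,V} eq if R+L^2}. Equivalently, with the Fourier convention fixed in the proof of Lemma \ref{lemma for T_{m,V} eq if R+L^2}, we have $(2\pi)^{3/2}\widehat{G_m}(k)=(|k|^2+m^2)^{-1}$. It therefore suffices to show that the function $h_m(x):=\mathrm{e}^{-m|x|}/(8\pi m)$ satisfies
\[
(2\pi)^{3/2}\widehat{h_m}(k)=\frac{1}{(|k|^2+m^2)^2},\qquad k\in\mathbb{R}^3.
\]
The cleanest route is to differentiate in the mass parameter. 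Since $\frac{\partial}{\partial m}(|k|^2+m^2)^{-1}=-2m(|k|^2+m^2)^{-2}$, the target is $-\frac{1}{2m}\partial_m$ applied to $(|k|^2+m^2)^{-1}$. On the position side, $-\frac{1}{2m}\partial_m G_m(x)=\frac{1}{2m}\cdot\frac{\mathrm{e}^{-m|x|}}{4\pi}=h_m(x)$. Differentiation under the integral sign in the Fourier integral is justified by dominated convergence, because $|\partial_m G_m(x)|\le \mathrm{e}^{-m_0|x|}/(4\pi)$ is integrable for $m$ near any $m_0'>m_0>0$. Alternatively, one can compute the radial Fourier transform of $\mathrm{e}^{-m|x|}$ directly:
\[
\int_{\mathbb{R}^3}\mathrm{e}^{-m|x|}\mathrm{e}^{-\mathrm{i}x\cdot k}\,\mathrm{d}x=\frac{8\pi m}{(|k|^2+m^2)^2}.
\]
This yields the same identity as a check.

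With the symbol identified, for $\psi\in\mathcal{S}(\mathbb{R}^3)$ we get $(-\Delta+m^2)^{-2}\psi=\mathscr{F}^{-1}\big[(|k|^2+m^2)^{-2}\mathscr{F}\psi\big]=h_m*\psi$, by the convolution theorem for $h_m\in L^1$ and $\psi\in L^1\cap L^2$. This is exactly the claimed formula on $\mathcal{S}(\mathbb{R}^3)$.

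To pass to arbitrary $\psi\in\mathscr{H}$, we note two facts. First, $h_m\in L^1(\mathbb{R}^3)$, so by Young's inequality the map $\psi\mapsto h_m*\psi$ is bounded on $L^2$, with the integral converging absolutely for a.e. $x$ (indeed for every $x$, since $h_m\in L^2$ as well). Second, $(-\Delta+m^2)^{-2}$ is bounded. Two bounded operators agreeing on the dense subspace $\mathcal{S}(\mathbb{R}^3)$ coincide, which gives the identity a.e. for all $\psi\in\mathscr{H}$. The only step requiring care is the Fourier computation of $\mathrm{e}^{-m|x|}$ (or the justification of the $m$-derivative), and this is routine.
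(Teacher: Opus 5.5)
Your proposal is correct and follows essentially the paper's route. Both arguments reduce the lemma to the Fourier identity $\int_{\mathbb{R}^3}\mathrm{e}^{-m|x|}\mathrm{e}^{-\mathrm{i}x\cdot k}\,\mathrm{d}x=8\pi m/(|k|^2+m^2)^2$ plus the convolution theorem: the paper phrases it as $g_m*g_m=\mathrm{e}^{-m|\cdot|}/(8\pi m)$ via associativity of convolution, while you match the multiplier $(|k|^2+m^2)^{-2}$ directly and extend from $\mathcal{S}(\mathbb{R}^3)$ by density. Your derivation of that Fourier transform by differentiating the Yukawa identity in $m$ is a tidy way to fill in what the paper calls a straightforward computation.
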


\begin{proof}
   Define a function $g_m$ by
	\[
	g_m(x):=\frac{\mathrm{e}^{-m|x|}}{4\pi|x|},\qquad x\in\mathbb{R}^3.
	\] 
	Then $g_m$ is in $L^1(\mathbb{R}^3)\cap L^2(\mathbb{R}^3)$.
	Since $(-\Delta+m^2)^{-1}\phi=g_m*\phi$ for any $\phi\in\mathscr{H}$, it follows that
	\[
	(-\Delta+m^2)^{-2}\psi=g_m*(g_m*\psi)=(g_m*g_m)*\psi.
	\]
	Thus, it is sufficient to prove that
	\begin{equation}\label{lemma for T_{m,V} ineq eq1}
	(g_m*g_m)(x)=\frac{1}{8\pi m}\mathrm{e}^{-m|x|},\qquad \mathrm{a.e.}\,x\in\mathbb{R}^3.
	\end{equation}
	Let $\mathscr{F}$ denote the Fourier transform on $L^2(\mathbb{R}^3)$ as defined in the proof of Lemma \ref{lemma for T_{m,V} eq if R+L^2}.
	Then it follows that
	\[
(2\pi)^{3/2}\left[\mathscr{F}(g_m*g_m)\right](k)=(2\pi)^{3}(\mathscr{F}g_m)(k)(\mathscr{F}g_m)(k)=\frac{1}{(|k|^2+m^2)^2},\qquad\mathrm{a.e.}\,k\in\mathbb{R}^3.
	\]
	On the other hand, a straightforward computation shows that
	\[
	\int_{\mathbb{R}^3}\mathrm{e}^{-m|x|}\mathrm{e}^{-\mathrm{i}x\cdot k}\,\mathrm{d}x=\frac{8\pi m}{(|k|^2+m^2)^2},\qquad\forall k\in\mathbb{R}^3.
	\]
	Hence, \eqref{lemma for T_{m,V} ineq eq1} holds.
	This completes the proof.
\end{proof}

\begin{proof}[Proof of Theorem \ref{T_{m,V} ineq}]
	We prove the theorem by contradiction.
	Assume that $\rho_{m,V_1}$ is equivalent to $\rho_{m,V_2}.$
	We first note that for each $j=1,2,$ the domain of $(T_{m,V_j})^2$ coincides with that of $(T_{m,0})^2,$ since $V_j$ belongs to $L^2+L_\varepsilon^\infty.$ 
	Thus, Proposition \ref{we suf cond for massive ops} implies that $(T_{m,V_j})^4\overset{\mathrm{w}}{\sim} (T_{m,0})^4.$
	Let $V:=V_1-V_2.$
	By Theorem \ref{m_1not=m_2} (3) and Corollary \ref{S^{-2}-T^{-2} is HS}, we see that the difference
	\[
	T_{m,V_2}^{-2}-T_{m,V_1}^{-2}=T_{m,V_2}^{-2}VT_{m,V_1}^{-2}
	\]
	is Hilbert--Schmidt.
	Since $(T_{m,V_j})^4\overset{\mathrm{w}}{\sim} (T_{m,0})^4$ for $j=1,2,$ the operator $T_{m,0}^{-2}VT_{m,0}^{-2}$ is also Hilbert--Schmidt.
	
	Define a potential $V_0:\mathbb{R}^3\to\mathbb{R}$ by setting $V_0(x):=V(x)$ for $|x|< R,$ and $V_0(x):=0$ otherwise.
	Let $V_\infty:=V-V_0.$ 
	Then, we have
	\[
	T_{m,0}^{-2}VT_{m,0}^{-2}=T_{m,0}^{-2}V_0T_{m,0}^{-2}+T_{m,0}^{-2}V_\infty T_{m,0}^{-2}.
	\]
	Since every function in $L^2+L_\varepsilon^\infty$ is locally square-integrable, $V_0$ belongs to $L^2(\mathbb{R}^3),$ and thus $T_{m,0}^{-2}V_0T_{m,0}^{-2}$ is Hilbert--Schmidt.
	This, together with the fact that $V_\infty$ is non-negative, implies that
	\[
	T_{m,0}^{-2}V_\infty T_{m,0}^{-2}=(V_\infty^{1/2} T_{m,0}^{-2})^*(V_\infty^{1/2} T_{m,0}^{-2})
	\]
    is Hilbert--Schmidt.
    Hence, 
    \[
    (V_\infty^{1/2} T_{m,0}^{-2})(V_\infty^{1/2} T_{m,0}^{-2})^*= \overline{V_\infty^{1/2}T_{m,0}^{-4}V_\infty^{1/2}}
    \]
    is also Hilbert--Schmidt.
	Note that the domain of $V_\infty^{1/2}T_{m,0}^{-4}V_\infty^{1/2}$ coincides with that of $V_\infty^{1/2},$ and hence $V_\infty^{1/2}T_{m,0}^{-4}V_\infty^{1/2}$ is densely defined.
	By Lemma \ref{lemma for T_{m,V} ineq}, we see that $\overline{V_\infty^{1/2}T_{m,0}^{-4}V_\infty^{1/2}}$ is an integral operator with kernel
	\[
	k_{m}(x,y):=\frac{1}{8\pi m}V_\infty(x)^{1/2}\mathrm{e}^{-m|x-y|}V_\infty(y)^{1/2},\qquad x,y\in\mathbb{R}^3.
	\]
	Therefore, $k_m$ must be square-integrable over $\mathbb{R}^3\times\mathbb{R}^3.$
	
	On the other hand, for any $x\in\mathbb{R}^3$ with $|x|\geq R+1,$ we have
	\begin{align*}
		&\int_{\mathbb{R}^3}\mathrm{e}^{-2m|x-y|}V_\infty(y)\,\mathrm{d}y
		\geq\int_{|x-y|\leq1}\mathrm{e}^{-2m|x-y|}V_\infty(y)\,\mathrm{d}y
		=\int_{|y|\leq1}\mathrm{e}^{-2m|y|}V_\infty(x-y)\,\mathrm{d}y\\
		&\qquad\geq\int_{|y|\leq1}\mathrm{e}^{-2m}V_\infty(x-y)\,\mathrm{d}y
		\geq\int_{|y|\leq1}\mathrm{e}^{-2m}f(|x|+1)\,\mathrm{d}y
		=\frac{4\pi}{3\mathrm{e}^{2m}}f(|x|+1).
	\end{align*}
	Thus, we obtain
	\begin{align*}
		&\frac{3\mathrm{e}^{2m}}{4\pi}\int_{\mathbb{R}^3}\int_{\mathbb{R}^3}V_\infty(x)\mathrm{e}^{-2m|x-y|}V_\infty(y)\,\mathrm{d}x\,\mathrm{d}y
		\geq\int_{|x|\geq R+1}V_\infty(x)f(|x|+1)\,\mathrm{d}x\\
		&\geq\int_{|x|\geq R+1}f(|x|+1)^2\,\mathrm{d}x
		=4\pi\int_{R+1}^\infty f(r+1)^2r^2\,\mathrm{d}r
		\geq4\pi\int_{R+1}^\infty f(r+1)^2\frac{(r+1)^2}{4}\,\mathrm{d}r.
	\end{align*}
	Since $f$ is non-negative and non-increasing, the right-hand side diverges.
	This contradicts the fact that $k_m$ is square-integrable.
	Therefore, $\rho_{m,V_1}$ is not equivalent to $\rho_{m,V_2}.$
\end{proof}

\begin{remark}
	It remains unclear to the authors whether Theorem \ref{T_{m,V} ineq}, which holds for potentials in $L^2 + L_\varepsilon^\infty$, also extends to potentials in $\mathcal{R} + L_\varepsilon^\infty$.
\end{remark}

We consider a potential $V_{\gamma,\sigma}$ defined by
\[
V_{\gamma,\sigma}(x):=\frac{\gamma}{|x|^{\sigma}},\qquad x\in\mathbb{R}^3\setminus\{0\},
\]
where $\gamma\in\mathbb{R}\setminus\{0\}$ and $0<\sigma<2$ are constants.
Note that $V_{\gamma,\sigma}$ belongs to $\mathcal{R}+L_\varepsilon^\infty.$

\begin{theorem}\label{coulomb type thm}
	Let $m>0,$ and let $V_1\in L^2+L_\varepsilon^\infty.$ 
	Suppose that $-\Delta+m^2+V_1$ belongs to $\mathcal{S\!A}_+(\mathscr{H}).$ 
	Let $\gamma\in\mathbb{R}\setminus\{0\}$ and $0<\sigma<2$ be such that $-\Delta+m^2+V_1+V_{\gamma,\sigma}$ belongs to $\mathcal{S\!A}_+(\mathscr{H}).$ 
	Set $V_2:=V_1+V_{\gamma,\sigma}.$
	Then $\rho_{m,V_1}$ is equivalent to $\rho_{m,V_2}$ if and only if $\sigma>3/2.$
\end{theorem}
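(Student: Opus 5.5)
The proof splits into the two directions, and in each case the question is reduced to a statement about $V_{\gamma,\sigma}=V_2-V_1$ via Theorems \ref{T_{m,V} eq if R+L^2} and \ref{T_{m,V} ineq}.

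\emph{Sufficiency ($\sigma>3/2$).} I will show that $V_{\gamma,\sigma}\in\mathcal{R}+L^2$ and then invoke Theorem \ref{T_{m,V} eq if R+L^2}. The hypotheses of that theorem are satisfied: $V_1\in L^2+L_\varepsilon^\infty\subset\mathcal{R}+L_\varepsilon^\infty$, and $V_2=V_1+V_{\gamma,\sigma}$ lies in $\mathcal{R}+L_\varepsilon^\infty$ because that class is a vector space and $V_{\gamma,\sigma}$ belongs to it. Split
\[
V_{\gamma,\sigma}=V_{\gamma,\sigma}\chi_{\{|x|<1\}}+V_{\gamma,\sigma}\chi_{\{|x|\geq1\}}.
\]
For the local piece, $\int_{|x|<1}|x|^{-3\sigma/2}\,\mathrm{d}x<\infty$ because $3\sigma/2<3$. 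Hence the local piece lies in $L^{3/2}(\mathbb{R}^3)\subset\mathcal{R}$. For the tail, $\int_{|x|\geq1}|x|^{-2\sigma}\,\mathrm{d}x<\infty$ exactly when $2\sigma>3$, i.e.\ $\sigma>3/2$, so the tail lies in $L^2(\mathbb{R}^3)$. Therefore $V_{\gamma,\sigma}\in\mathcal{R}+L^2$, and equivalence of $\rho_{m,V_1}$ and $\rho_{m,V_2}$ follows.

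\emph{Necessity ($\sigma\leq3/2$).} Here I will apply Theorem \ref{T_{m,V} ineq}, which requires both potentials to lie in $L^2+L_\varepsilon^\infty$. This holds for $V_{\gamma,\sigma}$:
\begin{itemize}
\item the local piece is in $L^2$ when $\sigma<3/2$;
\item for $\sigma\in[3/2,2)$, truncate instead at a large radius $\rho$: the part on $\{|x|<\rho\}$ is in $L^2$ iff $2\sigma<3$, which fails at $\sigma=3/2$.
\end{itemize}
This is the main obstacle. Since $\sigma<2$, the local singularity is only guaranteed to be in $L^{3/2}$, not in $L^2$, so $V_2$ need not lie in $L^2+L_\varepsilon^\infty$ for $\sigma\in[3/2,2)$. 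To handle it, I will decompose $V_{\gamma,\sigma}=W_0+W_\infty$ with $W_0:=V_{\gamma,\sigma}\chi_{\{|x|<1\}}$ and $W_\infty:=V_{\gamma,\sigma}\chi_{\{|x|\ge1\}}$. Since $W_0\in L^{3/2}\subset\mathcal{R}$, Theorem \ref{T_{m,V} eq if R+L^2} gives $\rho_{m,V_1+W_\infty}\simeq\rho_{m,V_2}$, provided $-\Delta+m^2+V_1+W_\infty\in\mathcal{S\!A}_+(\mathscr{H})$. If that positivity fails, I will add a bounded compactly supported $L^2$ correction, i.e.\ replace $W_\infty$ by $W_\infty+c\chi_{\{|x|<\rho\}}$ with large $c$. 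Such a correction is harmless for Theorem \ref{T_{m,V} eq if R+L^2}, because it only changes the difference by an $L^2$ term. Once this is arranged, $V_1+W_\infty\in L^2+L_\varepsilon^\infty$ (because $W_\infty\in L^\infty$ tends to $0$ at infinity, so it belongs to $L^2+L_\varepsilon^\infty$). Since equivalence of representations is an equivalence relation, it then suffices to show that $\rho_{m,V_1}$ and $\rho_{m,V_1+W_\infty}$ are inequivalent.

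For that last step I will use Theorem \ref{T_{m,V} ineq} with $f(r)=|\gamma|r^{-\sigma}$ and $R=1$. Then $\int_1^\infty r^{2-2\sigma}\,\mathrm{d}r=\infty$ because $\sigma\leq3/2$, and $f$ is non-increasing and positive. If $\gamma>0$, the difference $(V_1+W_\infty)-V_1\geq f(|x|)$ for $|x|\geq1$, which is the hypothesis of Theorem \ref{T_{m,V} ineq}. If $\gamma<0$, I will swap the roles of the two potentials, since equivalence of representations is symmetric. This gives inequivalence when $\sigma\leq3/2$ and completes the proof.
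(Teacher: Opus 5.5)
Your route is the paper's: the same splitting at $|x|=1$ for $\sigma>3/2$, and for $\sigma\le 3/2$ the same intermediate potential $V_1+W_\infty$ (the paper's $V_1+V_0$). You link it to $V_2$ via Theorem \ref{T_{m,V} eq if R+L^2} and to $V_1$ via Theorem \ref{T_{m,V} ineq}. The paper treats $\sigma<3/2$ directly, since then $V_2\in L^2+L^\infty_\varepsilon$ already, but using the intermediate potential for all $\sigma\le 3/2$ is fine.

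\textbf{The gap.} You never establish that $-\Delta+m^2+V_1+W_\infty\in\mathcal{S\!A}_+(\mathscr{H})$, and without this $\rho_{m,V_1+W_\infty}$ is not even defined. Your fallback of adding $c\chi_{\{|x|<\rho\}}$ is only asserted, and it has two problems.
\begin{itemize}
\item Showing that the corrected operator is non-negative and injective needs a real argument: choose $\rho$ large so that the part of the potential outside the ball is form-small, then take $c$ large.
\item Once $\rho>1$, your choice $R=1$ in Theorem \ref{T_{m,V} ineq} fails for $\gamma<0$. On $1\le|x|<\rho$ one has $V_1-(V_1+W_\infty+c\chi_{\{|x|<\rho\}})=|\gamma||x|^{-\sigma}-c$, which can be negative, so you would need $R=\rho$.
\end{itemize}

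\textbf{The fix.} Positivity never actually fails, by pointwise comparison of potentials.
\begin{itemize}
\item If $\gamma>0$, then $W_\infty\ge 0$, so $-\Delta+m^2+V_1+W_\infty\ge -\Delta+m^2+V_1$.
\item If $\gamma<0$, then $V_1+W_\infty=V_2-W_0$ with $-W_0=|\gamma||x|^{-\sigma}\chi_{\{|x|<1\}}\ge 0$. Hence $-\Delta+m^2+V_1+W_\infty\ge -\Delta+m^2+V_2$.
\end{itemize}
In both cases the lower operator $H$ is in $\mathcal{S\!A}_+(\mathscr{H})$ by hypothesis. Adding a non-negative bounded or Rollnik form perturbation $W$ preserves non-negativity and injectivity: if $(H+W)\psi=0$, then $0=\langle\psi,(H+W)\psi\rangle\ge\|H^{1/2}\psi\|^2$, so $\psi=0$. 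This is exactly how the paper closes the borderline case. With this observation the correction term is unnecessary, $R=1$ works, and your argument is complete.
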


\begin{proof}
	We first consider the case where $\sigma>3/2.$
	In this case, $V_{\gamma,\sigma}$ is in $\mathcal{R}+L^2$ and thus so is $V_1-V_2.$
	Thus, by Theorem \ref{T_{m,V} eq if R+L^2}, $\rho_{m,V_1}$ is equivalent to $\rho_{m,V_2}.$
	
	We next consider  the case where $\sigma<3/2$.
	In this case, $V_{\gamma,\sigma}$ is in $L^2+L_\varepsilon^\infty$ and thus so is $V_2.$
	If $\gamma<0$, then $V_1-V_2=f(|\cdot|),$ where $f(r):=|\gamma|/r^\sigma$ for $r>0.$
	Since $f$ is not square-integrable over $[1,\infty)$ with respect to the measure $r^2\,\mathrm{d}r,$ it follows from Theorem \ref{T_{m,V} ineq} that $\rho_{m,V_1}$ is not equivalent to $\rho_{m,V_2}.$
	On the other hand, if $\gamma>0$, then $V_2-V_1=f(|\cdot|)$, and thus $\rho_{m,V_1}$ is not equivalent to $\rho_{m,V_2}.$
	
	Finally, we consider the case where $\sigma=3/2.$
	Define a potential $V_0$ by setting $V_0(x):=V_{\gamma,\sigma}(x)$ for $|x|\geq1$ and $V_0(x):=0$ otherwise.
	Then, $V_{\gamma,\sigma}-V_0$ belongs to $L^{3/2}(\mathbb{R}^3)$, and thus it also belongs to $\mathcal{R}$.
	Moreover, $V_0$ belongs to $L^2+L_\varepsilon^\infty,$ and so does $V_1+V_0.$
	We show that $-\Delta+m^2+V_1+V_0$ is in $\mathcal{S\!A}_+(\mathscr{H}).$ 
	If $\gamma>0,$ then $V_0$ is non-negative.
	Since $-\Delta+m^2+V_1$ is in $\mathcal{S\!A}_+(\mathscr{H}),$ it follows that $-\Delta+m^2+V_1+V_0$ is also in $\mathcal{S\!A}_+(\mathscr{H}).$
	On the other hand, if $\gamma<0,$ then $V_0-V_{\gamma,\sigma}$ is non-negative.
	Since $-\Delta+m^2+V_2$ is in $\mathcal{S\!A}_+(\mathscr{H}),$ so is 
	\[
	-\Delta+m^2+V_1+V_0=(-\Delta+m^2+V_2)+(V_0-V_{\gamma,\sigma}).
	\]
	Thus, $\rho_{m,V_1+V_0}$ is defined, and by Theorem \ref{T_{m,V} eq if R+L^2}, it is equivalent to $\rho_{m,V_2}.$
	Moreover, by Theorem \ref{T_{m,V} ineq}, $\rho_{m,V_1+V_0}$ is not equivalent to $\rho_{m,V_1}.$
	Therefore, $\rho_{m,V_1}$ is not equivalent to $\rho_{m,V_2}.$
\end{proof}

\appendix

\section{Operator theory}

Let $\mathscr{K}$ be a complex Hilbert space.

\begin{lemma}\label{integral rep power p}
	Let $A$ be a non-negative self-adjoint operator acting in $\mathscr{K},$ and let $\alpha\in(0,1).$
	Then a vector $\psi\in\mathscr{K}$ is in $\mathrm{dom}(A^{\alpha/2})$ if and only if 
	\[
	\int_0^\infty \frac{\langle \psi, A(A+t)^{-1}\psi\rangle}{t^{1-\alpha}}\,\mathrm{d}t<\infty.
	\]
	Moreover, we have
	\[
	\|A^{\alpha/2}\psi\|^2
	=\frac{\sin{(\pi \alpha)}}{\pi}\int_0^\infty \frac{\langle\psi, A(A+t)^{-1}\psi\rangle}{t^{1-\alpha}}\,\mathrm{d}t,\qquad\forall\psi\in\mathrm{dom}(A^{\alpha/2}).
	\]
\end{lemma}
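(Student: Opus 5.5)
The plan is to reduce everything to the spectral theorem for $A$, writing $A=\int_{[0,\infty)}\lambda\,\mathrm{d}E_A(\lambda)$ and letting $\mu_\psi(\cdot):=\langle\psi,E_A(\cdot)\psi\rangle$ be the spectral measure of $\psi$, a finite positive Borel measure on $[0,\infty)$. Then for every $t>0$,
\[
\langle\psi,A(A+t)^{-1}\psi\rangle=\int_{[0,\infty)}\frac{\lambda}{\lambda+t}\,\mathrm{d}\mu_\psi(\lambda),
\]
and the integrand is non-negative. Tonelli's theorem therefore lets me exchange the $t$- and $\lambda$-integrations without any integrability hypothesis:
\[
\int_0^\infty\frac{\langle\psi,A(A+t)^{-1}\psi\rangle}{t^{1-\alpha}}\,\mathrm{d}t=\int_{[0,\infty)}\left(\int_0^\infty\frac{\lambda\,t^{\alpha-1}}{\lambda+t}\,\mathrm{d}t\right)\mathrm{d}\mu_\psi(\lambda).
\]

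The key step is the scalar identity
\[
\int_0^\infty\frac{\lambda\,t^{\alpha-1}}{\lambda+t}\,\mathrm{d}t=\frac{\pi}{\sin(\pi\alpha)}\lambda^{\alpha},\qquad\lambda\ge0.
\]
For $\lambda=0$ both sides vanish. For $\lambda>0$, I substitute $t=\lambda s$, which turns the left side into $\lambda^{\alpha}\int_0^\infty s^{\alpha-1}(1+s)^{-1}\,\mathrm{d}s=\lambda^\alpha B(\alpha,1-\alpha)=\lambda^{\alpha}\Gamma(\alpha)\Gamma(1-\alpha)=\lambda^\alpha\pi/\sin(\pi\alpha)$. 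Here $\alpha\in(0,1)$ is exactly what makes the integral converge at both $0$ and $\infty$.

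Combining, the double integral equals $\frac{\pi}{\sin(\pi\alpha)}\int_{[0,\infty)}\lambda^{\alpha}\,\mathrm{d}\mu_\psi(\lambda)$, with both sides possibly infinite. By the functional calculus, $\psi\in\mathrm{dom}(A^{\alpha/2})$ if and only if $\int|\lambda^{\alpha/2}|^2\,\mathrm{d}\mu_\psi=\int\lambda^\alpha\,\mathrm{d}\mu_\psi<\infty$, and in that case this integral equals $\|A^{\alpha/2}\psi\|^2$. This gives the domain characterization and the norm formula simultaneously.

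The only point needing care is justifying the interchange when the integrals may diverge. Nonnegativity of the integrand, together with the $\sigma$-finiteness of Lebesgue measure on $(0,\infty)$ and the finiteness of $\mu_\psi$, makes Tonelli applicable, so I don't expect a real obstacle. Computing the Beta integral is standard.
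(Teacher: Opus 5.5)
Your argument is correct and complete, and it is essentially the proof the paper relies on. The paper gives no argument of its own but cites \cite[Proposition 5.16]{MR2953553}, which proceeds as you do: write $\langle\psi,A(A+t)^{-1}\psi\rangle=\int\lambda(\lambda+t)^{-1}\,\mathrm{d}\mu_\psi(\lambda)$, exchange the integrals by Tonelli using non-negativity, apply $\int_0^\infty\lambda t^{\alpha-1}(\lambda+t)^{-1}\,\mathrm{d}t=\pi\lambda^{\alpha}/\sin(\pi\alpha)$, and read off both assertions from the criterion $\int\lambda^{\alpha}\,\mathrm{d}\mu_\psi<\infty$ for $\psi\in\mathrm{dom}(A^{\alpha/2})$.
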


\begin{proof}
	See e.g., \cite[Proposition 5.16]{MR2953553}.
\end{proof}

\begin{lemma}\label{integral rep power p ver2}
	Let $A$ be a non-negative self-adjoint operator acting in $\mathscr{K},$ and let $\alpha\in(0,1).$
	Then we have
	\[
	\langle\psi,A^{\alpha}\phi\rangle
	=\frac{\sin{(\pi \alpha)}}{\pi}\int_0^\infty \frac{\langle\psi, A(A+t)^{-1}\phi\rangle}{t^{1-\alpha}}\,\mathrm{d}t,
	\qquad\forall\psi,\phi\in\mathrm{dom}(A^{\alpha}).
	\]
\end{lemma}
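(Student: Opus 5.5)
The plan is to reduce the polarized statement to the quadratic one, Lemma \ref{integral rep power p}, by polarization. Take $\psi,\phi\in\mathrm{dom}(A^{\alpha})$. Then $\psi,\phi\in\mathrm{dom}(A^{\alpha/2})$, since $\mathrm{dom}(A^{\alpha})\subset\mathrm{dom}(A^{\alpha/2})$. Moreover $\langle\psi,A^\alpha\phi\rangle=\langle A^{\alpha/2}\psi,A^{\alpha/2}\phi\rangle$ by functional calculus.

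The sesquilinear form $(\psi,\phi)\mapsto\langle A^{\alpha/2}\psi,A^{\alpha/2}\phi\rangle$ on $\mathrm{dom}(A^{\alpha/2})$ is recovered by the polarization identity
\[
\langle u,v\rangle=\frac14\sum_{k=0}^{3}\mathrm{i}^{-k}\|u+\mathrm{i}^k v\|^2 ,
\]
applied with $u=A^{\alpha/2}\psi$ and $v=A^{\alpha/2}\phi$ (note $u+\mathrm{i}^kv=A^{\alpha/2}(\psi+\mathrm{i}^k\phi)$). The same identity, written for the bounded non-negative operator $A(A+t)^{-1}$, gives
\[
\langle\psi,A(A+t)^{-1}\phi\rangle=\frac14\sum_{k=0}^{3}\mathrm{i}^{-k}\langle\psi+\mathrm{i}^k\phi,A(A+t)^{-1}(\psi+\mathrm{i}^k\phi)\rangle
\]
for every $t>0$.

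Each vector $\psi+\mathrm{i}^k\phi$ lies in $\mathrm{dom}(A^{\alpha/2})$. Lemma \ref{integral rep power p} therefore gives that each of the four functions $t\mapsto\langle\psi+\mathrm{i}^k\phi,A(A+t)^{-1}(\psi+\mathrm{i}^k\phi)\rangle t^{\alpha-1}$ is non-negative and integrable on $(0,\infty)$, and that $\frac{\sin(\pi\alpha)}{\pi}$ times its integral equals $\|A^{\alpha/2}(\psi+\mathrm{i}^k\phi)\|^2$. Consequently $t\mapsto\langle\psi,A(A+t)^{-1}\phi\rangle t^{\alpha-1}$ is absolutely integrable, being a finite linear combination of integrable functions. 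Integrating the polarized identity termwise and applying the polarization identity for $u,v$ then yields the claimed formula.

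The only point needing care is the absolute convergence of the integral, so that it makes sense and linearity can be used. This is exactly what the non-negativity and finiteness in Lemma \ref{integral rep power p} provide. No further obstacle is expected. In fact the argument shows the formula holds for all $\psi,\phi\in\mathrm{dom}(A^{\alpha/2})$.
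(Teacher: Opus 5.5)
Your proof is correct and follows the paper's argument: you polarize the quadratic identity of Lemma \ref{integral rep power p}, which gives both the absolute convergence of the integral and the equality. Your closing remark about $\mathrm{dom}(A^{\alpha/2})$ is also right, provided the left-hand side is read as $\langle A^{\alpha/2}\psi,A^{\alpha/2}\phi\rangle$ when $\phi\notin\mathrm{dom}(A^{\alpha})$.
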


\begin{proof}
	The integral on the right-hand side converges absolutely by the polarization identity and Lemma \ref{integral rep power p}. 
	The equality also follows from the polarization identity and Lemma \ref{integral rep power p}. 
\end{proof}

\begin{lemma}\label{form op ine inverse}
	Let $A,B\in\mathcal{S\!A}_+(\mathscr{K})$ be arbitrary.
	If $A\preceq B,$ then $B^{-1}\preceq A^{-1}.$
\end{lemma}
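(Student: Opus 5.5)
The statement to prove is Lemma \ref{form op ine inverse}: for $A,B\in\mathcal{S\!A}_+(\mathscr{K})$, the relation $A\preceq B$ implies $B^{-1}\preceq A^{-1}$.

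The plan is to encode $A\preceq B$ as a contraction. Since $\mathrm{dom}(B^{1/2})\subset\mathrm{dom}(A^{1/2})$ and $\|A^{1/2}\psi\|\le\|B^{1/2}\psi\|$, the operator $W_0:=A^{1/2}B^{-1/2}$ is defined on $\mathrm{dom}(B^{-1/2})=\mathrm{ran}(B^{1/2})$. Writing $\phi=B^{-1/2}\chi$, we get
\[
\|W_0\chi\|=\|A^{1/2}\phi\|\le\|B^{1/2}\phi\|=\|\chi\|.
\]
This domain is dense because $B$ is injective, so $W_0$ extends to a contraction $W$ on $\mathscr{K}$, and $\|W^*\|\le1$ as well.

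The key step is to identify $W^*$ on $\mathrm{ran}(A^{1/2})=\mathrm{dom}(A^{-1/2})$. Take $\eta=A^{1/2}\xi$ with $\xi\in\mathrm{dom}(A^{1/2})$, and any $\chi\in\mathrm{dom}(B^{-1/2})$. Since $B^{-1/2}\chi\in\mathrm{dom}(B^{1/2})\subset\mathrm{dom}(A^{1/2})$,
\[
\langle W\chi,A^{1/2}\xi\rangle=\langle A^{1/2}B^{-1/2}\chi,A^{1/2}\xi\rangle .
\]
We would like this to equal $\langle B^{-1/2}\chi,A\xi\rangle$, but that needs $\xi\in\mathrm{dom}(A)$. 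So restrict first to $\xi\in\mathrm{dom}(A)$, where $A^{1/2}\xi\in\mathrm{dom}(A^{1/2})$. Then the map
\[
\chi\mapsto\langle B^{-1/2}\chi,A\xi\rangle=\langle\chi,W^*A^{1/2}\xi\rangle
\]
is bounded, which gives $A\xi\in\mathrm{dom}((B^{-1/2})^*)=\mathrm{dom}(B^{-1/2})$ and
\[
B^{-1/2}A\xi=W^*A^{1/2}\xi .
\]
Setting $\zeta:=A^{1/2}\xi$, which ranges over $A^{1/2}\mathrm{dom}(A)$, this reads $B^{-1/2}A^{1/2}\zeta=W^*\zeta$. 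Hence
\[
\|B^{-1/2}\eta\|\le\|A^{-1/2}\eta\| \qquad\text{for } \eta=A\xi=A^{1/2}\zeta,\ \xi\in\mathrm{dom}(A),
\]
that is, for all $\eta\in\mathrm{ran}(A)=\mathrm{dom}(A^{-1})$.

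The main obstacle is extending this from $\mathrm{dom}(A^{-1})$ to all of $\mathrm{dom}(A^{-1/2})$, and proving $\mathrm{dom}(A^{-1/2})\subset\mathrm{dom}(B^{-1/2})$. The set $\mathrm{dom}(A^{-1})$ is a core for $A^{-1/2}$. So for $\eta\in\mathrm{dom}(A^{-1/2})$, choose $\eta_n\in\mathrm{dom}(A^{-1})$ with $\eta_n\to\eta$ and $A^{-1/2}\eta_n\to A^{-1/2}\eta$. By the inequality, $B^{-1/2}\eta_n$ is Cauchy. Closedness of $B^{-1/2}$ then gives $\eta\in\mathrm{dom}(B^{-1/2})$ with $\|B^{-1/2}\eta\|\le\|A^{-1/2}\eta\|$. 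This is exactly $B^{-1}\preceq A^{-1}$.

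An alternative route, if the domain bookkeeping gets delicate, is to run the standard argument with bounded resolvents. First show that $A\preceq B$ gives $(B+\epsilon)^{-1}\le(A+\epsilon)^{-1}$ as bounded operators, via the same contraction $(A+\epsilon)^{1/2}(B+\epsilon)^{-1/2}$. Then let $\epsilon\downarrow0$ using monotone convergence of the quadratic forms, and characterize $\mathrm{dom}(B^{-1/2})$ by $\sup_\epsilon\langle\eta,(B+\epsilon)^{-1}\eta\rangle<\infty$.
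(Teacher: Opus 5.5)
Your proof is correct, and each step checks out. In particular, $\mathrm{dom}(A^{-1})$ is a core for $A^{-1/2}$, and $A^{-1/2}A\xi=A^{1/2}\xi$ for $\xi\in\mathrm{dom}(A)$. The paper itself gives no argument for this lemma; it only cites Schmüdgen's book (Corollary 10.12), so your proposal supplies a self-contained proof where the paper has a reference.

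Your route is to encode $A\preceq B$ as the contraction $W=\overline{A^{1/2}B^{-1/2}}$, identify $B^{-1/2}=W^*A^{-1/2}$ on $\mathrm{ran}(A)$, and extend by closedness. It needs nothing beyond adjoints of closed operators. The citation buys brevity and keeps the paper within one textbook framework; the same source is used for Lemma~\ref{Heinz ine} and the integral representations.

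You can shorten your argument. The detour through $\mathrm{dom}(A)$ and the core/closedness step are forced only because you put $A^{1/2}\xi$ in the second slot. For an arbitrary $\eta\in\mathrm{dom}(A^{-1/2})$, set $u:=A^{-1/2}\eta\in\mathrm{dom}(A^{1/2})$. Since $B^{-1/2}\chi\in\mathrm{dom}(B^{1/2})\subset\mathrm{dom}(A^{1/2})$, you may move $A^{1/2}$ across:
\[
\langle B^{-1/2}\chi,\eta\rangle=\langle B^{-1/2}\chi,A^{1/2}u\rangle=\langle A^{1/2}B^{-1/2}\chi,u\rangle=\langle\chi,W^*u\rangle,\qquad\chi\in\mathrm{dom}(B^{-1/2}).
\]
This gives $\eta\in\mathrm{dom}(B^{-1/2})$ and $B^{-1/2}\eta=W^*A^{-1/2}\eta$ directly, hence $\|B^{-1/2}\eta\|\le\|A^{-1/2}\eta\|$ on all of $\mathrm{dom}(A^{-1/2})$ in one step. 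Even $W$ is dispensable: $|\langle\eta,\psi\rangle|=|\langle u,A^{1/2}\psi\rangle|\le\|u\|\,\|B^{1/2}\psi\|$ for $\psi\in\mathrm{dom}(B^{1/2})$, and $B^{1/2}\mathrm{dom}(B^{1/2})=\mathrm{dom}(B^{-1/2})$.

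Your resolvent alternative, comparing $(B+\epsilon)^{-1}\le(A+\epsilon)^{-1}$ and letting $\epsilon\downarrow0$ by monotone convergence, is also sound but longer.
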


\begin{proof}
	See e.g., \cite[Corollary 10.12]{MR2953553}.
\end{proof}

\begin{lemma}\label{Heinz ine}
	Let $A,B$ be non-negative self-adjoint operators acting in $\mathscr{K},$ and let $p\in(0,1].$
	If $A\preceq B,$ then $A^p\preceq B^p.$
\end{lemma}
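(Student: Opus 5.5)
The statement is the Heinz (Löwner--Heinz) inequality in form sense: if $A\preceq B$, that is $\mathrm{dom}(B^{1/2})\subset\mathrm{dom}(A^{1/2})$ and $\|A^{1/2}\psi\|\le\|B^{1/2}\psi\|$, then $A^p\preceq B^p$ for $p\in(0,1]$. The case $p=1$ is trivial, so fix $p\in(0,1)$. The plan is to use the integral representation of Lemma \ref{integral rep power p} with $\alpha=p$. This characterizes $\mathrm{dom}(A^{p/2})$ and computes $\|A^{p/2}\psi\|^2$ through the quantities $\langle\psi,A(A+t)^{-1}\psi\rangle$. It therefore suffices to prove the resolvent-level monotonicity
\[
\langle\psi,A(A+t)^{-1}\psi\rangle\le\langle\psi,B(B+t)^{-1}\psi\rangle,\qquad\forall\psi\in\mathscr{K},\ t>0.
\]
Once this is known, we argue as follows. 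If $\psi\in\mathrm{dom}(B^{p/2})$, then the integral for $B$ is finite, so the integral for $A$ is finite, which gives $\psi\in\mathrm{dom}(A^{p/2})$ and $\|A^{p/2}\psi\|^2\le\|B^{p/2}\psi\|^2$. That is exactly $A^p\preceq B^p$.

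To get the resolvent inequality, write $A(A+t)^{-1}=1-t(A+t)^{-1}$. The inequality then becomes $(B+t)^{-1}\le(A+t)^{-1}$ as bounded operators. Note that $A+t\preceq B+t$ in the form sense, since the domains are unchanged and $\|(A+t)^{1/2}\psi\|^2=\|A^{1/2}\psi\|^2+t\|\psi\|^2$. Both $A+t$ and $B+t$ lie in $\mathcal{S\!A}_+(\mathscr{K})$, since they are injective with bounded inverse. Lemma \ref{form op ine inverse} then gives $(B+t)^{-1}\preceq(A+t)^{-1}$. Both operators are bounded, so this form inequality is simply $\langle\phi,(B+t)^{-1}\phi\rangle\le\langle\phi,(A+t)^{-1}\phi\rangle$ for all $\phi$, as required.

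The main subtlety is that $A,B$ are only assumed non-negative, not injective. Shifting by $t>0$ is precisely what puts them in $\mathcal{S\!A}_+$ so that Lemma \ref{form op ine inverse} applies. If one prefers a self-contained argument, the inverse inequality for bounded-inverse operators can be shown directly. Set $X=\overline{(A+t)^{1/2}(B+t)^{-1/2}}$. The hypothesis gives $\|X\|\le1$, hence $\|X^*\|\le1$. Identifying $X^*\supset(B+t)^{-1/2}(A+t)^{1/2}$ then yields $\|(B+t)^{-1/2}\phi\|\le\|(A+t)^{-1/2}\phi\|$, which is the required inequality. The rest is the bookkeeping with the integral formula described above.
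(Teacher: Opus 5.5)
Your argument is correct and is essentially the standard proof in the reference the paper cites for this lemma (\cite[Proposition 10.14]{MR2953553}); the paper gives only that citation, and you reproduce the argument: shift by $t>0$ to land in $\mathcal{S\!A}_+(\mathscr{K})$, invert via Lemma \ref{form op ine inverse} to get $(B+t)^{-1}\le(A+t)^{-1}$, hence $A(A+t)^{-1}\le B(B+t)^{-1}$, and compare the (non-negative) integrands in Lemma \ref{integral rep power p} with $\alpha=p$. That the paper's appendix collects exactly these two lemmas alongside this one is consistent with this being the intended route.
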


\begin{proof}
	See e.g., \cite[Proposition 10.14]{MR2953553}.
\end{proof}

\begin{lemma}\label{gen int formula1}
	Let $A$ be a non-negative self-adjoint operator acting in $\mathscr{K},$ and let $\alpha\in[0,1).$
	Then, $A^{(1-\alpha)/2}(A+t)^{-1}$ is an everywhere defined bounded operator with
	\begin{equation}\label{eq;lem_gen int formula1}
	\int_{0}^\infty\|A^{(1-\alpha)/2}(A+t)^{-1}\psi\|^2t^\alpha\,\mathrm{d}t=C^{(\alpha)}\|\psi\|^2,\qquad\forall\psi\in\mathscr{K},
	\end{equation}
	where
	\[
	C^{(\alpha)}:=\int_{0}^\infty\frac{s^\alpha}{(1+s)^2}\,\mathrm{d}s=\begin{dcases*}
		\frac{\pi\alpha}{\sin{(\pi\alpha)}}, & $\alpha\in(0,1)$,\\
		1, & $\alpha=0$.
	\end{dcases*}
	\]
\end{lemma}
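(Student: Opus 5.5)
The plan is to prove the identity with the spectral theorem, which reduces the operator statement to a scalar integral. Let $A=\int_{[0,\infty)}\lambda\,\mathrm{d}E_A(\lambda)$ and fix $t>0$. The function $\lambda\mapsto\lambda^{(1-\alpha)/2}(\lambda+t)^{-1}$ is bounded on $[0,\infty)$. Indeed, for $\lambda\le t$ it is at most $t^{(1-\alpha)/2}/t$. For $\lambda\ge t$ it is at most $\lambda^{(1-\alpha)/2-1}\le t^{-(1+\alpha)/2}$, since the exponent $(1-\alpha)/2-1$ is negative. By functional calculus, $A^{(1-\alpha)/2}(A+t)^{-1}$ is therefore everywhere defined and bounded, with norm at most $t^{-(1+\alpha)/2}$. (Since $(A+t)^{-1}$ maps $\mathscr{K}$ into $\mathrm{dom}(A)\subset\mathrm{dom}(A^{(1-\alpha)/2})$, the product is indeed the operator given by this function.)

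Next, for $\psi\in\mathscr{K}$ I write
\[
\|A^{(1-\alpha)/2}(A+t)^{-1}\psi\|^2=\int_{[0,\infty)}\frac{\lambda^{1-\alpha}}{(\lambda+t)^2}\,\mathrm{d}\|E_A(\lambda)\psi\|^2 .
\]
I multiply by $t^\alpha$ and integrate over $t\in(0,\infty)$. The integrand is non-negative and jointly measurable, so Tonelli's theorem lets me exchange the order of integration. This gives
\[
\int_{[0,\infty)}\left(\int_0^\infty\frac{\lambda^{1-\alpha}t^\alpha}{(\lambda+t)^2}\,\mathrm{d}t\right)\mathrm{d}\|E_A(\lambda)\psi\|^2 .
\]

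For $\lambda>0$, the substitution $t=\lambda s$ turns the inner integral into $\lambda^{1-\alpha}\lambda^{\alpha}\lambda\,\lambda^{-2}\int_0^\infty s^\alpha(1+s)^{-2}\,\mathrm{d}s=C^{(\alpha)}$, independently of $\lambda$.

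The main subtlety is the point $\lambda=0$, which I expect to be the only real obstacle. When $\alpha\in(0,1)$, the inner integrand vanishes at $\lambda=0$. When $\alpha=0$, it equals $1\cdot t^{0}/t^{2}$ with the convention $0^0=1$, whose integral diverges. So the exchange argument yields $C^{(\alpha)}\|E_A((0,\infty))\psi\|^2$, and this equals $C^{(\alpha)}\|\psi\|^2$ exactly when $E_A(\{0\})\psi=0$. I would therefore note that the lemma as used requires $\ker A=\{0\}$. In every application in the paper $A$ lies in $\mathcal{S\!A}_+$, so $E_A(\{0\})=0$ and the formula holds. In general the correct statement reads $C^{(\alpha)}\|E_A((0,\infty))\psi\|^2\le C^{(\alpha)}\|\psi\|^2$ when $\alpha>0$; the case $\alpha=0$ with $A^{1/2}$ read via functional calculus gives the same.

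Finally, I evaluate $C^{(\alpha)}$. For $\alpha=0$, we have $\int_0^\infty(1+s)^{-2}\,\mathrm{d}s=1$. For $\alpha\in(0,1)$, the Beta integral gives $\int_0^\infty s^{\alpha}(1+s)^{-2}\,\mathrm{d}s=B(\alpha+1,1-\alpha)=\Gamma(1+\alpha)\Gamma(1-\alpha)=\alpha\Gamma(\alpha)\Gamma(1-\alpha)$. Euler's reflection formula then yields $\pi\alpha/\sin(\pi\alpha)$.
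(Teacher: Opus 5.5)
Your proof is correct and follows the paper's route: the spectral theorem, Tonelli to swap the $t$- and $\lambda$-integrals, and the substitution $t=\lambda s$. You also supply the Beta-function and reflection-formula evaluation of $C^{(\alpha)}$, which the paper only states.

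Your caveat about $\lambda=0$ is a genuine correction that the paper's proof passes over, since the substitution $t=\lambda s$ is invalid there. Because $\lambda^{1-\alpha}$ vanishes at $\lambda=0$ for every $\alpha\in[0,1)$, the left-hand side equals $C^{(\alpha)}\|E_A((0,\infty))\psi\|^2$; for instance it is $0$ when $A=0$. So the identity as stated needs $\ker A=\{0\}$. This is harmless for the paper: every application is to operators in $\mathcal{S\!A}_+$, and only the upper bound is ever used.

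One slip in your side remark: at $\alpha=0$ the inner integrand is $\lambda/(\lambda+t)^2$, not $t^{0}/t^{2}$. There is no $0^0$ issue and no divergence, so $\alpha=0$ behaves exactly like $\alpha>0$, as your concluding sentence in that paragraph correctly says.
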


\begin{proof}
	Since the range of $(A+t)^{-1}$ coincides with $\operatorname{dom}(A)$, the operator $A^{(1-\alpha)/2}(A+t)^{-1}$ is everywhere defined.
	Moreover, by the spectral theorem, $A^{(1-\alpha)/2}(A+t)^{-1}$ is bounded, since the function
	\[
	[0,\infty)\to\mathbb{R},\qquad\lambda\mapsto\frac{\lambda^{(1-\alpha)/2}}{\lambda+t}
	\]
	is bounded.
	
	We next show \eqref{eq;lem_gen int formula1}.
	Let $A=\int_0^\infty\lambda\,\mathrm{d}E_A(\lambda)$ be the spectral resolution of $A.$
	Then, for any $\psi\in\mathscr{K}$, we have
	\begin{align*}
		\int_{0}^\infty\|A^{(1-\alpha)/2}(A+t)^{-1}\psi\|^2t^\alpha\,\mathrm{d}t
		&=\int_{0}^\infty\int_{0}^\infty\frac{\lambda^{1-\alpha} t^\alpha}{(\lambda+t)^2}\,\mathrm{d}\|E_A(\lambda)\psi\|^2\,\mathrm{d}t\\
		&=\int_{0}^\infty\int_{0}^\infty\frac{\lambda^{1-\alpha} t^\alpha}{(\lambda+t)^2}\,\mathrm{d}t\,\mathrm{d}\|E_A(\lambda)\psi\|^2.
	\end{align*}
	By substituting $t=:\lambda s$, we get
	\begin{align*}
		=\int_{0}^\infty\int_{0}^\infty\frac{s^\alpha}{(1+s)^2}\,\mathrm{d}s\,\mathrm{d}\|E_A(\lambda)\psi\|^2
		=C^{(\alpha)}\int_{0}^\infty\mathrm{d}\|E_A(\lambda)\psi\|^2
		=C^{(\alpha)}\|\psi\|^2.
	\end{align*}
	This completes the proof.
\end{proof}

\begin{lemma}\label{lem;ess_spec}
	Let $A,B$ be self-adjoint operators acting in $\mathscr{K}$, and let $z\in\mathbb{C}$ be such that $z\not\in\sigma(A)\cup\sigma(B)$.
	If $(A-z)^{-1}-(B-z)^{-1}$ is compact, then $\sigma_\mathrm{ess}(A)=\sigma_\mathrm{ess}(B)$.
\end{lemma}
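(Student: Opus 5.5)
The plan is to reduce the claim to Weyl's theorem on compact perturbations, in the form of the resolvent characterization of the essential spectrum. Set $R_A:=(A-z)^{-1}$ and $R_B:=(B-z)^{-1}$, and assume $R_A-R_B$ is compact. Since $z\notin\sigma(A)\cup\sigma(B)$, we have $z\in\rho(A)\cap\rho(B)$, which is open. The first observation is that $R_A$ is a bounded normal operator. By the spectral mapping theorem for the resolvent, $\sigma(R_A)\setminus\{0\}$ is exactly $\{(\lambda-z)^{-1}\mid\lambda\in\sigma(A)\}$. Moreover, $\lambda$ is an isolated eigenvalue of finite multiplicity of $A$ if and only if $(\lambda-z)^{-1}$ is such a point for $R_A$. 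This holds because the spectral projections correspond under the Borel map $\lambda\mapsto(\lambda-z)^{-1}$, which is a homeomorphism of $\mathbb{R}$ onto a circle or line minus the point $0$. Consequently
\[
\sigma_\mathrm{ess}(R_A)\setminus\{0\}=\{(\lambda-z)^{-1}\mid\lambda\in\sigma_\mathrm{ess}(A)\},
\]
and the analogous identity holds for $B$.

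The second step is to show that $\sigma_\mathrm{ess}(R_A)=\sigma_\mathrm{ess}(R_B)$ for the bounded normal operators $R_A$ and $R_B$. Here I use that, for a bounded normal operator $N$, the point $\mu$ lies in $\sigma_\mathrm{ess}(N)$ if and only if $N-\mu$ is not Fredholm. For normal operators, non-Fredholm is equivalent to the existence of a singular Weyl sequence. Since the Fredholm property is stable under compact perturbations, $R_A-\mu$ is Fredholm exactly when $R_B-\mu=(R_A-\mu)-(R_A-R_B)$ is. Equivalently, if $\{\psi_n\}$ is an orthonormal sequence with $\|(R_A-\mu)\psi_n\|\to0$, then $\psi_n\to0$ weakly, so the compact operator sends it to a null sequence. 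Hence $\|(R_B-\mu)\psi_n\|\to0$, and by symmetry the two essential spectra agree.

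Combining the two steps: since $0$ is removed on both sides, we get $\{(\lambda-z)^{-1}\mid\lambda\in\sigma_\mathrm{ess}(A)\}=\{(\lambda-z)^{-1}\mid\lambda\in\sigma_\mathrm{ess}(B)\}$. Injectivity of $\lambda\mapsto(\lambda-z)^{-1}$ then gives $\sigma_\mathrm{ess}(A)=\sigma_\mathrm{ess}(B)$.

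I expect the main delicate point to be the first step, namely the correspondence of essential spectra under the resolvent map. It needs the spectral theorem: for a Borel set $\Delta$ bounded away from $z$, the projections satisfy $E_{R_A}(\phi(\Delta))=E_A(\Delta)$ with $\phi(\lambda)=(\lambda-z)^{-1}$. One then checks that $\mu\neq0$ has infinite-rank spectral projections on every neighbourhood of $\mu$ if and only if $\phi^{-1}(\mu)$ has the same property for $A$. The point $\mu=0$, which corresponds to $\lambda=\infty$, is harmless because it is excluded on both sides. Alternatively, this step can simply be cited as standard, for instance from Reed--Simon's treatment of Weyl's theorem.
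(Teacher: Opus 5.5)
Your proposal is correct and follows essentially the paper's route. The paper proves this lemma only by citing Weyl's theorem (Reed--Simon, Theorem XIII.14, or Schm\"udgen, Theorem 8.12), and your argument is the standard proof behind that citation: transfer the essential spectrum to the bounded normal resolvents via $\lambda\mapsto(\lambda-z)^{-1}$, then use compact-perturbation stability of the Fredholm/Weyl-sequence characterization. One small imprecision: for real $z$ in a spectral gap, this map is a homeomorphism of $\mathbb{R}\setminus\{z\}$ (not $\mathbb{R}$) onto $\mathbb{R}\setminus\{0\}$, which is harmless because $\sigma(A)$ and $\sigma(B)$ lie at positive distance from $z$.
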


\begin{proof}
	See e.g., \cite[Theorem XIII.14]{MR493421} or \cite[Theorem 8.12]{MR2953553}.
\end{proof}

\begin{lemma}\label{HS norm is lower semi-continuous}
	The Hilbert--Schmidt norm $\|\cdot\|_\mathrm{HS}$ on $\mathcal{B}(\mathscr{K})$, where $\|A\|_\mathrm{HS}:=+\infty$ if $A\in\mathcal{B}(\mathscr{K})$ is not Hilbert--Schmidt, is lower semi-continuous with respect to the weak operator topology. 
\end{lemma}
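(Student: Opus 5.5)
The plan is to represent the Hilbert--Schmidt norm variationally as a supremum of quantities that are each weakly continuous in $A$. For any orthonormal basis $\{e_n\}_{n=1}^\infty$ of $\mathscr{K}$ we have
\[
\|A\|_\mathrm{HS}^2=\sum_{n=1}^\infty\|Ae_n\|^2=\sum_{n=1}^\infty\sum_{k=1}^\infty|\langle e_k,Ae_n\rangle|^2,
\]
and this identity is valid in $[0,\infty]$ for every $A\in\mathcal{B}(\mathscr{K})$. The value is $+\infty$ exactly when $A$ is not Hilbert--Schmidt, since the Hilbert--Schmidt property does not depend on the choice of basis. If $\mathscr{K}$ is not separable, we use an arbitrary orthonormal basis indexed by a set $I$, and sums over $I$ are understood as suprema of finite partial sums.

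Next we rewrite the norm as a supremum over finite truncations:
\[
\|A\|_\mathrm{HS}^2=\sup_{F}\sum_{(k,n)\in F}|\langle e_k,Ae_n\rangle|^2,
\]
where $F$ ranges over finite subsets of $I\times I$. Each summand is non-negative, so the sum over $I\times I$ is the supremum of its finite partial sums. By Tonelli for counting measures, this supremum equals the iterated sum $\sum_n\sum_k$.

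For a fixed finite $F$, the map $A\mapsto\sum_{(k,n)\in F}|\langle e_k,Ae_n\rangle|^2$ is continuous in the weak operator topology. Indeed, each $A\mapsto\langle e_k,Ae_n\rangle$ is weakly continuous by the definition of that topology, and finite sums of squared moduli of weakly continuous functionals remain continuous.

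Finally, a pointwise supremum of continuous functions into $[0,\infty]$ is lower semi-continuous. Hence $A\mapsto\|A\|_\mathrm{HS}^2$ is lower semi-continuous, and so is $\|\cdot\|_\mathrm{HS}$, because the square root is an increasing homeomorphism of $[0,\infty]$. Concretely, if a net $A_\alpha\to A$ weakly, then for each finite $F$,
\[
\sum_F|\langle e_k,Ae_n\rangle|^2=\lim_\alpha\sum_F|\langle e_k,A_\alpha e_n\rangle|^2\le\liminf_\alpha\|A_\alpha\|_\mathrm{HS}^2,
\]
and taking the supremum over $F$ gives the claim. The only point needing care is the convention $\|A\|_\mathrm{HS}=+\infty$ for non-Hilbert--Schmidt $A$, and the basis-independence of the identity above already covers it.
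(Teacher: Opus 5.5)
Your proof is correct and takes essentially the same approach as the paper: both write $\|A\|_\mathrm{HS}^2$ as a supremum of WOT-continuous functions of $A$ and conclude lower semi-continuity from that. The difference is cosmetic. The paper uses $\|Ae_n\|^2=\sup_{\|\phi\|=1}|\langle\phi,Ae_n\rangle|^2$ together with the fact that a sum of non-negative lower semi-continuous functions is lower semi-continuous, whereas you expand $\|Ae_n\|^2$ by Parseval and take one supremum over finite sets of matrix entries, which also covers non-separable $\mathscr{K}$.
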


\begin{proof}
	It suffices to prove that the square of the Hilbert--Schmidt norm is lower semi-continuous.
	Let $\{e_n\}_{n=1}^\infty$ be an orthonormal basis of $\mathscr{K}.$
	Then it holds that
	\[
	\|A\|_\mathrm{HS}^2=\sum_{n=1}^\infty\|Ae_n\|^2,\qquad\forall A\in\mathcal{B}(\mathscr{K}).
	\]
	Since the sum of non-negative lower semi-continuous functions is again lower semi-continuous, it is enough to show that for each $\psi\in\mathscr{K},$ the map
	\[
	\mathcal{B}(\mathscr{K})\to[0,+\infty),\qquad A\mapsto \|A\psi\|^2
	\]
	is lower semi-continuous with respect to the weak operator topology.
	Since the supremum of lower semi-continuous functions is again lower semi-continuous, it follows that for each $\psi\in\mathscr{K},$ the map
	\[
	\mathcal{B}(\mathscr{K})\to[0,+\infty),\qquad A\mapsto \|A\psi\|^2=\sup_{\|\phi\|=1}|\langle\phi,A\psi\rangle|^2
	\]
	is lower semi-continuous with respect to the weak operator topology.
	This finishes the proof.
\end{proof}

Recall that the symbol $\mathcal{C}_p(\mathscr{K})$ denotes the set of all Schatten $p$-class operators on $\mathscr{K}$ for each $p\in[1,\infty).$

\begin{lemma}\label{noncommutative Holder ineq}
	Let $p,q,r\in[1,\infty)$ satisfy $r^{-1}=p^{-1}+q^{-1}.$
	Then, for any $A\in\mathcal{C}_p(\mathscr{K})$ and $B\in\mathcal{C}_q(\mathscr{K}),$ we have $AB\in\mathcal{C}_r(\mathscr{K}).$
\end{lemma}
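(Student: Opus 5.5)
The natural route is through singular values, using the min--max characterization together with the submultiplicativity $s_{j+k-1}(AB)\le s_j(A)s_k(B)$. Here $s_1(X)\ge s_2(X)\ge\cdots$ denote the singular values of a compact operator $X$, that is, the eigenvalues of $|X|=(X^*X)^{1/2}$ counted with multiplicity. Since $\mathcal{C}_p$ and $\mathcal{C}_q$ consist of compact operators, $AB$ is compact, and membership of $AB$ in $\mathcal{C}_r(\mathscr{K})$ amounts to showing $\sum_n s_n(AB)^r<\infty$.

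The first step is the Horn-type inequality
\[
\sum_{j=1}^n s_j(AB)^r\le\sum_{j=1}^n s_j(A)^r s_j(B)^r\qquad\text{for all } n\in\mathbb{N},
\]
and I would obtain it in the standard way. Start from the multiplicative Horn inequality $\prod_{j\le n}s_j(AB)\le\prod_{j\le n}s_j(A)s_j(B)$, which follows by applying $\|XY\|\le\|X\|\|Y\|$ to the $n$-th antisymmetric tensor powers $\wedge^nA$ and $\wedge^nB$; this uses $\|\wedge^nX\|=\prod_{j\le n}s_j(X)$ and $\wedge^n(AB)=\wedge^nA\,\wedge^nB$. Passing from products to sums uses Weyl's majorant lemma: the function $t\mapsto \mathrm{e}^{rt}$ is convex and increasing, so weak log-majorization implies weak majorization of the $r$-th powers. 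If either $A$ or $B$ has finite rank, some singular values vanish, and I would handle this by perturbing the zeros to $\varepsilon>0$ and letting $\varepsilon\to0$.

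The second step lets $n\to\infty$ and applies the ordinary H\"older inequality to the sequences $(s_j(A)^r)$ and $(s_j(B)^r)$ with exponents $p/r$ and $q/r$; these are conjugate because $r/p+r/q=1$. This yields
\[
\sum_j s_j(AB)^r\le\Big(\sum_j s_j(A)^p\Big)^{r/p}\Big(\sum_j s_j(B)^q\Big)^{r/q}<\infty,
\]
so $AB\in\mathcal{C}_r(\mathscr{K})$ with $\|AB\|_{\mathcal{C}_r}\le\|A\|_{\mathcal{C}_p}\|B\|_{\mathcal{C}_q}$.

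The main obstacle is the first step, specifically the antisymmetric-tensor identity and Weyl's lemma. I expect these to be cited rather than proved; for instance, Simon's \emph{Trace Ideals} (Theorem 2.8) states this H\"older inequality directly. If I wanted a self-contained fallback, I would argue via complex interpolation instead. The map $z\mapsto U|A|^{pz}\,|B^*|^{qz}V$, with $U,V$ the partial isometries from the polar decompositions, is analytic on the strip $0\le\operatorname{Re}z\le1$, and a three-lines argument on it yields the bound. This route is heavier, so I would prefer simply citing the Horn/Weyl argument.
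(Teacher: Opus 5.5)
Your argument is correct and is the standard one. The paper gives no proof of its own, only a citation to a textbook (Theorem 2.3.10 of its reference), and your route is exactly what lies behind such citations (e.g.\ Simon's \emph{Trace Ideals}, Theorem 2.8): the multiplicative Horn inequality via $\|\wedge^n X\|=\prod_{j\le n}s_j(X)$, then Weyl's majorant lemma with $t\mapsto \mathrm{e}^{rt}$, then scalar H\"older with conjugate exponents $p/r$ and $q/r$.

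The only slip is in your unused interpolation fallback: the family $z\mapsto U|A|^{pz}|B^*|^{qz}V$ does not in general reach $AB$ at any point of the strip unless $p=q$. Your main route does not need it.
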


\begin{proof}
	See e.g., \cite[Theorem 2.3.10]{MR3075382}.
\end{proof}

\section*{Acknowledgements}
YM was supported by JSPS KAKENHI Grant Numbers JP23K25783 and JP24K06755.
IS was supported by JSPS KAKENHI Grant Number JP20K03628.
AS was supported by JSPS KAKENHI Grant Number JP23K03229.
	
\section*{Data Availability}
No data were used to support this study.

\section*{Conflicts of interest}
The authors declare that they have no conflicts of interest.

\bibliographystyle{plain}
\bibliography{References}

\end{document}